\setlist{nolistsep}
\setlist{nosep}
\newcommand{\parent}{\operatorname{\mathtt{parent}}}
\newcommand{\deff}{\coloneqq}
\newcommand{\interV}[1]{V_{#1}^{\cap}} 
\newcommand{\interR}[1]{R_{#1}^{\cap}} 
\newcommand{\interB}[1]{B_{#1}^{\cap}} 
\newcommand{\belowV}[1]{V_{#1}^{\subseteq}} 
\newcommand{\belowR}[1]{R_{#1}^{\subseteq}} 
\newcommand{\belowB}[1]{B_{#1}^{\subseteq}} 
\newcommand{\underV}[1]{V_{#1}^{\subseteq\dag}} 
\newcommand{\underR}[1]{R_{#1}^{\subseteq\dag}} 
\newcommand{\underB}[1]{B_{#1}^{\subseteq\dag}} 
\newcommand{\containV}[1]{V_{#1}^{\in}} 
\newcommand{\containR}[1]{R_{#1}^{\in}} 
\newcommand{\containB}[1]{B_{#1}^{\in}} 
\DeclarePairedDelimiter\abs{\lvert}{\rvert}
\newcommand{\at}{\texttt{at}}
\newcommand{\lf}{\texttt{lf}}
\newcommand{\calC}{\mathcal{C}}
\newcommand{\calF}{\mathcal{F}}
\newcommand{\calI}{\mathcal I}
\newcommand{\calM}{\ensuremath{{\mathcal M}}}
\newcommand{\calO}{\ensuremath{{\mathcal O}}}
\newcommand{\OO}{\mathcal{O}}
\newcommand{\Oh}{\mathcal{O}} 
\newcommand{\OPT}{\textsc{OPT}}
\newcommand{\calP}{\mathcal{P}}
\newcommand{\calT}{\mathcal{T}}
\newcommand{\WoneH}{\textup{\textsf{W[1]}}-hard\xspace}
\newcommand{\ETH}{\textsf{ETH}\xspace}
\newcommand{\hard}{{hard}}
\newcommand{\paraNP}{\textsf{paraNP}}
\newcommand{\yes}{\textsc{Yes}}
\newcommand{\no}{\textsc{No}}
\newtheorem{theorem}{Theorem}[section]
\newtheorem{lemma}[theorem]{Lemma}
\newtheorem{claim}[theorem]{Claim}
\newtheorem{observation}[theorem]{Observation}
\newtheorem{reduction rule}[theorem]{Reduction Rule}
\Crefname{reduction rule}{Reduction~Rule}{Reduction~Rules}
\newtheorem{greedy select}[theorem]{Greedy~Select}\Crefname{greedy select}{Greedy~Select}{Greedy~Select}
\newtheorem{marking-scheme}[theorem]{Marking Scheme}
\newtheorem{definition}[theorem]{Definition}
\newenvironment{claimproof}{\begin{proof}\renewcommand{\qedsymbol}{\claimqed}}{\end{proof}\renewcommand{\qedsymbol}{\plainqed}}
\let\plainqed\qedsymbol
\newcommand{\DS}{\textsc{DomSet}\xspace}
\newcommand{\connDS}{\textsc{connDomSet}\xspace}
\newcommand{\RBDS}{\textsc{Red-Blue-DomSet}\xspace}
\newcommand{\RestRBDS}{\textsc{Rest-Red-Blue-DomSet}\xspace}
\newcommand{\conRBDS}{\textsc{Connected Red-Blue-DomSet}\xspace}
\newcommand{\SetCov}{\textsc{Set~Cover}\xspace}
\newcommand{\HitSet}{\textsc{Hitting~Set}\xspace}
\newcommand{\SteinTree}{\textsc{Steiner~Tree}\xspace}
\newcommand{\MCundel}{\textsc{MultiCut with UnDel Term}\xspace}
\newcommand{\defproblem}[3]{
  \vspace{1mm}
\noindent\fbox{
  \begin{minipage}{0.96\textwidth}
  \begin{tabular*}{\textwidth}{@{\extracolsep{\fill}}lr} #1 \\ \end{tabular*}
  {\bf{Input:}} #2  \newline
  {\bf{Question:}} #3
  \end{minipage}
  }
  \vspace{1mm}
}
\newcommand{\defproblemques}[3]{
  \vspace{1mm}
\noindent\fbox{
  \begin{minipage}{0.96\textwidth}
  \begin{tabular*}{\textwidth}{@{\extracolsep{\fill}}lr} #1 \\ \end{tabular*}
  {\bf{Input:}} #2  \\
  {\bf{Question:}} #3
  \end{minipage}
  }
  \vspace{1mm}
}
\newcommand{\wt}{\texttt{wt}}
\newcommand{\mwcfull}{\textup{\textsc{Multiway Cut with Undeletable Terminals}}}
\newcommand{\mwc}{\textsc{MWC}}
\newcommand{\mwcset}{multiway-cut}
\newcommand{\stcut}{\textsc{$(s,t)$-Cut}}
\newcommand{\connection}{connection}
\newcommand{\ver}{\textnormal{\texttt{ver}}}
\newcommand{\mld}{\mathcal{M}}
\newcommand{\rr}{\textsf{r}}
\newcommand{\node}{\gamma}
\newcommand{\topnode}{\mathtt{top}}
\newcommand{\botnode}{\mathtt{bot}}
\newcommand{\sink}{sink}
\newcommand{\Pterm}{P}
\newcommand{\truncated}{truncated}
\newcommand{\Pairs}{\Pterm}
\begin{document}

\title{Domination and Cut Problems on Chordal Graphs with Bounded Leafage}
\date{}

\author[1]{Esther Galby}
\author[1]{D\'aniel Marx}
\author[1]{Philipp Schepper}
\author[2]{Roohani Sharma}
\author[1]{Prafullkumar Tale}

\affil[1]{CISPA Helmholtz Center for Information Security, Germany}
\affil[2] {Max Planck Institute for Informatics, SIC, Saarbr\"ucken, Germany}

\maketitle

\begin{abstract}
The leafage of a chordal graph $G$ is the minimum integer $\ell$ such that $G$ can be realized as an intersection graph of subtrees of a tree with $\ell$ leaves.
We consider structural parameterization by the leafage of classical domination and cut problems on chordal graphs.
Fomin, Golovach, and Raymond~[ESA~$2018$, Algorithmica~$2020$] proved, among other things, that \textsc{Dominating Set} on chordal graphs admits an algorithm running in time $2^{\mathcal{O}(\ell^2)} \cdot n^{\mathcal{O}(1)}$. 
We present a conceptually much simpler algorithm that runs in time $2^{\mathcal{O}(\ell)} \cdot n^{\mathcal{O}(1)}$.
We extend our approach to obtain similar results for \textsc{Connected Dominating Set} and \textsc{Steiner Tree}.
We then consider the two classical cut problems \textsc{MultiCut with Undeletable Terminals} and \textsc{Multiway Cut with Undeletable Terminals}.
We prove that the former is \textsf{W}[1]-hard when parameterized by the leafage and complement this result by presenting a simple $n^{\mathcal{O}(\ell)}$-time algorithm.
To our surprise, we find that \textsc{Multiway Cut with Undeletable Terminals} on chordal graphs can be solved, in contrast,  in $n^{\calO(1)}$-time.
\end{abstract}

\newcommand{\ds}{{\sc Dominating Set}}
\newcommand{\mc}{{\sc Multicut}}
\newcommand{\rbds}{{\sc Red-Blue Dominating Set}}
\newcommand{\connds}{{\sc Connected Dominating Set}}
\newcommand{\rbconnds}{{\sc Red-Blue Connected Dominating Set}}

\newcommand{\dssmall}{{\sc DS}}
\newcommand{\rbdssmall}{{\sc RB-DS}}
\newcommand{\conndssmall}{{\sc Conn-DS}}
\newcommand{\rbconndssmall}{{\sc RB-Conn-DS}}

\newcommand{\polyn}{n^{\OO(1)}}
\newcommand{\SETH}{SETH}

\newcommand{\timefptds}{2^{\OO(\ell)} \cdot \polyn}
\newcommand{\timexpmc}{f(\ell) \cdot n^{\OO(\ell)}}

\section{Introduction}
The intersection graph of a family $\calF$ of nonempty sets is the graph whose vertices are the elements of $\calF$ with two vertices being adjacent if and only if their corresponding sets intersect.
The most natural and famous example of such intersection graphs are \emph{interval graphs}
where $\calF$ is a collection of subpaths of a path.
Due to their applicability in scheduling, interval graphs have received a considerable attention in the realm of algorithmic graph theory.
One useful characterization of an interval graph is that its maximal cliques can be linearly ordered such that for every vertex, the maximal cliques containing that vertex occur consecutively~\cite{gilmore_hoffman_1964}.
This property proves very useful for the design of polynomial-time dynamic programming based or greedy algorithms on interval graphs.

Consider the generalization where $\calF$ is a collection of subtrees of a tree instead of subpaths of a path.
In this case, the corresponding class of intersection graphs is exactly that of \emph{chordal graphs} \cite{walter1972representations,Gavril1974TheIG,DBLP:journals/dm/Buneman74}.
Recall that a graph is chordal if every cycle of length at least $4$ has a chord.
Often, the algorithms of the types mentioned in the previous paragraph fail to generalize to this superclass as witnessed by the following problems that admit polynomial-time algorithms on interval graphs but are \NP-complete on chordal graphs:
{\sc Dominating Set}~\cite{DBLP:journals/siamcomp/Chang98,DBLP:journals/ipl/Bertossi84}, 
{\sc Connected Dominating Set}~\cite{DBLP:conf/wads/BalakrishnanRR93,DBLP:journals/networks/WhiteFP85},
{\sc Steiner Tree}~\cite{DBLP:conf/wads/BalakrishnanRR93,DBLP:journals/networks/WhiteFP85}, 
{\sc Multicut with Undeletable Terminals}~\cite{DBLP:journals/eor/GuoHKNU08,DBLP:journals/dam/Papadopoulos12}, 
{\sc Subset Feedback Vertex Set (Subset FVS)}~\cite{DBLP:journals/dam/PapadopoulosT19,DBLP:journals/algorithmica/FominHKPV14},
{\sc Longest Cycle}~\cite{DBLP:journals/ipl/Keil85, golumbic2004algorithmic}\footnote{See Exercise $2$ in Chapter~$6$ in \cite{golumbic2004algorithmic}.},
{\sc Longest Path}~\cite{DBLP:journals/algorithmica/IoannidouMN11}, 
{\sc Component Order Connectivity}~\cite{DBLP:journals/algorithmica/DrangeDH16}, 
{\sc $s$-Club Contraction}~\cite{DBLP:journals/siamdm/GolovachHHP15}, 
{\sc Independent Set Reconfiguration}~\cite{DBLP:journals/mst/BelmonteKLMOS21}, 
{\sc Bandwidth}~\cite{DBLP:journals/iandc/Kratsch87}, 
{\sc Cluster Vertex Deletion}~\cite{DBLP:journals/algorithmica/KonstantinidisP21}.
Also, \textsc{Graph Isomorphism} on chordal graphs is polynomial-time equivalent to the problem on general graphs whereas it admits a linear-time algorithm on interval graphs~\cite{DBLP:journals/jacm/LuekerB79}.

The problems above remain hard even on \emph{split graphs}, another well-studied subclass of chordal graphs.
A graph is a split graph if its vertex set can be partitioned into a clique and an independent set.
The collection of split graphs is a (proper) subset of the class of intersection graphs where $\calF$ is a collection of substars of a star. 
As interval graphs are intersection graphs of subpaths of a path (a tree with two leaves) and split graphs are intersection graphs of substars of a star (a tree with arbitrary number of leaves), a natural question to consider is what happens to these problems on subclasses of chordal graphs that are intersection graphs of subtrees of a tree with a bounded number of leaves.
Motivated by such questions, we consider the notion of \emph{leafage} introduced by Lin et al.~\cite{DBLP:journals/dmgt/LinMW98}:
the leafage of a chordal graph $G$ is the minimum integer $\ell$ such that $G$ can be realized as an intersection graph of a collection $\calF$ of subtrees of a tree that has $\ell$ leaves.
Note that the leafage of interval graphs is at most $2$ while split graphs have unbounded leafage.
Thus the leafage measures, in some sense, how close a chordal graph is to an interval graph.
Alternately, an \FPT\ or \XP\ algorithm parameterized by the leafage can be seen as a generalization of the algorithm on interval graphs.

\paragraph{Related Work.} Habib and Stacho~\cite{DBLP:conf/esa/HabibS09} showed that we can compute the leafage of a connected chordal graph in polynomial time. 
Their algorithm also constructs a corresponding \emph{representation tree}\footnote{We present formal definitions of the terms used in this section in Section~\ref{sec:prelim}.} $T$ with the minimum number of leaves.
In recent years, researchers have studied the structural parameterization of various graph problems on chordal graphs parameterized by the leafage.
Fomin et al.~\cite{DBLP:journals/algorithmica/FominGR20} and Arvind et al.~\cite{DBLP:journals/corr/ArvindNPZ22} proved, respectively, that the \textsc{Dominating Set} and \textsc{Graph Isomorphism} problems on chordal graphs are \FPT\ parameterized by the leafage.
Barnetson et al.~\cite{DBLP:journals/networks/BarnetsonBEHPR21} and Papadopoulos and Tzimas \cite{DBLP:conf/iwoca/PapadopoulosT22} presented \XP-algorithms running in time $n^{\calO(\ell)}$ for \textsc{Fire Break} and \textsc{Subset FVS} on chordal graphs, respectively.
Papadopoulos and Tzimas \cite{DBLP:conf/iwoca/PapadopoulosT22} also proved that \textsc{Subset FVS} is \W[1]-\hard\ when parameterized by the leafage.
Hochst{\"{a}}ttler et al~\cite{DBLP:journals/dam/HochstattlerHMP21} showed that we can compute the neighborhood polynomial of a chordal graph in $n^{\calO(\ell)}$-time.

It is known that the size of  \emph{asteroidal set} in a chordal graph is upper bounded by its leafage~\cite{DBLP:journals/dmgt/LinMW98}. 
See~\cite{DBLP:journals/ejc/HabibS12,DBLP:journals/dam/Alcon14} for the relationship between leafage and other structural properties of chordal graphs.
Kratsch and Stewart~\cite{DBLP:journals/siamdm/KratschS02} proved that we can effectively $2\ell$-approximate bandwidth of chordal graphs of leafage $\ell$.
Chaplick and Stacho~\cite{DBLP:journals/dam/ChaplickS14} generalized the notion of leafage to \emph{vertex leafage} and proved that, unlike leafage, it is hard to determine the optimal vertex leafage of a given chordal graph.
Figueiredo et al.~\cite{DBLP:conf/walcom/FigueiredoLMS22} proved that \textsc{Dominating Set}, \textsc{Connected Dominating Set} and \textsc{Steiner Tree} are \FPT\ on chordal graphs when parameterized by the size of the solution plus the vertex leafage, provided that a tree representation with optimal vertex leafage is given as part of the input.

\paragraph{Our Results.}
We consider well-studied domination and cut problems on chordal graphs.
As our first result, we prove that \textsc{Dominating Set} on chordal graphs of leafage at most $\ell$ admits an algorithm running in time $2^{\OO(\ell)} \cdot \polyn$.
This improves upon the existing algorithm by Fomin et al.~\cite[Theorem~$9$]{DBLP:journals/algorithmica/FominGR20} which runs in time $2^{\calO(\ell^2)} \cdot \polyn$.
Despite being significantly simpler than the algorithm in \cite{DBLP:journals/algorithmica/FominGR20}, our algorithm in fact solves the \textsc{Red-Blue Dominating Set} problem, a well-known generalization of \textsc{Dominating Set}.
In this generalized version, an input is a graph $G$ with a partition $(R, B)$ of its vertex set and an integer $k$, and the objective is to find a subset $D$ of $R$ that dominates every vertex in $B$, i.e., $B \subseteq N(D)$.
We further use this algorithm to solve other related domination problems.
\begin{restatable}{theorem}{thm-ds-fpt}
\label{thm:ds-fpt}
{\sc Dominating Set}, {\sc Connected Dominating Set}, and {\sc Steiner Tree} can be solved in $\timefptds$ on chordal graphs of leafage at most $\ell$.
\end{restatable}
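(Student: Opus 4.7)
The plan is to solve all three problems via a uniform framework: design a dynamic program on the representation tree $T$ (which we compute with $\ell$ leaves in polynomial time by the Habib--Stacho algorithm) and bound the DP state by $2^{\mathcal{O}(\ell)}$. The core case is Red-Blue Dominating Set, to which standard Dominating Set reduces directly; Connected Dominating Set and Steiner Tree are handled by augmenting the same DP with a small amount of connectivity information. Root $T$ at an arbitrary leaf so every internal node has a unique parent, and for each $v \in V(G)$ write $T_v \subseteq T$ for its subtree and $\topnode(v)$ for its node closest to the root.

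The heart of the argument is compressing the DP state. The naive interface at an edge $e$ of $T$ is the set $X_e$ of vertices whose subtrees cross $e$, which can have size $\Theta(n)$. Two structural observations bring this down. First, $X_e$ is a clique of $G$, so a single chosen red from $X_e$ already dominates every blue of $X_e$; the only further information one needs about the reds chosen from $X_e$ is how far upward their subtrees extend---equivalently, which edges above $e$ they cross. Second, since $T$ has only $\ell$ leaves, its branching skeleton contains $\mathcal{O}(\ell)$ edges, so the upward extent of any subtree from $e$ is captured by a bit-vector of length $\mathcal{O}(\ell)$. Consequently the relevant DP state at $e$ is a $2^{\mathcal{O}(\ell)}$-sized object recording, for each direction of the skeleton above $e$, whether a chosen red of $X_e$ keeps extending in that direction and whether a not-yet-dominated blue below $e$ still demands a red cover from that direction.

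The hardest part will be the \emph{merge step} at branching nodes of $T$. Along any maximal path of $T$ containing no branching node, the subproblem is essentially Red-Blue Dominating Set on an interval graph and can be solved greedily in polynomial time. At a branching node $t$ with children $c_1,\dots,c_d$, however, we must combine the DP tables from the $d$ subtrees into a single table describing the situation just above $t$, taking care of cross-subtree domination that passes through $t$. The key simplification is that any two subtrees containing $t$ are adjacent in $G$, so all cross-subtree interactions at $t$ are determined by the boundary configuration at $t$, and the merge reduces to a coordinate-wise optimization over the $2^{\mathcal{O}(\ell)}$ boundary configurations, computable in $2^{\mathcal{O}(\ell)}\cdot n^{\mathcal{O}(1)}$ total time by processing the children one by one.

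Finally, for Connected Dominating Set and Steiner Tree we enlarge the state by the partition of the chosen reds/terminals currently on the boundary into connected components of the partial solution. Since each boundary $X_e$ is itself a clique of $G$, at most one such component can survive on the boundary, so the connectivity overhead only multiplies the state space by a constant factor, keeping the overall running time at $2^{\mathcal{O}(\ell)}\cdot n^{\mathcal{O}(1)}$.
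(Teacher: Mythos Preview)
Your route is genuinely different from the paper's, and the sketch has a real gap at its load-bearing step.

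\textbf{The gap.} You assert that ``the upward extent of any subtree from $e$ is captured by a bit-vector of length $\mathcal{O}(\ell)$,'' namely the set of skeleton edges it crosses. That is too coarse: two reds in $X_e$ can hit exactly the same set of skeleton edges yet extend different distances into one of the corresponding degree-$2$ paths, and hence dominate different blues on that path. When your greedy sweep along a degree-$2$ path above $e$ begins, it must know \emph{how far} the already-chosen reds reach into that path, not merely \emph{whether} they enter it. Worse, a single red whose model contains a branching node $\alpha$ may also contain the next branching node $\beta$ and branch into several directions above $\beta$; the union of the chosen reds' models above $\alpha$ is then an arbitrary subtree of $T$ containing $\alpha$, and there are $n^{\mathcal{O}(\ell)}$ of those, not $2^{\mathcal{O}(\ell)}$. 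Your merge paragraph hides the same issue: the children of $t$ do not decouple coordinate-wise, because a red coming up from one child can pass through $t$ and continue arbitrarily far upward, and that information must be carried. This is precisely the obstacle that pins a straight DP at the $2^{\mathcal{O}(\ell^2)}$ of Fomin--Golovach--Raymond (or at $n^{\mathcal{O}(\ell)}$); the whole contribution of the present paper is a way around it.

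\textbf{What the paper does instead.} The paper drops DP for the Red-Blue Dominating Set core. It first runs a $2^{\mathcal{O}(\ell)}$-way branching that outputs \emph{restricted} instances in which (i) every model contains at most one branching node and (ii) every branching node sees only red or only blue vertices. Property~(i) is exactly what kills the difficulty above: no model spans two branching nodes, so nothing chosen at one branching node must be remembered at the next. Each restricted instance is then solved greedily by repeatedly looking at the deepest branching node $\alpha$ and reducing the local choice there to a \textsc{Set Cover} (if all $\alpha$-vertices are red) or \textsc{Hitting Set} (if all are blue) instance with universe size at most $\ell$, solvable in $2^{\mathcal{O}(\ell)}$. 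For \textsc{Connected Dominating Set} and \textsc{Steiner Tree} the paper does not augment any DP with connectivity bookkeeping; it gives leafage-preserving polynomial-time reductions to Connected RBDS and from Connected RBDS back to plain RBDS, so a single RBDS algorithm handles all three problems. Your observation that $X_e$ is a clique (hence at most one component survives on the boundary) is correct, and is implicitly what makes the paper's Connected-RBDS-to-RBDS reduction work, but it does not rescue the state bound for your underlying RBDS DP.
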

The reductions in \cite{DBLP:journals/ipl/Bertossi84} and \cite{DBLP:journals/networks/WhiteFP85} used to prove that these problems are \NP-complete on chordal graphs imply that these problems do not admit $2^{o(n)}$, and hence $2^{o(\ell)} \cdot \polyn$, algorithms unless the \ETH\ fails.

Arguably, the two most studied cut problems are {\sc MultiCut} and {\sc Multiway Cut}.
In the {\sc MultiCut} problem, an input is graph $G$, a set of terminal pairs $P \subseteq V(G) \times V(G)$ and an integer $k$, and the objective is to find a subset $S \subseteq V(G)$ of size at most $k$ such that no pair of vertices in $P$ is connected in $G - S$.
In the {\sc Multiway Cut} problem, instead of terminal pairs, we are given a terminal set $P$ and the objective is to find a subset $S \subseteq V(G)$ of size at most $k$ such that no two vertices in $P$ are connected in $G - S$.
These problems and variations of them have received a considerable attention which lead to the development of new techniques~\cite{DBLP:journals/tcs/Marx06, DBLP:journals/siamcomp/MarxR14, DBLP:journals/siamcomp/BousquetDT18, DBLP:journals/siamcomp/ChitnisHM13, DBLP:journals/talg/ChitnisCHM15}.
Misra et al.~\cite{DBLP:conf/mfcs/MisraP00S20} studied the parameterized complexity of these problems on chordal graphs.
Guo et al.~\cite{DBLP:journals/eor/GuoHKNU08} proved that \textsc{MultiCut with Deletable Terminals} is \NP-complete on interval graphs, thereby implying that this problem is \paraNP-hard when parameterized by the leafage.
We consider the \textsc{MultiCut with Undeletable Terminals} problem and prove the following result. 

\begin{restatable}{theorem}{thmmcleafagewhard}
\label{thm:mc-w-hardness}
{\sc MultiCut with Undeletable Terminals} on chordal graphs
is {\em \W[1]-hard} when parameterized by the leafage $\ell$
and assuming the {\em \ETH}, does not admit an algorithm
running in time $f(\ell)\cdot n^{o(\ell)}$ for any computable function $f$.
However, it admits an \XP-algorithm running in time $n^{\calO(\ell)}$.
\end{restatable}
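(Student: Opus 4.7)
The three statements split into an algorithmic part and two matching lower bounds, and both sides exploit the tree representation $T$ of $G$ with $\ell$ leaves computed in polynomial time by Habib--Stacho~\cite{DBLP:conf/esa/HabibS09}.

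For the \textsf{XP} algorithm, I would root $T$ at an arbitrary leaf and consider its \emph{branching nodes} (nodes of degree at least three). Since $T$ has $\ell$ leaves, suppressing all degree-two nodes yields a skeleton with $\ell$ leaves and hence only $O(\ell)$ edges, which decomposes $T$ into $O(\ell)$ maximal path-like \emph{segments}. Recall that every vertex separator of a chordal graph is a clique, and in the tree representation, for each $x \in V(T)$ the set $K_x \deff \{v \in V(G) : x \in T_v\}$ is a clique of $G$; moreover, any $s$-$t$ path in $G$ must visit $K_x$ whenever $x$ separates $T_s$ from $T_t$ in $T$. My strategy is to \emph{guess the footprint of the optimal solution at each segment boundary}: for every segment $S$ and each of its two endpoints, guess the solution vertex whose subtree extends deepest from that endpoint into $S$. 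Since there are only $O(\ell)$ such (segment, endpoint) pairs, this yields $n^{O(\ell)}$ configurations. After fixing one, the instance essentially decouples across segments: the guessed vertices already take care of every terminal pair whose endpoints lie in different segments, and on each individual segment --- which is essentially an interval subgraph --- the residual problem admits a polynomial-time dynamic program over the linearly ordered cliques $K_x$ as $x$ walks the segment.

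For the \WoneH classification and the matching \ETH\ lower bound, I would reduce from \textsc{Multicolored Clique} parameterized by the number $k$ of colors, which admits no $f(k) \cdot n^{o(k)}$ algorithm under \ETH. Given color classes $V_1, \dots, V_k$ and an edge set $E$, I would construct a chordal graph whose representation tree is a spider with $k$ legs, one per color. Along leg $i$ I would place $|V_i|$ vertex-selection gadgets in a chain, bounded by an undeletable terminal pair at the two ends of the leg, so that any valid multicut must cut this chain in a structured way encoding the choice of a single vertex $u_i \in V_i$. Non-edges $uv \notin E$ with $u \in V_i$ and $v \in V_j$ are encoded by additional undeletable terminal pairs running between legs $i$ and $j$ through the center of the spider, forcing the chosen vertices to form a clique in the original instance. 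The constructed graph has leafage $\Theta(k)$, so parameterized hardness in $k$ transfers to parameterized hardness in $\ell$; the linear relation between $\ell$ and $k$ then upgrades the ETH-based $n^{o(k)}$ lower bound into the claimed $f(\ell) \cdot n^{o(\ell)}$ bound.

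The main obstacle I anticipate is on the algorithmic side: the deepest-subtree-per-endpoint guess must carry enough information for the per-segment dynamic programs to truly decouple. In particular, one must correctly propagate which terminal pairs have already been separated by the guessed boundary vertices so that no segment DP needs to re-cut them, and one must also handle solution vertices whose subtree straddles a branching node but enters only some of its incident segments. On the hardness side, the delicate point will be designing the cross-leg terminal pairs so that their realizing subtrees stay within two legs, ensuring that a $k$-leaf spider remains a valid representation and the leafage is genuinely $\Theta(k)$ rather than being inflated by the connective machinery.
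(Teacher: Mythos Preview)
Your \textsf{XP} algorithm has a genuine gap at the decoupling step, which you correctly flag as the main obstacle but do not resolve. Guessing, for each segment endpoint, a single solution vertex whose model reaches deepest into the segment does not disconnect anything: a terminal pair $(s,t)$ with $s$ and $t$ in different segments is separated by $S$ only if $S$ contains \emph{all} of $\ver(e)$ for some tree edge $e$ on the $T$-path between the models of $s$ and $t$, not merely one deep-reaching vertex. The paper's approach is to guess, on each maximal degree-$2$ path of $T$, the first and last \emph{edge} $e$ that the solution \emph{destroys} (i.e.\ $\ver(e)\subseteq S$). There are at most $2\ell-2$ such paths and hence $n^{O(\ell)}$ guesses; committing $\ver(e)$ to the solution for every guessed $e$ and deleting these vertices physically breaks $G$ into components whose tree representations are paths, i.e.\ interval graphs, on which \MCundel\ is polynomial. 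Your single-vertex-per-endpoint guess cannot play this separating role, and the claim that ``the guessed vertices already take care of every terminal pair whose endpoints lie in different segments'' is false as stated.

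Your hardness sketch is also too thin. With one leg per color and terminal pairs placed at non-edges, there is no evident way to force the cut on leg $i$ to pin down a \emph{unique} index $a_i$ while simultaneously guaranteeing that a cross-leg pair for non-edge $(v_a^i,v_b^j)$ stays connected exactly when $(a_i,a_j)=(a,b)$; straightforward terminal placements allow spurious separations. The paper's construction is substantially different: it uses \emph{two} legs per color (leafage $2q$), lines each leg with cliques of sizes $M,2M,\dots,nM$, and pairs terminals as $(p_a^{i,\alpha},p_{n+2-a}^{i,\beta})$ across the two copies so that the budget forces the two cut positions to sum to exactly $n+1$, uniquely encoding $a_i$. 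Edges (not non-edges) of $G$ are then realized by dedicated edge-vertices $v_e$ sitting at a large central clique, and the budget leaves room to keep exactly $\binom{q}{2}$ of them uncut, which is possible only if the selected vertices form a clique.
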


Next, we focus on the \textsc{Multiway Cut with Undeletable Terminals} problem.
We find it somewhat surprising that the classical complexity of this problem on chordal graphs was not known.
Bergougnoux et al.~\cite{DBLP:journals/algorithmica/BergougnouxPT22}, using the result in \cite{DBLP:journals/algorithmica/FominGR20}, proved that the problem admits an \XP-algorithm when parameterized by the leafage\footnote{See the discussion after Corollary~$2$ on page 1388 in \cite{DBLP:journals/algorithmica/BergougnouxPT22}.}.
Our next result significantly improves upon this and \cite[Theorem~$2$]{DBLP:conf/mfcs/MisraP00S20} which states that the problem admits a polynomial kernel when parameterized by the solution size.

\begin{restatable}{theorem}{thm:mwc-poly}
\label{thm:mwc-poly}
\textsc{\sc Multiway Cut with Undeletable Terminals} can be solved in $\polyn$-time on chordal graphs.
\end{restatable}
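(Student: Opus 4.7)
The plan is to exploit the clique-tree representation of the chordal graph $G$. Compute in polynomial time a clique tree $T$ of $G$, so that each vertex $v$ of $G$ corresponds to a subtree $T_v \subseteq T$ consisting of the bags containing $v$, and each edge $e=(K,K')$ of $T$ carries a separator $\sigma_e := K \cap K'$, which is a clique and a minimal vertex separator of $G$. I would then invoke the standard equivalence: a set $S \subseteq V(G)\setminus P$ is a valid multiway cut if and only if, upon deleting from $T$ every edge $e$ with $\sigma_e \subseteq S$, the subtrees $T_t$ for distinct terminals $t \in P$ lie in pairwise distinct components of the resulting forest. Because terminals cannot be placed in $S$, any edge $e$ with $\sigma_e \cap P \neq \emptyset$ is uncuttable; consequently, the instance is infeasible if and only if some pair of terminal subtrees is joined in $T$ by a path whose every edge is uncuttable, and this feasibility test runs in polynomial time.

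Assuming feasibility, the task becomes: select $F \subseteq E(T)$ of cuttable edges separating all terminal subtrees, minimising $|\bigcup_{e \in F} \sigma_e|$. My plan is to prove that an optimum such $F$ can be recovered by the classical isolating-cut construction on $G$: for every terminal $t \in P$, compute in polynomial time (via max-flow on a suitable $s$-$t$ network) a minimum $(t, P\setminus\{t\})$-vertex cut $C_t$ in $G$, and return $S^\star := \bigcup_{t \in P\setminus\{t^\star\}} C_t$ for a well-chosen terminal $t^\star$. All these quantities are computable in polynomial time, so if the construction really attains the optimum, then the algorithm runs in $\polyn$ time as claimed.

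The main obstacle is precisely showing that this construction is \emph{exact} on chordal graphs with undeletable terminals, rather than merely the factor-$2$ approximation it delivers on general graphs. I plan to establish this through an exchange argument on the clique tree: given an arbitrary optimum multiway cut $S$, consider the components $C_1, \ldots, C_{|P|}$ of $G - S$ around the terminals; I would show that each boundary $N(C_i) \subseteq S$ can be rerouted along $T$ to coincide with a minimum isolating cut for $t_i$, without increasing $|S|$. The rearrangement relies on three chordality-specific facts: every minimal separator of $G$ is a clique and appears as some $\sigma_e$ on $T$; undeletability forbids terminal vertices from entering the cut, so only ``clean'' edges of $T$ are admissible; and the subtrees $T_v$ enjoy the Helly property, which allows boundary separators to be shifted along $T$ in a cost-controlled way. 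Making this shift rigorous, and thereby concluding that $|S^\star|$ equals the optimum, is the principal technical challenge.
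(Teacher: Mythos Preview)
Your proposed algorithm is incorrect: the isolating-cut construction is \emph{not} exact on chordal graphs, even with undeletable terminals. A concrete counterexample is the following split graph. Take a central clique $B=\{b_1,b_2\}$; make a terminal $t$ complete to $B$; make three non-terminals $a_1,a_2,a_3$ each complete to $B$; and hang a pendant terminal $t_i$ on each $a_i$. This graph is chordal (every $4$-cycle through two $a_i$'s and two $b_j$'s has the chord $b_1b_2$), and the terminal set is $P=\{t,t_1,t_2,t_3\}$. The optimum $P$-multiway cut is $B$, of size $2$. However, the minimum isolating cut for each $t_i$ is the singleton $\{a_i\}$, and the minimum isolating cut for $t$ is $B$ (size $2$). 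Whichever terminal $t^\star$ you drop, the union $\bigcup_{p\neq t^\star} C_p$ has size at least $3$: dropping $t$ yields $\{a_1,a_2,a_3\}$, and dropping any $t_i$ yields $B\cup\{a_j:j\neq i\}$ of size $4$. The exchange argument you sketch fails precisely here: in the optimum solution the boundaries $N(C_i)$ of all four terminal components \emph{coincide} (they are all equal to $B$), whereas the minimum isolating cuts $\{a_1\},\{a_2\},\{a_3\}$ are pairwise disjoint. Rerouting each boundary to a minimum isolating cut decreases each piece individually but destroys the overlap, increasing the total. This phenomenon is intrinsic: the paper itself observes that \textsc{Vertex Cover} on bipartite graphs reduces to \mwc\ on chordal graphs via exactly this ``central clique with pendants'' pattern, so no argument that ignores global flow structure can succeed.

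The paper's actual proof is substantially more involved. It performs a bottom-up dynamic program on a rooted tree representation $(T,\calM)$: for each edge $e$ of $T$ it stores the optimum multiway-cut value for the sub-instance induced below $e$. The non-trivial part is computing the entry at an internal node $\alpha$ from the entries below. For this, the paper truncates $T_\alpha$ at the topmost node of each ``highest'' terminal on each root-to-leaf path, and then, for each choice of which (at most one) such terminal remains connected to the root of $T_\alpha$, builds an auxiliary \emph{directed} graph $H_i$ in which a single minimum $(\sss,\ttt)$-cut simultaneously selects which tree edges to destroy and pays both for the vertices on those edges (via unit-weight ``sink'' arcs) and for the recursively computed sub-instance values (via tree arcs weighted by already-computed DP entries). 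The correctness argument (Lemmas~\ref{lem:backward} and~\ref{lem:forward}) shows that minimum $(\sss,\ttt)$-cuts in the $H_i$ correspond exactly to optimal multiway cuts below $\alpha$. This reduction to polynomially many weighted $s$--$t$ cut computations is what replaces the exchange argument you were hoping for.
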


A well-known trick to convert an instance of \textsc{Multiway Cut with Deletable Terminals} into an instance of \textsc{Multiway Cut with Undeletable Terminals} is to add a pendant vertex to each terminal, remove that vertex from the set of terminals, and make the newly added vertex a terminal.
As this reduction converts a chordal graph into another chordal graph, \Cref{thm:mwc-poly} implies that \textsc{Multiway Cut with Deletable Terminals} is also polynomial-time solvable on chordal graphs.
Another closely related problem is \textsc{Subset FVS}
which is \NP-complete on split graphs~\cite{DBLP:journals/dam/PapadopoulosT19}.
To the best of our knowledge, this is the first graph class on which the classical complexity of these two problems differ. 

Next, we revisit the problems on chordal graphs with bounded leafage and examine how far we can generalize this class.
An \emph{asteroidal triple} of a graph $G$ is a set of three vertices such that each pair is connected by some path 
that avoids the closed neighborhood of  the third vertex.
Lekkerkerker and Boland~\cite{lekkeikerker1962} showed that a graph is an interval graph if and only if it is chordal and does not contain an asteroidal triple.
They also listed all minimal chordal graphs that contain an asteroidal triple (see, for instance, \cite[Figure~1]{DBLP:conf/soda/Cao16}).
Among this list, we found the \emph{net graph} to be the most natural to generalize.
For a positive integer $\ell \ge 3$, we define $H_{\ell}$ as a split graph on $2\ell$ vertices with split partition $(C, I)$ such that the only edges across $C, I$ are a perfect matching.
Note that $H_{3}$ is the net graph.
As interval graphs are a proper subset of the collection of chordal graphs that do not contain a net graph as an induced subgraph,
the collection of the chordal graph of leafage $\ell$ is a proper subset of the collection of chordal graphs that do not contain $H_{\ell + 1}$ as an induced subgraph (see Observation~\ref{obs:subclasses-chordal}). 
We show that, although the considered domination problems are polynomial-time solvable for constant $\ell$, the fixed-parameter tractability results are unlikely to extend to this larger class.

\begin{restatable}{theorem}{thmdomsetwhard}
\label{thm:dom-set-ind-sub-w-hard}
{\sc Dominating Set}, {\sc Connected Dominating Set} and {\sc Steiner Tree} on $H_{\ell}$-induced-subgraph-free chordal graphs are {\em \WoneH} when parameterized by $\ell$ and assuming the {\em \ETH}, do not admit an algorithm running in time $f(\ell)\cdot n^{o(\ell)}$ for any computable function $f$.
However, they all admit \XP-algorithms running in time $n^{\calO(\ell)}$.
\end{restatable}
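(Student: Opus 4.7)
The plan is to derive W[1]-hardness and the $n^{o(\ell)}$ ETH lower bound from a single parameterized reduction from \textsc{Multicolored Clique}, while obtaining the matching $n^{\calO(\ell)}$ XP-algorithm via a dynamic program along a clique tree of the input chordal graph.

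For the lower bounds, the reduction starts from an instance $(H, V_1 \sqcup \cdots \sqcup V_k)$ of \textsc{Multicolored Clique}, which is W[1]-hard parameterized by the number of colours $k$ and, under the ETH, admits no $f(k)\cdot n^{o(k)}$-time algorithm. The target chordal graph $G$ would consist of a backbone clique (for chordality), $k$ \emph{selection gadgets}---one per colour class, forcing any small dominating set to represent a choice of $v \in V_i$---and $\binom{k}{2}$ \emph{consistency gadgets} dominated exactly when the two chosen vertices span an edge of $H$. The same construction would transfer to \textsc{Connected Dominating Set} and \textsc{Steiner Tree} via the standard tricks of adding a universal connector vertex, respectively turning to-be-dominated vertices into mandatory terminals attached by pendants. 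The delicate check is to verify that $G$ is $H_{\ell+1}$-free for some $\ell = \Theta(k)$: this should be achieved by arranging the gadgets so that every sufficiently large independent set in $G$ shares a common clique-neighbour in the backbone, which obstructs the perfect matching required by an induced $H_{\ell+1}$.

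For the XP-algorithm, I would work with a clique tree $T$ of $G$ and prove a structural consequence of $H_\ell$-freeness: any collection of $\ell$ pairwise ``separated'' subtree branches of $T$ each carrying a private vertex must witness an induced $H_\ell$ in $G$, obtained via the Helly property of subtrees of trees combined with a pigeonhole argument on the cliques separating them. This would reduce the problem to a graph with only $\calO(\ell)$ ``essential'' branches of the clique tree, at which point the dynamic program underlying \Cref{thm:ds-fpt} can be adapted to enumerate, for each essential branch, the finitely many types of interaction with a partial solution, yielding $n^{\calO(\ell)}$ time overall.

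The hardest step, I expect, will be this structural lemma for the algorithm: unlike the bounded-leafage setting, where $T$ has exactly $\ell$ leaves by definition, here $H_\ell$-freeness gives only an indirect handle on the clique tree, and making the pigeonhole argument go through requires a careful analysis of how disjoint subtrees can simultaneously embed both the independent side and the clique side of $H_\ell$. On the hardness side, the analogous subtlety lies in simultaneously maintaining chordality and $H_{\ell+1}$-freeness throughout the gadget construction while still faithfully encoding the clique selection, which will likely dictate the precise shape of the selection and consistency gadgets.
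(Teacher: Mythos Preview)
Your hardness sketch is along the right lines and close to the paper's: the paper reduces from \textsc{Multicolored Independent Set} to a split graph where $V(G)$ becomes a clique, with two selector vertices $x_i,y_i$ per colour class and one gadget vertex $w_e$ per edge $e=uv$, adjacent to all of $V_i \cup V_j$ except $u$ and $v$. The $H_{2q+2}$-freeness check is a short pigeonhole on adjacency patterns into each $V_i$ (every independent-side vertex sees all, all but one, or none of $V_i$), which is the concrete argument you would need in place of your vaguer ``every large independent set shares a common clique-neighbour''.

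The genuine gap is in your XP-algorithm plan. Your proposed structural lemma---that $\ell$ separated branches of the clique tree, each with a private vertex, force an induced $H_\ell$---is false, and so is the hoped-for reduction to $\calO(\ell)$ essential branches. The paper itself gives the counterexample right after \Cref{obs:subclasses-chordal}: a star with every edge subdivided once is $H_3$-free (it contains no triangle at all, so certainly no net) yet has arbitrarily many branches, each with its own private leaf. More generally, $H_\ell$-freeness places no bound whatsoever on the leafage, so there is no way to feed the instance back into \Cref{thm:ds-fpt}. What the paper does instead is keep the full (possibly high-leafage) tree representation and use $H_\ell$-freeness to bound the \emph{state size} of a DP at each node: at any node $\alpha$, a minimum solution contains fewer than $\ell$ vertices whose model passes through $\alpha$, and any minimal set of non-$\alpha$ red vertices dominating a collection of $\alpha$-blue vertices has size below $\ell$, because otherwise these vertices together with their private dominated neighbours induce an $H_\ell$. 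The DP then indexes its tables by subsets of size at most $\ell$ at each node, giving $n^{\calO(\ell)}$ states overall; this requires two interacting tables (one for solutions touching $\alpha$, one for solutions staying strictly below $\alpha$) and does not reuse the bounded-leafage algorithm.
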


We observe a similar trend with respect to \textsc{MultiCut with Undeletable Terminals} as its parameterized complexity jumps from \W[1]-\hard\ on chordal graph of leafage $\ell$ to {\paraNP-\hard} on $H_{\ell}$-induced-subgraph-free chordal graphs when parameterized by $\ell$.

\begin{restatable}{theorem}{thm:mc-para-NP-hard}
\label{thm:mc-para-NP-hard}
{\sc MultiCut with Undeletable Terminals} is {\em \NP-hard}
even when restricted to $H_{3}$-induced-subgraph-free chordal graphs.
\end{restatable}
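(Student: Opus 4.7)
The plan is to establish \NP-hardness by a polynomial-time reduction from a known hard multicut problem. A natural starting point is \textsc{MultiCut with Deletable Terminals} on interval graphs, shown to be \NP-hard by Guo et al.~\cite{DBLP:journals/eor/GuoHKNU08}; since interval graphs are themselves $H_{3}$-induced-subgraph-free chordal graphs, it would suffice to convert deletable terminals into undeletable ones while preserving chordality and $H_{3}$-freeness.

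The standard way to make terminals undeletable is the pendant trick: for each terminal $t$, attach a fresh vertex $t'$ adjacent only to $t$ and declare $t'$ the new terminal. Equivalence is routine, since cutting $t$ in the deletable instance corresponds to cutting $t$ (still allowed, as $t$ itself is no longer a terminal) in the new instance, and conversely any solution $S'$ of the new instance can be pushed off the pendant vertices at no extra cost, yielding a deletable multicut of the same size on the original graph. Chordality is preserved because pendants are simplicial.

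The subtle point, and the main obstacle of the proof, is that the pendant trick can introduce an induced $H_{3}$. Specifically, if $t$ belongs to a triangle $\{t, c_{2}, c_{3}\}$ of the interval graph together with private simplicial neighbours $i_{2}, i_{3}$ of $c_{2}, c_{3}$ respectively (a configuration easily realized in interval graphs by suitably perturbing intervals), then the six vertices $\{t, c_{2}, c_{3}, t', i_{2}, i_{3}\}$ induce exactly $H_{3}$. To sidestep this, I would use a richer terminal gadget, for instance a short pendant path $t - t_{1}^{t} - t_{2}^{t}$ with $t_{2}^{t}$ set as the undeletable terminal, together with an additional adjacency between $t_{1}^{t}$ and a vertex of $N_{G}(t)$ chosen so that $t_{1}^{t}$ cannot play the role of a private independent witness in any induced $H_{3}$; alternatively, one can replace the single pendant by a small pendant clique attached at $t$ so that the candidate net vertex now participates in a triangle of the gadget.

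Once such a gadget is in place, the remaining verification would proceed in three routine steps: chordality, by exhibiting a clique tree of the constructed graph; $H_{3}$-freeness, by a structural case analysis on which vertices of a hypothetical induced $H_{3}$ are newly introduced versus old (using that the original graph is interval and hence net-free, so any offending net must use a gadget vertex, and that every gadget vertex has bounded degree and is embedded in the gadget's triangle/path structure so that the clique-vertex or independent-vertex roles of $H_{3}$ are incompatible with its neighborhood); and the multicut equivalence, by the standard push-off-the-gadget argument. The conceptual heart of the proof is therefore the gadget design; every other step is mechanical.
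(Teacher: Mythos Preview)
Your proposal has a genuine gap: you correctly identify that the naive pendant trick can create an induced net, but you never actually exhibit a gadget that avoids this. Both of your suggested fixes fail for the same reason the single pendant fails. With a pendant path $t - t_{1}^{t} - t_{2}^{t}$, the vertex $t_{1}^{t}$ is adjacent (among old vertices) only to $t$, so in the very configuration you describe---a triangle $\{t,c_{2},c_{3}\}$ in the interval graph with private neighbours $i_{2},i_{3}$ of $c_{2},c_{3}$---the six vertices $\{t,c_{2},c_{3},t_{1}^{t},i_{2},i_{3}\}$ still induce $H_{3}$. The ``additional adjacency between $t_{1}^{t}$ and a vertex of $N_{G}(t)$'' is left unspecified, and choosing it is exactly the hard part: any fixed choice can be defeated by an interval instance in which that neighbour also lies in the triangle. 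The pendant-clique variant behaves the same way: a vertex of the attached clique is adjacent only to $t$ among the original vertices, hence can serve as $t$'s private independent witness in a net. So the ``conceptual heart'' you defer is not a routine detail---it is the whole proof, and your sketch does not supply it.

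The paper sidesteps the issue entirely by reducing from a different source. It starts from \textsc{Vertex Cover} and builds the target graph as a subdivided star: a center $r$ with $n+1$ legs of length two, one leaf per vertex of $G$ plus one spare leaf $w$. Terminal pairs are $(f(u),f(v))$ for every edge $uv$ of $G$, together with $(r,w)$ to force $r$ to be undeletable. The only non-terminal vertices are the subdivision vertices, and separating $f(u)$ from $f(v)$ requires deleting the subdivision vertex on $u$'s leg or on $v$'s leg, which is precisely vertex cover. The constructed graph is a tree, hence chordal, and since $H_{3}$ contains a triangle it cannot occur as an induced subgraph of any acyclic graph. No gadgetry or case analysis is needed.
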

\noindent
Table~\ref{fig:overview-results} summarises our results.

\begin{table}[t]
\def\arraystretch{1.5}
\begin{tabular}{|p{0.2\textwidth}|p{0.25\textwidth}|p{0.25\textwidth}|p{0.2\textwidth}|}
\hline
Input graph & \textsc{Dom Set}, \textsc{ConnDom Set}, \textsc{Steiner Tree} & \textsc{MultiCut with UnDel Term} & \textsc{MultiwayCut} \\
\hline
Interval Graphs & Poly-time~\cite{DBLP:journals/siamcomp/Chang98,DBLP:conf/wads/BalakrishnanRR93} & Poly-time~\cite{DBLP:journals/eor/GuoHKNU08} & Poly-time~\cite{DBLP:journals/algorithmica/BergougnouxPT22}\\
\hline
Chordal graphs of leafage $\ell$ & 
$2^{\calO(\ell^2)} \cdot n^{\calO(1)}$ algo \cite{{DBLP:journals/algorithmica/FominGR20}} \newline
$2^{\calO(\ell)} \cdot n^{\calO(1)}$ algo (Thm~\ref{thm:ds-fpt})
 & \W[1]-hard \newline
 $n^{\calO(\ell)}$ algo (Thm~\ref{thm:mc-w-hardness})
 & $n^{\calO(\ell)}$
 algo~\cite{DBLP:journals/algorithmica/BergougnouxPT22} \newline
 Poly-time (Thm~\ref{thm:mwc-poly}) \\
\hline
$H_{\ell}$-induced subgraph-free chordal
  & \W[1]-hard; \newline $n^{\calO(\ell)}$ algo (Thm~\ref{thm:dom-set-ind-sub-w-hard})
  & \NP-hard\ for $\ell \ge 3$ \newline
(Thm~\ref{thm:mc-para-NP-hard})
  & Poly-time (Thm~\ref{thm:mwc-poly})\\
\hline
Chordal graphs
  & \NP-complete~\cite{DBLP:journals/ipl/Bertossi84}
  & \NP-complete~\cite{DBLP:journals/networks/WhiteFP85}
  & Poly-time (Thm~\ref{thm:mwc-poly}) \\
\hline
\end{tabular}
\medskip
\caption{Overview of the known results and our contributions.
Every graph class mentioned in the first column is a proper subset of the graph class mentioned below.
\label{fig:overview-results}}
\end{table}

\paragraph*{Our Methods.} We briefly discuss the methods used in our two main algorithms, 
namely the algorithm for {\sc Dominating Set} and the one for {\sc Multiway Cut}.

\subparagraph*{Red-Blue Dominating Set in Chordal Graphs.}
As mentioned earlier, the linear ordering of cliques in interval graphs is particularly useful for the design of polynomial-time algorithms.
Such an ordering is not possible even if $G$ is a chordal graph whose representation tree $T$ is a star.
Consider the case where the model of every red vertex in $G$ includes the center of the star $T$ (and possibly some leaves) and the model of every blue vertex is (only) a leaf.
We can solve this instance by converting it to an instance of \SetCov and solving it using the \FPT\ algorithm parameterized by the size of the universe. 
In this case, the size of the universe is at most the number of leaves which is upper bounded by the leafage. 
In the other case where the properties of red vertices and blue vertices are reversed, we obtain a similar result by creating an equivalent instance of \HitSet.

These ideas can be used in a more general setting as long as the following two properties are satisfied:
(1) the model of each vertex is \emph{local}, that is, it contains at most one branching node,
and (2) each branching node is contained only in models of either red vertices or blue vertices.
Based on this observation, we introduce a restricted version of the problem in which the input graph is required to satisfy these two conditions.
We then show that the general case reduces to this restricted version:
indeed, we prove that there is a branching algorithm that constructs $2^{\calO(\ell)}$ many instances (where $\ell$ is the leafage of the input graph) of the restricted version of the problem
such that the input instance is a \yes-instance
if and only if one of these newly created instances is a \yes-instance.
These two properties ensure that the graph induced by the red and blue vertices whose model intersect the subtree rooted at a farthest branching node (from some fixed root) satisfies the premise of at least one of the cases mentioned in the previous paragraph.
We then present a greedy procedure, based on solving the \SetCov and \HitSet problems, that identifies some part of an optimum solution.
Apart from this greedy selection procedure, all other steps of the algorithm run in polynomial time.

\subparagraph*{Multiway Cut in Chordal Graphs.}
We give a polynomial-time algorithm for {\sc Multiway Cut} on chordal graphs by solving several instances of the \stcut\ problem (not necessarily with unit capacities). Our strategy is based on a bottom-up dynamic programming (DP) on a tree representation of a chordal graph.
An interesting aspect of our DP is that we need to look-up {\em all} DP table values that are already computed to compute a new entry.
This is in contrast to typical DP-based algorithms that do computations only based on {\em local} entries.

We remark that we do not expect to design an algorithm for {\sc Multiway Cut} on chordal graphs using much simpler arguments (like a simple dynamic programming procedure etc.) as the problem generalizes some well-studied cut-flow based problems.
As an example,
recall the {\sc Vertex Cover} problem on bipartite graphs where given a bipartite graph $G$ with bipartition $(A,B)$, the goal is to find $A' \subseteq A$ and $B' \subseteq B$ such that $|A' \cup B'|$ is minimum and $N(A \setminus A') \subseteq B'$.
The set $A'\cup B'$ is called a vertex cover of $G$.  The {\sc Vertex Cover} problem on bipartite graphs reduces to the {\sc Multiway Cut} problem on chordal graphs: indeed, let $G'$ be the graph obtained from $G$ by making $B$ a clique, adding new pendant vertex $t_a$ to each vertex $a \in A$, and further adding another new vertex $t$ that is adjacent to all vertices of $B$.
Then $G'$ is a chordal graph and letting $T=t \cup \{t_a~|~ a \in A\}$,
it is easy to see that $S \subseteq V(G)$ is a vertex cover of $G$ if and only if $S$ is a $T$-multiway-cut in $G'$. 
As mentioned earlier, our algorithm solves several instances of the \stcut\ problem, which also sits at the heart of some algorithms for {\sc Vertex Cover} on bipartite graphs.
The above reduction suggests that an algorithm for {\sc Multiway Cut} on chordal graphs using much simpler techniques, would imply an algorithm for {\sc Vertex Cover} on bipartite graphs that uses much simpler techniques as well.  

Note that a similar reduction would work from the weighted variant of the {\sc Vertex Cover} problem on bipartite graphs. 
This can be achieved by further replacing each vertex of the graph $G$ by a clique of size proportional to the weight of this vertex and making each vertex of the clique adjacent to all the neighbors of this vertex. 
This reduction still preserves the chordality of the resulting graph. 

\paragraph*{Organization of the Paper.}
In \Cref{sec:prelim}, we define the notations and terminology used throughout the paper.
In \Cref{sec:dominating-set}, we present the \FPT\ algorithm for the generalized {\sc Red-Blue Dominating Set} problem parameterized by the leafage.
In \Cref{sec:multicut-bounded-leafage}, we consider the {\sc Multicut} problem and provide the proof of \Cref{thm:mc-w-hardness}.
We present the polynomial-time algorithm for {\sc Multiway Cut} on chordal graphs in \Cref{sec:multiway-cut}.
In \Cref{sec:Hlchordal}, we revisit the aforementioned problems by restricting the input to $H_\ell$-induced-subgraph-free chordal graphs and prove \Cref{thm:dom-set-ind-sub-w-hard} for {\sc Dominating Set} as well as \Cref{thm:mc-para-NP-hard}.
Finally in \Cref{sec:otherdompb}, we consider the {\sc Connected Dominating Set} and the {\sc Steiner Tree} problems and complete the proofs of \Cref{thm:ds-fpt} and \Cref{thm:dom-set-ind-sub-w-hard}.
\section{Preliminaries}
\label{sec:prelim}

For a positive integer $q$, we denote the set $\{1, 2, \dots, q\}$ by $[q]$
and for any $0 \leq p \leq q$, we denote the set $\{p,\ldots,q\}$ by $[p,q]$.
We use $\mathbb{N}$ to denote the set of all non-negative integers.
Given a function $f : X \to Z$ and $Y \subseteq X$,
$f|_{Y}$ denotes the function $f$ restricted to $Y$.

\paragraph*{Graph Theory.}
For a graph $G$, we denote by $V(G)$ and $E(G)$
the set of vertices and edges of $G$, respectively.
Unless specified otherwise, we use $n$ to denote the number of vertices in $G$.
{We denote the edge with endpoints $u, v$ by $uv$}.
For any $v \in V(G)$,
$N_G(v) = \{u \mid uv \in E(G)\}$ denotes the (open) neighbourhood of $v$,
and $N_G[v]=N_G (v) \cup \{v\}$ denotes the closed neighbourhood of $v$.
When the graph $G$ is clear from the context, we omit the subscript $G$.
For any $S \subseteq V(G)$, $G - S$ denotes the graph obtained from $G$ by deleting vertices in $S$.
We denote the subgraph of $G$ induced by $S$, i.e., the graph $G - (V(G) \setminus S)$, by $G[S]$.
We say graph $G$ contains graph $H$ as in \emph{induced subgraph}
if $H$ can be obtained from $G$ by series of vertex-deletions.
Recall that for a directed graph $H$,
we denote by $N_H^+(v)$ the out-neighbors of $v \in V(H)$
and by $N_H^-(v)$ the in-neighbors of $v \in V(H)$.
If $H$ is clear from the context, we omit the subscript $H$.
Given a (directed) path $P$ in a graph $G$ and two vertices
$u,v \in V(P)$, we denote by $P[u,v]$ the subpath of $P$ from $u$ to $v$.
For any further notation from basic graph theory, we refer the reader to~\cite{DBLP:books/daglib/Diestel12}.

\paragraph*{Trees.}
A {\em tree} $T$ is a connected acyclic graph.
Consider a tree $T$ rooted at $\rr$.
We define function $\parent(t, T) : V(T) \setminus \{\rr\} \mapsto V(T)$ 
to specify unique parent of the nodes in rooted tree $T$.
For any node $t \in T$, we denote by $T_t$ the subtree rooted at $t$.
A {\em subdivided star} is a tree with at most one vertex of degree at least $3$
(in other words, it is a tree obtained by repeatedly subdividing the edges of a star graph).
The sets $V_{\geq 3}(T)$ and $V_{=1}(T)$
denote the set of vertices of degree at least $3$,
and of degree equal to $1$, respectively.
The set $V_{\geq 3}(T)$ is also called the set of {\em branching vertices} of $T$
and the set $V_{=1}(T)$ is called the set of {\em leaves} of $T$.
Note that $|V_{\geq 3}(T)| \le |V_{=1}(T)|-1$.
Any node of $T$ which is not a leaf is called \emph{internal}.

\paragraph*{Chordal graphs and Tree representations.}
A graph is called a chordal graph if it contains no induced cycle of length at least four.
It is well-known that chordal graphs can be represented as intersection graphs of subtrees in a tree, that is, 
for every chordal graph $G$, there exists a tree $T$ and a collection $\calM$ of subtrees of $T$ 
in one-to-one correspondence with $V(G)$ 
such that two vertices in $G$ are adjacent if and only if their corresponding subtrees intersect.
The pair $(T, \calM)$ is called a \emph{tree representation} of $G$.
For every $v \in V(G)$, we denote by $\mld(v)$ the subtree corresponding to $v$ and
refer to $\mld(v)$ as the {\em model} of $v$ in $T$.
Throughout this article, we use \emph{nodes} to refer to the vertices of the tree $T$
to avoid confusion with the vertices of the graph $G$.
Furthermore, we use the greek alphabet to denote nodes of $T$ and the latin alphabet to denote vertices of $G$.
For notational convenience, for any node $\alpha \in V(T)$ and edge $e \in E(T)$, we may abuse notation and write $\alpha \in \mld(v)$ in place of $\alpha \in V(\mld(v))$ as well as $e \in \mld(v)$ in place of $e \in E(\mld(v))$.

For every node $\alpha \in V(T)$,
we let $\ver(\alpha) = \{v \in V(G) \mid \alpha \in \mld(v)\}$,
that is, $\ver(\alpha)$ is the set of vertices in $G$ that contain the node $\alpha$ is their model.
A vertex $v \in V(G)$ whose model contains $\alpha$ may also be referred to as an \emph{$\alpha$-vertex}.
Similarly, for every edge $e \in E(T)$,
we let $\ver(e)=\{v \in V(G) \mid e \subseteq \mld(v)\}$,
that is, $\ver(e)$ is the set of vertices of $G$ that contain the edge $e$ in their model.
Given a subtree $T'$ of $T$, we denote by $G_{|T'}$ the subgraph of $G$ 
induced by those vertices $x \in V(G)$ such that $V(\mld(x)) \subseteq V(T')$.
If $T$ is rooted, then for each vertex $v \in V(G)$, we call the node in $\mld(v)$ that is closest to the root of $T$, 
the {\em topmost} node of $\mld(v)$ and denoted it by $\topnode_{\mld}(v)$. 
 
The \emph{leafage} of chordal graph $G$, denoted by $\lf(G)$, is defined as 
the minimum number of leaves in the tree of a tree representation of $G$.
A tree representation $(T,\calM)$ for $G$ such that the number of leaves in $T$ is $\lf(G)$,
can be computed in time $O(|V(G)|^3)$ \cite{DBLP:conf/esa/HabibS09}. 
Furthermore, the number of nodes in $T$ is at most $O(|V(G)|)$.

\paragraph*{Parameterized Complexity.}
The input of a parameterized problem comprises an instance $I$,
which is an input of the classical instance of the problem,
and an integer $k$, which is called the parameter.
A parameterized problem $\Pi$ is said to be \emph{fixed-parameter tractable} (\FPT\ for short) %
if given an instance $(I,k)$ of $\Pi$,
we can decide whether $(I,k)$ is a \yes-instance of $\Pi$
in time $f(k)\cdot |I|^{\OO(1)}$
for some computable function $f$ depending only on $k$.
We say that an instance $(I, k)$ of a parameterized problem $\Pi$
and an instance $(I', k')$ of a parameterized problem $\Pi'$ (possibly $\Pi = \Pi'$)
are \emph{equivalent} 
if $(I, k) \in \Pi$ if and only if $(I', k') \in \Pi'$.
 A \emph{reduction rule}, for a parameterized problem $\Pi$,
 is a polynomial-time algorithm that takes as input an instance $(I, k)$ of $\Pi$
 and outputs an instance $(I', k')$ of $\Pi$.
 If $(I, k)$ and $(I', k')$ are equivalent
 then we say that the reduction rule is \emph{safe}.
For more details on parameterized algorithms, and in particular parameterized branching algorithms,
we refer the reader to the book by Cygan et al.~\cite{DBLP:books/sp/CyganFKLMPPS15}.
\section{Dominating Set}
\label{sec:dominating-set}

For a graph $G$, a set $X \subseteq V(G)$ is a \emph{dominating set} if every vertex in $V(G)\setminus X$ has at least one neighbor in $X$, that is, $V(G) = N[X]$.
In the \textsc{Dominating Set} problem (\DS for short),
the input is a graph $G$ and an integer $k$, and
the objective is to decide whether $G$ has a dominating set of size at most $k$.
We assume that the leafage of the input graph is given as part of the input.
If not, recall that it can be computed in polynomial time \cite{DBLP:conf/esa/HabibS09}.
We consider a generalized version of this problem as defined below.\\

\defproblem{
\textsc{Red-Blue Dominating Set (\RBDS)}
}{
A graph $G$, a partition $(R, B)$ of $V(G)$, and an integer $k$.
}{
Does there exist a set $X \subseteq R$ of size at most $k$
such that $B \subseteq N(X)$?
}

\bigskip

We first prove that to solve \DS, 
it is sufficient to solve \RBDS\ even when the input is restricted to chordal graphs of leafage $\ell$.
There is indeed a simple reduction from the former problem to the latter that preserves the properties in which we are interested.

\begin{lemma}
\label{lemma:ds-to-rbds-leafage}
There is a polynomial-time algorithm that given an instance $(G, k)$ of \DS\ constructs an equivalent instance $(G', (R',B'), k)$ of \RBDS\ such that if $G$ has leafage at most $\ell$, then so does $G'$.
\end{lemma}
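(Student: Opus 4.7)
The plan is to apply the standard doubling reduction from \DS\ to \RBDS\ and then argue that it also preserves a tree representation (and hence the leafage). Concretely, I will take $R' = \{v_R : v \in V(G)\}$ and $B' = \{v_B : v \in V(G)\}$ as two disjoint copies of $V(G)$, and define $G'$ on $R' \cup B'$ by placing each of $v_R u_R$, $v_R u_B$, $v_B u_B$ (for distinct $u, v \in V(G)$) in $E(G')$ exactly when $uv \in E(G)$, and additionally including every ``matching'' edge $v_R v_B$. The output instance is $(G', (R', B'), k)$, with the integer unchanged.

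For correctness, a direct check shows that $u_B \in B'$ has a neighbor in a candidate $X \subseteq R'$ if and only if some $v_R \in X$ satisfies $v = u$ (matching edge) or $vu \in E(G)$. Hence, setting $D \deff \{v \mid v_R \in X\}$, the condition $B' \subseteq N_{G'}(X)$ is equivalent to $D \cap N_G[u] \neq \emptyset$ for every $u \in V(G)$, i.e.~$D$ being a dominating set of $G$. Since $|D| = |X|$, the two instances are indeed equivalent.

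For the leafage bound, I reuse a minimum tree representation $(T, \mld)$ of $G$ (with at most $\ell$ leaves) and define a new family $\mld'$ on $V(G')$ by $\mld'(v_R) \deff \mld(v)$ and $\mld'(v_B) \deff \mld(v)$ for every $v$. The two duplicated copies of $\mld(v)$ trivially intersect, producing precisely the matching edge $v_R v_B$; and for $u \ne v$, the intersection criterion $\mld'(x) \cap \mld'(y) \ne \emptyset$ reduces to $\mld(u) \cap \mld(v) \ne \emptyset$, i.e.~to $uv \in E(G)$, which matches exactly the edge set of $G'$. Hence $(T, \mld')$ is a valid tree representation of $G'$ on the same underlying tree $T$, so $G'$ is chordal and $\lf(G') \le \ell$.

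The only mildly subtle point is realising the matching edge $v_R v_B$ inside the tree representation, and using the same subtree twice is the cleanest way to force this without introducing any new nodes or leaves. Everything else is routine: constructing $G'$ takes $O(n^2)$ time, no tree representation actually needs to be computed (its existence is invoked only to deduce the leafage bound for $G'$), so the whole reduction runs in polynomial time.
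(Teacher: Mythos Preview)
Your proposal is correct and follows essentially the same approach as the paper: duplicate each vertex into a red and a blue copy, keep the same adjacencies across and within copies plus the matching edges, and reuse the same subtree for both copies in the tree representation. The paper's proof is virtually identical (it explicitly lists all four edge types $v_Ru_R,\,v_Ru_B,\,v_Bu_R,\,v_Bu_B$, whereas your list omits $v_Bu_R$ literally but recovers it by symmetry over the pair $(u,v)$); your correctness and leafage arguments match the paper's.
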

\begin{proof}
  We construct $G'$ from $G$ as follows.
  For every vertex $v \in V(G)$,
  add two copies $v_R$ and $v_B$ to $V(G')$
  and add an edge $v_R v_B$ to $E(G')$.
  For every edge $uv \in E(G)$,
  add edges $v_R u_R$, $v_R u_B$, $v_B u_R$, and $v_B u_B$ to $E(G')$.
  This completes the construction of $G'$.
  Let $R' = \{v_R \mid v\in V(G)\}$ and $B' = \{v_B \mid v\in V(G)\}$.

  Suppose that the \DS instance has a solution $S \subseteq V(G)$.
  Then the set $S_R = \{ v_R \mid v\in S \}$,
  i.e.,\ $S_R$ contains the red version of each vertex in $S$,
  is a solution for the \RBDS instance:
  indeed, the blue vertices $v_B$ such that $v \notin S$
  are dominated since $S$ is a solution,
  and if $v \in S$, then $v_B$ is dominated because of the newly added edges.
  Conversely,
  if $S_R\subseteq R'$ is a solution for the \RBDS instance,
  then  it is easy to see that $S = \{ v \mid v_R \in S_R \}$ is a solution for the \DS instance.

  Finally note that a tree representation for $G'$ can be obtained from a tree representation 
  for $G$ by duplicating the model of each vertex, 
  and making the original model a model for the blue version of the vertex, and the copy a model for its red version. 
  In particular, the leafage of $G'$ is at most that of $G$. 
 \end{proof}
  
In the remainder of this section, we present an \FPT\ algorithm for \RBDS
when parameterized by the leafage $\ell$ of the input graph.
The algorithm consists of two parts.
In the first part, the algorithm constructs $2^{\calO(\ell)}$ many instances of a ``restricted version'' of the problem
such that the input instance is a \yes-instance
if and only if one of these newly created instances is a \yes-instance.
Moreover, the graphs in the newly created instances satisfy certain properties that allow us to design a fast algorithm.
See \cref{lemma:reduce-normalized-graph} for the formal statement.
In the second part (cf.\ \cref{lem:domSet:solvingNormalized}),
the algorithm solves the restricted version of \RBDS which is defined as follows.\\

  \vspace{1mm}
\noindent\fbox{
  \begin{minipage}{0.96\textwidth}
  \begin{tabular*}{\textwidth}{@{\extracolsep{\fill}}lr} \textsc{Restricted-Red-Blue Dominating Set (\RestRBDS)} \\ \end{tabular*}
  {\bf{Input:}} A chordal graph $G$, a partition $(R, B)$ of $V(G)$, an integer $k$ and tree representation $(T, \calM)$ of $G$ such that
\begin{itemize}
\item for every vertex in $G$, its model contains at most one branching node of $T$, and
\item  for all branching nodes $\gamma \in V(T)$,
    there are either only red $\gamma$-vertices
    or only blue $\gamma$-vertices.
\end{itemize}
  {\bf{Question:}} Does there exist a set $D \subseteq R$ of size at most $k$
such that $B \subseteq N(D)$?
  \end{minipage}
  }
  \vspace{1mm}

\subsection{Constructing \RestRBDS Instances}
In this section, we prove the following result.

\begin{lemma}
  \label{lemma:reduce-normalized-graph}
  Let $\calI=(G, (R, B), k)$ be an instance of \RBDS
  where $G$ is a chordal graph of leafage at most $\ell$.
We can construct, in time $2^{\calO(\ell)} \cdot n^{\Oh(1)}$, a collection $\{\calI_{i} = (G_{i}, (R_{i}, B_{i}), k) \mid i  \in [ 2^{\calO(\ell)}]\}$ of \RestRBDS instances such that 
  \begin{itemize}
    \item
     for every $i \in [2^{\calO(\ell)}]$, $G_{i}$ is a chordal graph of leafage at most $2\ell$, and 
    \item $\calI$ is a \yes-instance of \RBDS if and only if at least one of the instances in the collection is a \yes-instance of \RestRBDS. 
    \end{itemize}
\end{lemma}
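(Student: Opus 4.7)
The plan is to branch on a $2$-coloring of the at most $\ell-1$ branching nodes of the tree $T$, so that the $2^{\ell-1}=2^{\mathcal{O}(\ell)}$ guesses cover every possibility for what the sought dominating set $D$ does at each branching node. Concretely, I would enumerate every function $\sigma\colon V_{\geq 3}(T)\to\{R,B\}$ with the intended meaning $\sigma(\gamma)=R$ iff some vertex of $D$ has $\gamma$ in its model, while $\sigma(\gamma)=B$ iff no vertex of $D$ does. For the correct $\sigma$ induced by an optimum $D$, both assertions hold by construction.

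Given a guess $\sigma$, I construct $\calI_\sigma$ in three steps. First, delete every blue vertex $u$ for which $\mathcal{M}(u)$ contains some $\gamma$ with $\sigma(\gamma)=R$; under the guess, such $u$ is dominated by the promised $D$-vertex at $\gamma$ and need not be covered. Second, delete every red vertex $v$ for which $\mathcal{M}(v)$ contains some $\gamma$ with $\sigma(\gamma)=B$; such $v$ cannot lie in $D$. Third, to make the guess a hard constraint rather than just a promise, for each $\gamma$ with $\sigma(\gamma)=R$ I attach a fresh leaf $\lambda_\gamma$ to $\gamma$ in $T$, extend $\mathcal{M}(v)$ to contain $\lambda_\gamma$ for every surviving red $\gamma$-vertex $v$, and add a new blue \emph{enforcer} $e_\gamma$ with $\mathcal{M}(e_\gamma)=\{\lambda_\gamma\}$: only the red $\gamma$-vertices are adjacent to $e_\gamma$, so dominating $e_\gamma$ forces $D$ to include one. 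After these three steps, every branching node of the modified tree has models of only one color incident to it (condition~2 holds), and at most $|V_{\geq 3}(T)|\le \ell-1$ new leaves have been added, so the leafage of the new tree is at most $2\ell-1\le 2\ell$.

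To enforce condition~1, I further refine the models of any surviving vertex whose model still contains several branching nodes. Any surviving red $v$ has only $R$-nodes in $\mathcal{M}(v)$ (else $v$ would have been deleted), and any surviving blue $u$ has only $B$-nodes in $\mathcal{M}(u)$. For such a red $v$ with $R$-nodes $\gamma_1,\dots,\gamma_t$, I would introduce local copies $v^{(1)},\dots,v^{(t)}$ whose models are the connected components of $\mathcal{M}(v)$ around each $\gamma_i$ (obtained by deleting the other $\gamma_j$'s). The key observation making adjacency redistribute cleanly is that every remaining blue neighbor of $v$ has no $R$-node in its model, so its intersection with $\mathcal{M}(v)$ is confined to a single component and becomes the neighbor of exactly one copy. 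A symmetric splitting handles blue vertices, relying on the dual observation that a red dominator has no $B$-node in its model.

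The main obstacle will be proving the equivalence between $\calI$ and the family $\{\calI_\sigma\}$ while preserving the budget $k$. The forward direction is fairly direct: set $\sigma(\gamma)=R$ iff $D\cap\ver(\gamma)\neq\emptyset$, then transfer each $v\in D$ to the copy needed to dominate the corresponding enforcer. The reverse direction is where I expect the technical work: I need to recover a small $D$ from a small solution $D_\sigma$ of $\calI_\sigma$, reconciling the fact that several copies of the same original vertex, together with several enforcers, should not be double-counted. The plan is to charge each enforcer $e_\gamma$ to a single red $\gamma$-vertex of $\calI$ and to merge copies of a single original vertex appearing in $D_\sigma$ back to that vertex when projecting the solution, exhibiting a dominating set of $\calI$ of size at most $|D_\sigma|\le k$.
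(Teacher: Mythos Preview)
Your branching on a $2$-coloring $\sigma$ of the branching nodes is natural and indeed yields condition~2 cleanly, but the subsequent \emph{splitting} step that you use to enforce condition~1 breaks the budget in the forward direction. Concretely, suppose the optimum $D$ contains a red vertex $v$ whose model contains two branching nodes $\gamma_1,\gamma_2$ (so both get color $R$ under the correct $\sigma$), and suppose $v$ is the \emph{only} $D$-vertex whose model contains $\gamma_1$ and also the only one whose model contains $\gamma_2$. After you split $v$ into $v^{(1)},v^{(2)}$, the copy $v^{(1)}$ is adjacent to $e_{\gamma_1}$ but not to $e_{\gamma_2}$, and $v^{(2)}$ is adjacent to $e_{\gamma_2}$ but not to $e_{\gamma_1}$. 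To dominate both enforcers you are forced to take both copies, so any feasible $D_\sigma$ has size at least $|D|+1>k$. (A minimal witness: take a tree with two branching nodes $\gamma_1,\gamma_2$ adjacent to each other, a single red $v$ with $\mld(v)\supseteq\{\gamma_1,\gamma_2\}$ reaching two leaves $a,c$, blue singletons at $a$ and $c$ dominated only by $v$, and a few extra isolated red/blue leaf pairs to pad the budget; the correct $\sigma=(R,R)$ produces an instance requiring $k+1$, and all other $\sigma$ delete $v$ and leave an undominated blue.) The same obstruction applies to blue vertices spanning several $B$-nodes: a single red dominator covering that blue vertex from one side cannot simultaneously cover the split copies sitting on the other sides.

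The paper circumvents exactly this issue by never splitting vertices. Instead of guessing the full coloring in one shot, it branches locally and iteratively, tracking a potential $\mu=\lf+2(f_T+f_r+f_b)$. The crucial difference is in how it enforces condition~1: for a pair of \emph{consecutive} branching nodes $\alpha,\beta$ spanned by some red vertex, it branches on whether the solution uses such a spanning red vertex; in the ``yes'' branch it \emph{contracts} the $\alpha$--$\beta$ path (so the spanning vertex now sits on one fewer branching node, without being split), while in the ``no'' branch it simply deletes those spanning reds. An analogous branching handles blue vertices spanning two consecutive branching nodes. Because each step either contracts the tree or deletes vertices, no single solution vertex ever needs to be duplicated, and the budget $k$ is preserved in every branch. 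Your coloring idea could likely be salvaged if, instead of splitting, you also guessed which \emph{consecutive} pairs of equi-colored branching nodes are spanned by a solution-relevant vertex and contracted accordingly---but that is essentially the paper's argument.
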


\begin{proof}
Let $G$ be a chordal graph and let $(T,\calM)$ be a tree representation of $G$. 
We define the following functions.
\begin{itemize}
\item Let $f_T(G)$ denote the number of branching nodes $\gamma \in V(T)$ such that there exist both a red vertex and a blue vertex whose models contain $\gamma$.
\item Let $f_r(G)$ denote the number of pairs of consecutive branching nodes $\alpha, \beta$ in $T$ (that is, no node on the unique path in $T$ from $\alpha$ to $\beta$ is a branching node) such that there is red vertex whose model contains both $\alpha$ and $\beta$.
\item Similarly, let $f_b(G)$ denote the number of pairs of consecutive branching nodes $\alpha, \beta$ in $T$ such that there is blue vertex whose model contains both $\alpha$ and $\beta$.
\end{itemize}
We further define $\mu(G) := \lf(G) + 2 \cdot (f_T(G) + f_r(G) + f_b(G))$.
Note that, by definition, $\mu(G) \geq \lf(G)$.
We design a polynomial-time branching algorithm whose measure $\mu$ decreases in each branch.
We first show that if $\mu(G) = \lf(G)$ then $(G,(R,B),k)$ is in fact an instance of \RestRBDS
and then show how the branching algorithm proceeds.

Assume therefore that $\mu(G) = \lf(G)$.
Then $f_T(G) = f_r(G) = f_b(G) = 0$ by definition.
However, when $f_T(G) = 0$, then, by definition, for every branching node $\gamma \in V(T)$, all the vertices containing $\gamma$ in their model are either red or blue;
and when $f_r(G) = f_b(G) = 0$ then, considering the fact that every model is a subtree in $T$, for every vertex in $G$, its model contains at most one branching node in $T$.
Therefore if $\mu(G) = \lf(G)$, then $(G, (R, B), k)$ is also an instance of \RestRBDS. 

Now assume that $\mu(G) > \lf(G)$.
Then $f_T(G) + f_r(G) + f_b(G)  > 0$.
We consider the following three exhaustive cases.

\paragraph*{Case-I.} \emph{$f_T(G) > 0$.}
Let $\gamma$ be a branching node in $T$ such that there is both a red-vertex and a blue-vertex whose models contain $\gamma$.
Suppose that $\calI$ is a \yes-instance of \RBDS\ and let $D$ be a solution.
Consider first the case where $D$ includes a red vertex whose model contains $\gamma$.
In this case, we return the instance $\calI_1 = (G_1, (R_1, B_1), k)$ which is obtained as follows.
\begin{itemize}
\item Initialize $V(G_1) = V(G)$, $R_1 = R$,  $B_1 = B$.
\item Let $T_1$ be the tree obtained from $T$ by adding a node $\delta$ and making it adjacent to $\gamma$ only.
Note that $V(T_1) \setminus \{\delta\} \subseteq V(T)$.
\item For every red vertex $v \in V(G_1)$ such that $\gamma \in \mld(v)$, add $\delta$ to its model, i.e.,
$\mld_1(v) = \mld(v) \cup \{\delta\}$. 
\item For every blue vertex $v \in V(G_1)$ such that $\gamma \in \mld(v)$, delete $v$ from $V(G_1)$.
\item Add a new blue vertex $x$ to $V(G_1)$ and to $B_1$ with $\mld_1(x) = \{\delta\}$.
\item For every (red or blue) vertex $v \in V(G)$ such that $\gamma \not\in \mld(v)$, 
define $\mld_1(v_1) =  \mld(v)$.
\end{itemize}
It is easy to verify that $(T_1, \mld_1)$ is a tree representation of $G_1$ and that $T_1$ has exactly one more leaf than $T$, i.e., $\lf(G_1) \le \lf(G) + 1$.
However, since we have deleted all the blue vertices whose models contained $\gamma$, $f_T(G_1) = f_T(G) - 1$.
As the other parts of the measure do not change, $\mu(G_1) < \mu(G)$.

In the second case where no vertex in $D$ contains $\gamma$ in its model,
we return the instance $\calI_2 = (G_2, (R_2, B), k)$
where $G_2, R_2$ are obtained from $G, R$, respectively,
by deleting red vertices whose model contains $\gamma$.
It is easy to verify that $\mu(G_2) < \mu(G)$.

If $\calI$ is a \yes-instance,
then at least one of $\calI_1$ or $\calI_2$ is a \yes-instance as these two branches are exhaustive.
If $\calI_1$ is a \yes-instance, then any optimum solution must include a red $\gamma$-vertex because of the newly added vertex $x$.
As $R_2 \subseteq R$, if $\calI_2$ is a \yes-instance,
then $\calI$ is a \yes-instance.
Hence, this branching step is correct.

\paragraph*{Case-II.} \emph{$f_T(G) = 0$ and $f_r(G) > 0$.}
Let $\alpha, \beta$ be two consecutive branching nodes in $T$ such that there is a red vertex whose model contains both $\alpha$ and $\beta$.
Suppose that $\calI$ is a \yes-instance of \RBDS\ and let $D$ be a solution.
Consider the case where $D$ includes a red vertex whose model contains both $\alpha$ and  $\beta$.
In this case, we return the instance $\calI_1 = (G_1, (R_1, B_1), k)$ which is obtained as follows.
\begin{itemize}
\item Initialize $V(G_1) = R_1 = B_1 = \emptyset$.
\item Let $T_1$ be the tree obtained from $T$ by contracting the unique path $P_{\alpha\beta}$ from $\alpha$ to $\beta$ in $T$ and let $\gamma_{\alpha\beta}$ be the node resulting from this contraction.
Add a node $\delta$ to $T_1$ and make it adjacent to $\gamma_{\alpha\beta}$ only.
Note that $V(T_1) \setminus \{\gamma_{\alpha\beta}, \delta\} \subseteq V(T)$.
\item For every red vertex $v \in V(G)$ such that $\mld(v) \cap V(P_{\alpha\beta}) \neq \emptyset$, add a red vertex $v_1$ to $V(G_1)$ (and to $R_1$) with $\mld_1(v_1) = (\mld(v) \setminus V(P_{\alpha\beta})) \cup \{\gamma_{\alpha\beta}, \delta\}$.
\item Add a new blue vertex $x$ to $V(G_1)$ with $\mld_1(x) = \{\delta\}$.
\item For every (red or blue) vertex $v \in V(G)$ such that $\mld(v) \cap V(P_{\alpha\beta}) = \emptyset$, add $v_1$ to $G_1$ (and to, respectively, either $R_1$ or $B_1$) with $\mld_1(v_1) =  \mld(v)$. 
\end{itemize}
Note that for every blue vertex $v \in V(G)$ such that $\mld(G) \cap V(P_{\alpha\beta}) \neq \emptyset$, there is no corresponding blue vertex in $G_1$.
It is easy to verify that $(T_1, \mld_1)$ is a tree representation of $G_1$ and that $T_1$ has one more leaf than $T$ which implies $\lf(G_1) \le \lf(G)  + 1$.
Since we have contracted the path $P_{\alpha\beta}$ to obtain the node $\gamma_{\alpha\beta}$, 
$f_r(G_1) < f_r(G)$.
As the other parts of the measure do not change, $\mu(G_1) < \mu(G)$.

In the second case where no vertex in $D$ contains both $\alpha$ and $\beta$ in its model,
we return an instance $\calI_2 = (G_2, (R_2, B), k)$ where $G_2, R_2$ are obtained from $G,R$, respectively,
by deleting red vertices whose model contains both $\alpha$ and $\beta$.
It is easy to verify that $\mu(G_2) < \mu(G)$.
We argue as in the previous case for the correctness of this branching steps.

\paragraph*{Case-III.} \emph{$f_T(G) = 0$ and $f_b(G) > 0$.}
Let $\alpha, \beta$ be two consecutive branching nodes in $T$ such that there is a blue vertex whose model contains both $\alpha$ and $\beta$.
Note that since $f_T(G) = 0$, for every red vertex $v \in V(G)$ such that $\mld(v) \cap V(P_{\alpha\beta}) \neq \emptyset$, in fact $\mld(v) \subseteq V(P_{\alpha\beta}) \setminus \{\alpha, \beta\}$.
Suppose that $\calI$ is a \yes-instance of \RBDS and let $D$ be a solution.
Consider first the case where $D$ includes a red vertex whose model is in $V(P_{\alpha\beta})\setminus \{\alpha, \beta\}$.
In this case, we return the instance $\calI_1 = (G_1, (R, B_1), k)$ where $G_1$, $B_1$ are obtained from $G$ and $B$ as follows.
\begin{itemize}
\item Delete all the blue vertices whose model contains both $\alpha$ and $\beta$.
\item Add a blue vertex $x$ to $V(G_1)$ (and to $B_1$) with $\mld(x) = V(P_{\alpha\beta})\setminus \{\alpha, \beta\}$. 
\end{itemize}
It is easy to verify that $(T, \mld)$ is a tree representation of $G_1$ and $f_b(G_1) < f_b(G)$.
As the other parts of the measure do not change, $\mu(G_1) < \mu(G)$.

In the second case where there is no vertex in $D$ whose model is in $V(P_{\alpha\beta}) \setminus \{\alpha, \beta\}$, we consider the following two subcases.
If there is a blue vertex $v$ such that $\mld(v) \subseteq V(P_{\alpha\beta})$, then we return a trivial \no-instance.
Otherwise, we return the instance $\calI_2 = (G_2, (R_2, B_2), k)$ which is constructed as follows.
\begin{itemize}
\item Initialize $V(G_2) = R_2 = B_2 = \emptyset $.
\item Let $T_2$ be the tree obtained from $T$ by contracting the path $P_{\alpha\beta}$ from $\alpha$ to $\beta$ in $T$ and let $\gamma_{\alpha\beta}$ be the node resulting from this contraction.
Note that $V(T_2) \setminus\{\gamma_{\alpha\beta} \} \subseteq V(T)$.
\item For every (red or blue) vertex $v \in G$ such that $\mld(v) \cap V(P_{\alpha\beta}) = \emptyset$, add a vertex $v_2$ to $G_2$ (and to, respectively, either $R_2$ or $B_2$) with $\mld_2(v_2) = \mld(v)$.
\item For every blue vertex $v \in V(G)$
such that $\mld(v) \cap V(P_{\alpha\beta}) \neq \emptyset$,
add a blue vertex $v_2$ to $V(G_2)$ (and to $B_2$) with
$\mld_2(v_2) = (\mld(v_2) \setminus V(P_{\alpha\beta})) \cup \{\gamma_{\alpha\beta}\}$.
\end{itemize}
Note that for any red vertex $v \in V(G)$ such that $\mld(v) \subseteq V(P_{\alpha\beta}) \setminus \{\alpha, \beta\}$, there is no corresponding red vertex in $G_2$.
It is easy to verify that $(T_2, \mld_2)$ is a tree representation of $G_2$. 
Furthermore, the number of leaves of $T_2$ is the same as $T$ and $f_b(G_2) < f_b(G)$.
As the other parts in the measure do not change, $\mu(G_2) < \mu(G)$.

The correctness of this branching step follows from the same arguments
as in the previous cases and the fact that in the second case,
since there is no red vertex whose model intersects $V(P_{\alpha\beta})$,
it is safe to contract that path.

\paragraph*{Finishing the Proof.}
The correctness of the overall algorithm follows from the correctness of branching steps in the above three cases.
To bound its running time and the number of instances it outputs, note that $f_T(G) + f_r(G) + f_b(G) \le 3\cdot \lf(G)$ as these functions either count the number of branching nodes or the unique paths containing exactly two (consecutive) branching nodes.
\end{proof}

\subsection{Solving an instance of \RestRBDS}
\label{sec:domSet:leafage:noramlized}

In this section, we present an algorithm to solve \RestRBDS.
Formally, we prove the following lemma.

\begin{lemma}
  \label{lem:domSet:solvingNormalized}
{\sc \RestRBDS} admits an algorithm running in time 
$2^{\Oh(\ell)} \cdot n^{\Oh(1)}$.
\end{lemma}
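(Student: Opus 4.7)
The plan is to solve \RestRBDS\ by iteratively processing the branching nodes of the representation tree $T$ in a bottom-up order. Root $T$ at any leaf, and in each iteration let $\gamma$ be a branching node of maximum depth. By this choice, the subtree $T_\gamma$ contains no branching node other than $\gamma$ itself, so $T_\gamma$ is a spider with $\gamma$ at the center and at most $\ell$ disjoint paths (``legs'') hanging below, since $T$ has at most $\ell$ leaves in total. By the \RestRBDS\ premise, the $\gamma$-vertices form a monochromatic set: all red (Case R), all blue (Case B), or empty.

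The next step will be to focus on the subgraph of $G$ induced by vertices whose model intersects $T_\gamma$. In Case R, every blue vertex whose model lies in $T_\gamma$ must lie entirely inside a single leg, because it cannot contain $\gamma$ (the $\gamma$-vertices are red) and, having at most one branching node in its model, it cannot cross into another leg. The relevant red dominators inside $T_\gamma$ then split into two families: red vertices whose model lies in one leg, and red $\gamma$-vertices that stretch into one or more legs. I would first run a standard interval-graph domination sweep on each of the at most $\ell$ legs, which either resolves that leg's blue vertices via within-leg red dominators or produces a single ``needy'' blue vertex near $\gamma$ that must instead be dominated by some red $\gamma$-vertex. Choosing red $\gamma$-vertices to cover these needy vertices is a \SetCov\ instance whose universe has size at most $\ell$ (at most one needy vertex per leg), and is therefore solvable in $2^{\Oh(\ell)} \cdot n^{\Oh(1)}$ time. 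Case B is handled dually: since no red $\gamma$-vertex exists, each blue $\gamma$-vertex must be dominated by a red vertex living in a single leg of $\gamma$ (or above $\gamma$), and the corresponding selection is a \HitSet\ instance on a universe of size at most $\ell$, again solvable in $2^{\Oh(\ell)} \cdot n^{\Oh(1)}$ time.

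After each local resolution, I would commit to the chosen dominators, remove the resolved vertices, and contract $T_\gamma$ into a single non-branching node so that the residual instance still satisfies the \RestRBDS\ premise but has one fewer branching node. When no branching node remains, $T$ is a path and $G$ is an interval graph, at which point a final polynomial-time sweep finishes the algorithm. Since $T$ has at most $\ell - 1$ branching nodes, there will be at most $\ell$ iterations and the total running time will be $\Oh(\ell) \cdot 2^{\Oh(\ell)} \cdot n^{\Oh(1)} = 2^{\Oh(\ell)} \cdot n^{\Oh(1)}$.

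The hard part will be justifying the correctness of the greedy commitment at each iteration: that the locally optimal \SetCov/\HitSet\ solution can always be extended to a globally optimum solution. I expect this to follow from an exchange argument exploiting the fact that a vertex outside $T_\gamma$ can interact with a vertex inside $T_\gamma$ only through a $\gamma$-vertex; the monochromaticity of the $\gamma$-vertices restricts these interactions enough that any optimum solution can be rerouted to agree with the local choice without increasing in size.
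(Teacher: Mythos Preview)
Your overall architecture---processing the deepest branching node $\gamma$, reducing to \SetCov/\HitSet\ with universe bounded by $\ell$, committing, and iterating---matches the paper's. The gap is precisely the step you flag as ``the hard part'': it is \emph{not} true that an arbitrary minimum local \SetCov\ solution can always be extended to a global optimum via an exchange argument. Here is a small Case~R instance that breaks the naive commitment. Let $\gamma$ have three leaf children $c_1,c_2,c_3$ and parent $p$ (with $\beta$ the next branching node above $p$). Take blue vertices $b_j$ with model $\{c_j\}$ for $j=1,2,3$ and a blue vertex $b$ with model $\{p\}$. Take red $\gamma$-vertices $r_A$ with model $\{c_1,\gamma,c_2\}$, $r_B$ with model $\{c_2,\gamma,c_3\}$, and $r_D$ with model $\{c_1,\gamma,c_3,p\}$; no two models are nested, so Reduction Rule~3.6 does not fire, and all $\gamma$-vertices are red. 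The local \SetCov\ (cover $b_1,b_2,b_3$) has optimum size~$2$, attained by $\{r_A,r_B\}$, $\{r_A,r_D\}$, or $\{r_B,r_D\}$. If your solver happens to output $\{r_A,r_B\}$ and you commit, then $b$ remains uncovered and costs one more vertex later, giving total $3$. But $\{r_A,r_D\}$ already covers $b$, so the optimum is~$2$. No ``rerouting'' fixes this: once you have committed to $\{r_A,r_B\}$ and moved on, you have lost.

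The paper's remedy is a look-ahead along the $\gamma$--$\beta$ path. In Case~R, list the blue vertices $b_1,\dots,b_d$ whose models meet that path, ordered by their endpoint nearest $\gamma$, and for each $i$ compute $\OPT_i$, the minimum \SetCov\ that covers the needy leg-vertices \emph{and} $b_i$ (hence also $b_1,\dots,b_{i-1}$). Then commit to $\OPT_q$ for the largest $q$ with $|\OPT_q|=|\OPT_0|$: among all minimum local covers, pick one that reaches as far up the path as possible for free. An exchange argument now goes through (this is Lemma~3.9). Case~B is dual: enumerate the red vertices $r_1,\dots,r_d$ on the $\gamma$--$\beta$ path, for each $i$ solve the \HitSet\ instance assuming $r_i$ has been selected, and again commit to the solution with the largest such index at no extra cost (Lemma~3.12). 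Each of these $O(n)$ auxiliary instances is still a \SetCov/\HitSet\ with universe size $\le \ell+1$, so the running time stays $2^{\Oh(\ell)}\cdot n^{\Oh(1)}$. Your plan becomes correct once you add this tie-breaking look-ahead.
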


We first state some easy reduction rules before we handle two cases based on whether the farthest branching node\footnote{We assume that the tree in the tree representation is rooted and thus, by farthest branching node, we mean farthest from the root.} is contained only in the models of red vertices or blue vertices.
We present \Cref{greedy-select:red} and \Cref{greedy-select:blue} to handle these cases.
The proof of the lemma follows from \Cref{lemma:greedy-select:red}, \Cref{lemma:greedy-select:blue} and the fact that each application of the greedy selection procedure deletes some vertices in the graph.

We first introduce some notations.
Recall that an instance of \RestRBDS
contains a chordal graph $G$, a partition $(R, B)$ of $V(G)$, an integer $k$ and tree representation $(T, \calM)$ of $G$ such that
for every vertex in $G$, its model contains at most one branching node of $T$, and
for all branching nodes $\gamma \in V(T)$,
there are either only red $\gamma$-vertices
or only blue $\gamma$-vertices.
We assume, without loss of generality, that the tree $T$ is rooted at node $\rr$.
Unless mentioned otherwise, $\alpha$ denotes the farthest branching node in $T$ from the root,
that is, each proper subtree of $T_\alpha$ is a path.
If there are more than one branching node that satisfy the property, we arbitrarily select one of them.
Let $\beta$ be the closest branching ancestor of $\alpha$, that is,
no internal node in the unique path from $\alpha$ to $\beta$ is a branching node in $T$.%
\footnote{
If $\alpha$ is the root of the tree,
then we can add an artificial new root $\beta$
which is not contained in the model of any vertex.
}
Recall that for a vertex $v \in V(G)$, we define $\topnode_\mld(v)$
as the node $\eta \in \mld(v)$ that is closest to the root.
Likewise if a leaf $\lambda$ is fixed, we define $\botnode^{\lambda}_\mld(v)$
as the node $\eta \in \mld(v)$ that is closest to $\lambda$.
For ease of notation, we omit $\lambda$ as it is always clear from the context.

\begin{definition}
  Let $\gamma$ be a node of the tree $T$.
  We define the following sets of vertices in $G$.
  \begin{itemize}
    \item $\interB{\gamma}, \interR{\gamma}, \interV{\gamma}$
    are the sets of, respectively, blue, red, all vertices $v \in V(G)$
    whose models intersect the tree rooted at $\gamma$,
    i.e.,\ $\mld(v) \cap V(T_\gamma) \neq \emptyset$.

    \item $\belowB{\gamma}, \belowR{\gamma}, \belowV{\gamma}$
    are the sets of, respectively, blue, red, all vertices $v \in V(G)$
    whose models are completely contained inside the tree rooted at $\gamma$,
    i.e.,\ $\mld(v) \subseteq V(T_\gamma)$.

    \item $\underB{\gamma}, \underR{\gamma}, \underV{\gamma}$
    are the sets of blue, red, all vertices $v \in V(G)$
    where the model is completely contained inside the tree rooted at $\gamma$ but does not contain $\gamma$,
    respectively,
    i.e.\ $\mld(v) \subseteq V(T_\gamma^\dag) = V(T_\gamma) \setminus \{\gamma\}$.

    \item $\containB{\gamma}, \containR{\gamma}, \containV{\gamma}$
    are the sets of, respectively, blue, red, all vertices $v \in V(G)$
    whose models contains $\gamma$,
    i.e.,\ $\gamma \in \mld(v)$.
  \end{itemize}
\end{definition}

\paragraph*{Simplifications.}
We first apply the following easy reduction rules
whose correctness readily follows from the definition of the problem.
It is also easy to see that the reduction rules can be applied in polynomial time and the reduced instance is also a valid instance of \RestRBDS.

\begin{reduction rule}
  \label{rr:domSet:blueNeedNeighbors}
  \label{rr:domSet:negativeBudget}
  If there is a blue vertex, which is not adjacent to a red vertex,
  or if $k < 0$,
  then return a trivial \no-instance.
\end{reduction rule}

\begin{reduction rule}~
  \label{rr:domSet:noTwins}
\begin{itemize}
\item If there are two blue vertices $u, v$ such that $\mld(u) \subseteq \mld(v)$, then delete $v$.
\item If there are two red vertices $u, v$ such that $\mld(u) \subseteq \mld(v)$, then delete $u$.
 \end{itemize}
\end{reduction rule}

Consider a blue vertex $v$ in $G$ whose model is contained in the subtree rooted at $\alpha$. 
Moreover, let $v$ be such a vertex for which $\topnode_\mld(v)$ is farthest from the root
and $v$ is not adjacent to a red vertex whose model contains $\alpha$.
Hence, there is a natural ordering amongst the red neighbors of $v$.
Note that such an ordering is not possible if some of its neighbors contain $\alpha$ in their models.
As any solution contains a red neighbor of $v$, it is safe to include its neighbor $v_r$ for which $\topnode_\mld(v_r)$ is closest to $\alpha$.

\begin{reduction rule}
  \label{rr:domSet:greedilySelectReds}
  Suppose that there is a blue vertex $v \in \underB{\alpha}$ 
  such that $\topnode_\mld(v)$ is farthest from the root and $v$ is not adjacent to any red $\alpha$-vertices.
  Moreover, amongst all the red neighbors of $v$, let $v_r$ be the node such that $\topnode_\mld(v_r)$ is closest to $\alpha$.
  Then, remove $v_r$ and all of its blue neighbors and 
  decrease $k$ by $1$.
\end{reduction rule}

We remark that the above reduction rule is applicable irrespective of the fact
whether either all $\alpha$-vertices are red or all $\alpha$-vertices are blue.

  \newcommand{\OPTlocal}{\OPT(\calI_\alpha)}
  \newcommand{\OPTglobal}{\OPT}

\paragraph*{Case-1: All the vertices that contain $\alpha$ in their models are red vertices.}
Let $\beta$ be the closest branching ancestor of $\alpha$.
Consider the blue vertices whose model intersect the path from $\alpha$ to $\beta$.
Note that there may not be any such blue vertex; however, we find it convenient to present an uniform argument. 
With a slight abuse of notation, let $b_1,\dots,b_d$ be these blue vertices
  ordered according to their endpoint in the direction of $\alpha$,
  that is, for $i<j$ we have either  $\botnode_\mld(b_i) =\botnode_\mld(b_j)$ or $\botnode_\mld(b_i)$ is closer to $\alpha$
  than $\botnode_\mld(b_j)$.
For each $i \in [d]$,
  we compute an optimal solution for dominating
  the vertices whose model is in the tree rooted at $\alpha$ (i.e., the vertices of $\underB{\alpha}$) and the vertex $b_i$
  while only using red $\alpha$-vertices.
  Formally, we want to compute an optimal solution for the following instance: $\calI_i \deff G[\interR{\alpha} \cup \belowB{\alpha} \cup \{b_i\}]$.
We also define instance 
$\calI_0 \deff G[\interR{\alpha} \cup \belowB{\alpha}]$ to handle the cases when there are no blue vertices whose model intersects the path from $\alpha$ to $\beta$ or 
when $b_1$ (and hence, the other blue vertices mentioned above) are not dominated by red $\alpha$-vertices in an optimum solution.
To simplify notation we set $\OPT_i \deff \OPT(\calI_i)$ in the following.
If $\calI_i$ is not defined, then we set $\OPT_i = \infty$.
Note that the solution $\OPT_i$ also dominates the blue vertices
  $b_1,\dots,b_{i-1}$ due to the ordering of the $b_i$s.
Hence, for any $i, j \in [0,d]$ such that $i < j$, we have $\abs{\OPT_i} \le \abs{\OPT_j}$.
We use this monotonicity to prove the following structural lemma.

\begin{lemma}
\label{lemma:red-alpha-greedy}
Let $q \in [0,d]$ be the largest value such that
$\abs{\OPT_q} = \abs{\OPT_0}$.
If there is a solution,
then there is an optimum solution containing $\OPT_q$.
\end{lemma}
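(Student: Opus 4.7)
The plan is to prove the lemma via an exchange argument: starting from an arbitrary optimum solution $\OPT$ of the whole instance, I will transform it into an optimum solution that contains $\OPT_q$. Let $S := \OPT \cap \containR{\alpha}$ and $U := \OPT \setminus S$. Since every vertex's model contains at most one branching node (by the definition of \RestRBDS) and all $\alpha$-vertices are red, the model of every vertex of $S$ is confined to $T_\alpha$ together with the path from $\alpha$ to a node strictly below $\beta$; consequently the blue vertices dominated by $S$ form a subset of $\underB{\alpha} \cup \{b_1,\ldots,b_d\}$, while every other blue vertex in $G$ is dominated by $U$.

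Assuming for the moment that $S$ dominates every vertex of $\underB{\alpha}$ (justified below), let $j^* \in [0,d]$ be the largest index with $b_{j^*} \in N(S)$, and $j^* = 0$ if no such index exists. The ordering of the $b_i$'s along the $\alpha$-$\beta$ path implies that any red $\alpha$-vertex reaching $b_{j^*}$ also reaches $b_1,\ldots,b_{j^*-1}$, so $S$ dominates $b_1,\ldots,b_{j^*}$ and is feasible for $\calI_{j^*}$; consequently $|S| \ge |\OPT_{j^*}|$. I then case-split on $j^*$ versus $q$. If $j^* \le q$, the maximality of $q$ yields $|\OPT_{j^*}| = |\OPT_0| = |\OPT_q|$, hence $|S| \ge |\OPT_q|$, and the set $\OPT^* := \OPT_q \cup U$ has size at most $|\OPT|$, contains $\OPT_q$, and still dominates every blue vertex since $\OPT_q$ covers $\underB{\alpha} \cup \{b_1,\ldots,b_q\}$, which is a superset of the blue vertices previously dominated by $S$, while $U$ handles the rest. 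If $j^* > q$, then by the choice of $q$ we have $|\OPT_{j^*}| > |\OPT_0| = |\OPT_q|$, hence $|S| \ge |\OPT_q| + 1$; picking any $u \in S$ with $b_{j^*} \in N(u)$ (which by the ordering also dominates $b_1,\ldots,b_{j^*-1}$), the set $\OPT^* := \OPT_q \cup \{u\} \cup U$ has size at most $|\OPT|$, contains $\OPT_q$, and still dominates every blue vertex previously dominated by $\OPT$.

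The main obstacle I anticipate is justifying the assumption that $S$ dominates all of $\underB{\alpha}$. A vertex $v \in \underB{\alpha}$ not dominated by $S$ must be dominated in $\OPT$ by some $w \in \underR{\alpha} \cap \OPT$, and \cref{rr:domSet:greedilySelectReds} only directly governs the blue vertex of farthest $\topnode$. I plan to resolve this by choosing $\OPT$ from among all optima so as to additionally maximize $|\OPT \cap \containR{\alpha}|$: finiteness of $\OPT_0$ (which is implicit in the setting of the lemma, as otherwise $\OPT_q$ is not meaningful) guarantees that every $v \in \underB{\alpha}$ has a red $\alpha$-neighbor in $G$, and if $S$ failed to dominate some such $v$, a local swap exchanging a suitable $\underR{\alpha}$-vertex of $\OPT$ for a red $\alpha$-vertex would contradict this maximality. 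The greedy choice of the red $\alpha$-vertex with $\topnode$ closest to $\alpha$ prescribed by \cref{rr:domSet:greedilySelectReds} is tailor-made to certify the correctness of such swaps, reducing the verification of the assumption to a routine check on the local structure of $T_\alpha$.
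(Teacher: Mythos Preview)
Your overall strategy---an exchange argument with case split on $j^* \le q$ versus $j^* > q$---is exactly the paper's approach, and your handling of both cases is correct \emph{once the assumption holds}. The problem is the justification of the assumption itself.

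You set $S := \OPT \cap \containR{\alpha}$ and then need $S$ to dominate all of $\underB{\alpha}$. Your proposed justification, that ``finiteness of $\OPT_0$ guarantees that every $v \in \underB{\alpha}$ has a red $\alpha$-neighbor,'' is false. Finiteness of $\OPT_0$ only says that the vertices of $\underB{\alpha}$ can be dominated by $\interR{\alpha}$, not by $\containR{\alpha}$. Concretely, take $\alpha$ with two legs; on leg~1 place a blue vertex $b_1$ far down and a red $\alpha$-vertex $r_1$ whose model covers that whole leg; on leg~2 place a blue vertex $b_2$ close to $\alpha$ whose only red neighbor is some $r_2 \in \underR{\alpha}$ (model entirely on leg~2, not containing $\alpha$). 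All reduction rules are satisfied: \cref{rr:domSet:greedilySelectReds} checks only the \emph{globally} deepest blue vertex in $\underB{\alpha}$, which is $b_1$ and which does have a red $\alpha$-neighbor. Yet $b_2$ has no red $\alpha$-neighbor, $\OPT_0 = \{r_1,r_2\}$ is finite, and no swap can move $r_2$ into $\containR{\alpha}$ without losing $b_2$. So your maximality argument cannot force $S$ to cover $\underB{\alpha}$, and the crucial inequality $|S| \ge |\OPT_{j^*}|$ fails ($|S|=1$, $|\OPT_0|=2$).

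The paper sidesteps this entirely by taking $S := \OPT \cap \interR{\alpha}$ instead of $\OPT \cap \containR{\alpha}$. With this choice, $S$ \emph{automatically} dominates $\underB{\alpha}$: any red vertex adjacent to some $v \in \underB{\alpha}$ has a model intersecting $T_\alpha$ and hence lies in $\interR{\alpha}$. Since vertices of $\underR{\alpha}$ cannot reach any $b_i$, the largest $j$ with $b_j \in N(S)$ is witnessed by a red $\alpha$-vertex anyway, so your ordering argument for the $b_i$'s still goes through verbatim. Replacing your $S$ by $\OPT \cap \interR{\alpha}$ (and correspondingly $U := \OPT \setminus \interR{\alpha}$) removes the obstacle and the rest of your write-up becomes a correct proof, essentially identical to the paper's.
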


  \begin{proof}
Let $\OPT$ be an optimum solution of $(G, (R, B), k)$.
Let $S$ denote the collection of vertices in $\OPT$ whose model contains nodes in the subtree rooted at $\alpha$, 
i.e., $S \deff \OPT \cap \interR{\alpha}$.
We claim that  we can replace $S$ by a super-set $S'$ of $\OPT_q$
    of equal size to obtain another solution.

    Let $j\in[0,d]$ be the largest integer such that $b_j$ is dominated by some vertex in $S$.
If $j \le q$,  then by our choice of $q$, $\abs{S} = \abs{\OPT_q}$.
    By the definition of the $\calI_i$s,
    we get that $\OPT_q$ is also a solution for $\calI_i$.
    Hence, we can replace $S$ by $\OPT_q$
    to get another optimal solution.
Suppose therefore that  $j > q$.
By our choice of $q$, we have $\abs{S} > \abs{\OPT_q}$.
    Let $r_j$ be the red $\alpha$-vertex
    with $\topnode_\mld(r_j)$ closest to $\beta$
    such that $b_j$ is a neighbor of $r_j$.
    Such a vertex exists, as by assumption, $S$ contains one of these vertices which dominates $b_j$.
    Then we replace $S$ by $S' = \OPT_q \cup \{ r_j \}$.
    As $\abs{S} > \abs{\OPT_q}$, we have $\abs{S'} \le \abs{S}$.
   	Moreover, observe that $S' \cup \OPT \setminus S$ is still a solution as 
    all vertices in $\underB{\alpha}$ and the vertices $b_1,\dots,b_q$
    are dominated by some vertex in $\OPT_q$, 
    vertex $r_j$ dominates the vertices $b_{q+1},\dots,b_{j}$
    and, by the choice of $j$, the vertices $b_{j+1},\dots,b_d$
    are dominated by some vertex not contained in~$S$.
  \end{proof}   
  
 We devise a greedy selection step based on the above lemma.
 
  \begin{greedy select}
  \label{greedy-select:red}
   Let $q \in [0,d]$ be the largest value such that
    $\abs{\OPT_q} = \abs{\OPT_0}$.
Include the vertices of $\OPT_q$ in the solution, i.e., delete the red vertices in $\OPT_q$, the blue vertices that are adjacent to vertices in $\OPT_q$, and
    decrease $k$ by $\abs{\OPT_q}$.
  \end{greedy select}

 \begin{lemma}
 \label{lemma:greedy-select:red}
 \Cref{greedy-select:red} step is correct and can be completed it time 
 $2^{\calO(\ell)} \cdot \polyn$.
 \end{lemma}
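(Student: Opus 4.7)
Correctness of \Cref{greedy-select:red} is immediate from \Cref{lemma:red-alpha-greedy}: some optimum solution of $(G,(R,B),k)$ contains all of $\OPT_q$, so we may commit to it. Deleting the red vertices of $\OPT_q$ together with the blue vertices they dominate, and decrementing $k$ by $|\OPT_q|$, therefore yields an equivalent instance. The deletions preserve the tree representation and the two structural requirements of \RestRBDS, so the reduced instance remains a valid input for the algorithm.

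The only nontrivial task for the running time is the computation of $|\OPT_0|,\ldots,|\OPT_d|$ (together with an explicit witness for $\OPT_q$). Each instance $\calI_i$ is a Red-Blue Dominating Set problem on $G[\interR{\alpha} \cup \belowB{\alpha} \cup \{b_i\}]$, whose natural tree representation is the subtree $T_\alpha$, augmented by the portion of the path from $\alpha$ to $\beta$ intersecting $\mld(b_i)$. Since $\alpha$ is the farthest branching node from the root, $T_\alpha$ is a subdivided star whose at most $\ell$ branches correspond to the leaves of $T$ that descend through $\alpha$. My plan is to solve each $\calI_i$ in $2^{\calO(\ell)} \cdot \polyn$ time and sum over the $d+1 \leq n+1$ values of $i$.

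To solve a single $\calI_i$, I split the red candidates into \emph{$\alpha$-vertices}, whose models contain $\alpha$, and \emph{branch-vertices}, whose models lie entirely inside a single branch of $T_\alpha$. Within each branch, an interval-like greedy sweep produces, for every candidate ``threshold'' position, the minimum number of branch-vertices needed to dominate all blue vertices of $\underB{\alpha}$ lying beyond that position. A simple exchange argument shows that some optimum solution has a layered form: a single threshold per branch beyond which only branch-vertices are used, while everything between $\alpha$ and this threshold (together with $b_i$ when $i > 0$) is dominated by $\alpha$-vertices. Choosing the set of $\alpha$-vertices then reduces to a \SetCov instance with universe of size at most $\ell + 1$ (the deepest still-uncovered position in each of the at most $\ell$ branches, plus possibly the position of $b_i$), which is solvable in $2^{\calO(\ell)} \cdot \polyn$ time.

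The main obstacle I anticipate is avoiding an $n^{\calO(\ell)}$ blow-up from enumerating threshold vectors: a priori one must fix a threshold independently in each of the $\ell$ branches before invoking \SetCov. I plan to exploit monotonicity: once the set of $\alpha$-vertices is fixed, the optimal thresholds are forced (they coincide with the deepest point reached by the chosen $\alpha$-vertices in each branch), and conversely pulling a threshold towards $\alpha$ only relaxes the \SetCov demand. This lets the threshold vector be read off from the \SetCov solution rather than guessed, so a single \SetCov call per instance $\calI_i$ suffices, keeping the per-$i$ cost at $2^{\calO(\ell)} \cdot \polyn$ and the overall cost at $2^{\calO(\ell)} \cdot \polyn$ after summing over $i \in [0,d]$.
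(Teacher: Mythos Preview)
Your correctness argument is fine and matches the paper's (both defer to \Cref{lemma:red-alpha-greedy}). The running-time argument, however, has a genuine gap.

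You correctly identify the obstacle: enumerating threshold vectors across the at most $\ell$ branches would cost $n^{\calO(\ell)}$. But your proposed resolution via monotonicity is circular. You say the threshold vector can be ``read off from the \SetCov\ solution rather than guessed,'' yet the universe of your \SetCov\ instance itself depends on the threshold vector (it is ``the deepest still-uncovered position in each branch''). The observation that pulling a threshold towards $\alpha$ relaxes the \SetCov\ demand is true, but it simultaneously increases the branch-vertex cost, and nothing you write rules out that this trade-off is sometimes favourable. As stated, you have not specified a single well-defined \SetCov\ instance whose optimum equals $\OPT_i$.

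The paper avoids the whole threshold apparatus by invoking \Cref{rr:domSet:greedilySelectReds}, which has been exhaustively applied before \Cref{greedy-select:red}. Its inapplicability guarantees that for every child $\alpha'$ of $\alpha$, the deepest remaining blue vertex $v_{\alpha'}$ in that branch already has a red $\alpha$-vertex neighbour; since $\alpha$-vertices have prefix-shaped models in each branch, one $\alpha$-vertex reaching $v_{\alpha'}$ dominates all of $\belowB{\alpha'}$. Hence branch-vertices are never needed in an optimal solution of any $\calI_i$, all thresholds sit trivially at the leaves, and each $\calI_i$ is exactly one \SetCov\ instance with universe of size at most $\ell+1$ (one element per child of $\alpha$, plus $b_i$ when $i\neq 0$) and one set per red $\alpha$-vertex. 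This is the observation that makes the ``single \SetCov\ call'' legitimate and that your argument is missing.
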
 
  
  \begin{proof}
The correctness of the step follows directly from Lemma~\ref{lemma:red-alpha-greedy}.
In the remaining proof, we show how to compute, for every $i \in [0,d]$, $\OPT_i$ in time
 $\Oh(\ell \cdot \abs{R} \cdot 2^\ell \cdot n)$ by constructing an instance of \SetCov.
Before constructing  such an instance, we justify that only one
blue vertex (which is farthest from $\alpha$) is critical while constructing this \SetCov instance.

Let $\alpha'$ be a child of $\alpha$.
As $\alpha$ is a farthest branching node of $T$ from the root, the tree rooted at $\alpha'$ is a path.
Let $\lambda$ be the another endpoint of this path.
Consider a blue vertex $v_{\alpha'}$ whose model is contained in $T_{\alpha'}$, i.e., $v_{\alpha'} \in \belowB{\alpha'}$.
Moreover, suppose that $v_{\alpha'}$ is the vertex for which $\topnode_\mld(v_r)$ is farthest from $\alpha'$.
As Reduction Rule~\ref{rr:domSet:greedilySelectReds} is not applicable, there exists at least one red neighbor of $v_{\alpha'}$ which is an $\alpha$-vertex. 
Hence, an optimum solution can always include a red neighbor of $v_{\alpha'}$ which is also an $\alpha$-vertex.
This red $\alpha$-vertex also dominates all the blue vertices in $\belowB{\alpha'}$.

We now explain how to construct an instance $(U, \calF)$ of \SetCov.
For every child $\alpha'$ of $\alpha$, if the vertex $v_{\alpha'}$ mentioned in the previous paragraph exists, then add an element $u_{\alpha'}$ corresponding to it to $U$. 
When $i \neq 0$, add another element $u_i$  corresponding to $b_i$ to $U$. 
For every red $\alpha$-vertex $v$, we define set $S_v \subseteq U$ as the collection of elements corresponding to the blue vertices in $\calI_i$ that are adjacent to $v$.
This completes the construction of the instance.

It is easy to see the one-to-one correspondence between the optimum solutions of these two instances.
The running time of the algorithm follows from the known algorithms for \SetCov (see, for instance, \cite{DBLP:conf/wg/FominKW04}) and the fact that $\alpha$ has at most $\ell$ many children.
\end{proof}

  \renewcommand{\OPTlocal}{\OPT(\calI_\gamma)}
  \newcommand{\local}{\calI_\gamma}

\paragraph*{Case-2: All the vertices that contain $\alpha$ in their models are blue vertices.}
  Let $\beta$ be the closest branching ancestor of $\alpha$.
We consider two cases depending on whether there is a red vertex whose model intersects the path from $\alpha$ to $\beta$.
If there is no such red vertex, then we consider the graph induced by all the red vertices whose model is (properly) contained in the subtree rooted at $\alpha$ and the blue vertices whose model intersects the subtree rooted at $\alpha$.
Formally, we define $\calI_0 = G[\belowR{\alpha} \cup \interB{\alpha}]$.
  
Consider the other case and suppose that there are $d \ge 1$ many red vertices whose model intersects the path from $\alpha$ to $\beta$.
Let $r_1,\dots,r_d$ be these vertices ordered according to their endpoints in the direction of $\alpha$, that is, for $i < j$, we have either $\botnode_\mld(r_i) = \botnode_\mld(r_j)$ or 
$\botnode_\mld(r_i)$ is closer to $\alpha$ than $\botnode_\mld(r_j)$.
For each such red vertex $v_i$, we compute the optimal solution
  to dominate the vertices in $\interB{\alpha}$
  by vertices in $\belowR{\alpha}$ assuming that $v_i$ is already selected.
  Note that we only have to focus on the blue vertices in $\interB{\alpha}$
  which are not adjacent to $v_i$.
  Formally we define $
    \calI_i
      = G[\belowR{\alpha} \cup (\interB{\alpha}\setminus N[v_i])]$.
It is possible that the optimum solution does not include any of the vertices in $\{r_1, r_2, \dots, r_d\}$.
To handle this case, we define $
    \calI_{d + 1}
      = G[\belowR{\alpha} \cup \interB{\alpha}]$.
To simplify notation, we set $\OPT_i \deff \OPT(\calI_i)$ in the following.
Note that for the instance defined above, $R_i$ is same for every instance whereas $B_{i} \subseteq B_{i + 1}$ because of the ordering.
Hence, for any $i , j \in [d + 1]$ such that $i < j$, we have $|\OPT_i| \le |\OPT_j|$.
We use this monotonicity to prove the following structural lemma.

  \begin{lemma}
  \label{lemma:blue-alpha-greedy}
  If there is a red vertex whose model intersects the path from $\alpha$ to $\beta$, let $q \in [d + 1]$ be the largest value such that
    $\abs{\OPT_q} = \abs{\OPT_1}$.
    Otherwise, define $\OPT_q = \OPT_0$.   
    If there is a solution for the instance,
    then there is an optimum solution $\OPT$ such that
    $\OPT \cap \belowR{\alpha} = \OPT_q$.
  \end{lemma}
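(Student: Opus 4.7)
The plan is to mirror the exchange argument of \Cref{lemma:red-alpha-greedy} in the dual setting where every $\alpha$-vertex is blue. I first identify the only red vertices that can dominate blue vertices of $\interB{\alpha}$: since $\alpha$-vertices are all blue and every model contains at most one branching node, any red vertex whose model meets that of some blue $\alpha$-vertex must either lie in $\belowR{\alpha}$ (its model is strictly inside $T_\alpha$ and avoids $\alpha$) or be one of $r_1,\dots,r_d$ (its model intersects the $\alpha$-$\beta$ path but does not contain $\alpha$). Every other red vertex is separated from $\interB{\alpha}$ by the branching node $\alpha$, so it is irrelevant.

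The structural ingredient is that the sets $N[r_i] \cap \interB{\alpha}$ are nested. A blue $\alpha$-vertex $b$ is adjacent to $r_i$ precisely when $\mld(b)$ reaches $\botnode_\mld(r_i)$ along the $\alpha$-$\beta$ path; since the $r_i$ are ordered so that $\botnode_\mld(r_1)$ is closest to $\alpha$, this gives $N[r_1] \cap \interB{\alpha} \supseteq \cdots \supseteq N[r_d] \cap \interB{\alpha}$, and hence $|\OPT_1| \le \cdots \le |\OPT_{d+1}|$, which is the monotonicity used to define $q$.

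For the exchange, fix an optimum $\OPT^*$, let $S = \OPT^* \cap \belowR{\alpha}$, and let $j$ be the smallest index with $r_j \in \OPT^*$, setting $j = d{+}1$ if none exists. By nestedness, the blues in $\interB{\alpha}$ dominated by $\OPT^* \cap \{r_1,\dots,r_d\}$ are exactly those in $N[r_j] \cap \interB{\alpha}$, so $S$ is a feasible solution for $\calI_j$ and therefore $|S| \ge |\OPT_j|$. If $j \le q$, then $|\OPT_q| = |\OPT_1| \le |\OPT_j| \le |S|$ and replacing $S$ by $\OPT_q$ yields a solution of no larger size: the blues in $\interB{\alpha} \setminus N[r_j] \subseteq \interB{\alpha} \setminus N[r_q]$ are dominated by $\OPT_q$, while the blues in $N[r_j] \cap \interB{\alpha}$ remain dominated by $r_j \in \OPT^*$. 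If $j > q$ (so $q \le d$), then by the maximality in the definition of $q$ we have $|\OPT_j| > |\OPT_q|$, hence $|S| \ge |\OPT_q| + 1$, and I replace $S$ by $\OPT_q \cup \{r_q\}$; this is valid since $r_q \notin \OPT^*$ (because $j > q$) and it dominates every blue of $\interB{\alpha}$, as $r_q$ covers $N[r_q] \cap \interB{\alpha}$ while $\OPT_q$ covers the complement. In both cases $\OPT \cap \belowR{\alpha} = \OPT_q$, because $r_q$ lies outside $T_\alpha$ and so does not belong to $\belowR{\alpha}$. The ``otherwise'' branch (no red meets the $\alpha$-$\beta$ path) is immediate: here $S$ must already dominate all of $\interB{\alpha}$, so $|S| \ge |\OPT_0|$ and swapping $S$ for $\OPT_0$ works.

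The main obstacle I anticipate is the bookkeeping in the $j > q$ subcase: one must show both that the slack $|S| - |\OPT_q| \ge 1$ exactly pays for the extra vertex $r_q$, and that adding $r_q$ does not conflict with the rest of $\OPT^*$ (in particular that $r_q \notin \OPT^*$, so no double-counting occurs). Both verifications hinge on the monotonicity of $|\OPT_i|$ and the nestedness of $N[r_i] \cap \interB{\alpha}$, so establishing those two facts cleanly is the technical core of the argument.
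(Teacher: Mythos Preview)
Your proposal is correct and follows essentially the same exchange argument as the paper: define $S = \OPT^* \cap \belowR{\alpha}$, let $j$ be the smallest index with $r_j \in \OPT^*$, use that $S$ is feasible for $\calI_j$, and then replace $S$ by $\OPT_q$ (when $j \le q$) or by $\OPT_q \cup \{r_q\}$ (when $j > q$). Your write-up is in fact a bit more careful than the paper's---you explicitly treat the case $j = d+1$ (no $r_i$ in $\OPT^*$) and explicitly note $r_q \notin \OPT^*$ so the extra vertex is genuinely paid for---whereas the paper glosses over both points.
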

  
  \begin{proof}
If there is no red vertices whose model intersects the path from $\alpha$ to $\beta$, then all the red vertices in $G$ that are adjacent to blue vertices in $\calI_0$ are the red vertices in $\calI_0$.
Hence, the statement of the lemma follows.

We now consider the case where there are red vertices whose model intersects the path from $\alpha$ to $\beta$.
Let $\OPT$ be an optimum solution of $(G, (R, B), k)$.
Let $S$ denote the collection of vertices in $\OPT$ whose model is (properly) contained in the subtree rooted at $\alpha$, i.e., $S \deff \OPT \cap \underR{\alpha}$.
We claim that we can replace $S$ by a super-set $S'$ of $\OPT_q$ of equal size to obtain another optimum solution.

Let $j \in [d]$ be the smallest index
    such that $v_j$ is contained in $\OPT$.
Note that, by definition, $j \neq d + 1$ as there are only $d$ red vertices with the said property.
If $j \le q$, then by our choice of $q$, $\abs{S} \ge \abs{\OPT_j}$.
By the definition of $\calI_j$ and the fact blue vertices in $\calI_j$ are subset of blue vertices in $\calI_q$,
    $\OPT_q$ is also a solution for $\calI_j$.
    Hence, we can replace $S$ by $\OPT_q$ to get another optimal solution.
    Suppose therefore that $j > q$.
    By our choice of $q$, 
    we have $\abs{\OPT_j} > \abs{\OPT_q}$.
    As $\OPT$ is a solution,
    all vertices in $\interB{\alpha}$ must be covered by $\OPT$.
    Hence, we can replace $S$ by $S'=\OPT_q \cup \{ r_q \}$
    and get a solution of not larger size
    which still dominates all vertices in $\interB{\alpha}$.
    Indeed, the vertices which are not dominated by $\OPT_q$
    are dominated by $r_q$.
    \end{proof}
    
  We devise a greedy selection step based on the above lemma.
  
  \begin{greedy select}
  \label{greedy-select:blue}
    If there is a red vertex whose model intersects the path from $\alpha$ to $\beta$, let $q \in [d + 1]$ be the largest value such that
    $\abs{\OPT_q} = \abs{\OPT_1}$.
    Otherwise, define $\OPT_q = \OPT_0$. 
    Include $\OPT_q$ in the solution, i.e., delete the red vertices in $\OPT_q$, the blue vertices that are adjacent to vertices in $\OPT_q$, and decrease $k$ by $\abs{\OPT_\ell}$.
  \end{greedy select}
	
\begin{lemma}
\label{lemma:greedy-select:blue}
\Cref{greedy-select:blue} step is correct and can be completed in time $2^{\calO(\ell)} \cdot \polyn$.
\end{lemma}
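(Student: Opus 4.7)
The correctness of \Cref{greedy-select:blue} is immediate from \Cref{lemma:blue-alpha-greedy}, so the real work lies in showing how to compute each $\OPT_i$, for $i \in \{0, 1, \ldots, d+1\}$, in time $2^{\Oh(\ell)} \cdot \polyn$; since there are only polynomially many choices of $i$, this suffices.

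My plan starts from the observation that by the time \Cref{greedy-select:blue} is invoked, the reduction rules have been applied exhaustively. In Case-2 no $\alpha$-vertex is red, so every blue vertex in $\underB{\alpha}$ trivially meets the condition of \Cref{rr:domSet:greedilySelectReds}. Iterating that rule therefore empties $\underB{\alpha}$ of all blue vertices (or returns a trivial \no-instance). Consequently, when computing $\OPT_i$ the only blue vertices that still need to be dominated lie in $\containB{\alpha}$, with those in $N[v_i]$ excluded for $i \in [d]$.

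The key structural observation is that, because all $\alpha$-vertices are blue, every red vertex in $\belowR{\alpha}$ has its model entirely inside a single subtree $T_{\alpha'}$ for a child $\alpha'$ of $\alpha$; moreover, $T_{\alpha'}$ is a path since $\alpha$ is a farthest branching node. For any blue $\alpha$-vertex $b$, $\mld(b) \cap V(T_{\alpha'})$ is either empty or a prefix of this path starting at $\alpha'$, and a red $r \in T_{\alpha'}$ whose topmost node sits at depth $d_r$ from $\alpha'$ dominates $b$ exactly when that prefix reaches depth at least $d_r$. Hence among all reds in $T_{\alpha'}$, only the one with smallest $d_r$, call it $r^\star_{\alpha'}$, is ever useful in an optimal solution, and its coverage set $C_{\alpha'} \subseteq \containB{\alpha}$ subsumes the coverage of any other red in $T_{\alpha'}$. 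Picking at most one red per subtree suffices precisely because $\underB{\alpha'}$ has been emptied.

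Computing $\OPT_i$ thus reduces to selecting a minimum-cardinality subfamily of $\{C_{\alpha'} : \alpha' \text{ a child of } \alpha\}$ whose union covers the blue $\alpha$-vertices still to be dominated. This is a \SetCov instance with at most $\ell$ sets, which I solve by enumerating all $2^\ell$ subsets of children of $\alpha$ and checking feasibility, in $2^{\Oh(\ell)} \cdot \polyn$ time. Summing over the polynomially many choices of $i$ yields the claimed bound. The step that most requires care is the single-red-per-subtree argument: it hinges on $\underB{\alpha}$ being emptied by the reduction rules, without which additional reds inside $T_{\alpha'}$ could be forced in order to dominate blue vertices strictly below $\alpha$.
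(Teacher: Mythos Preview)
Your proposal is correct and follows essentially the same route as the paper: you argue (as the paper does) that once the reduction rules are exhausted $\underB{\alpha}$ is empty, that in each child subtree $T_{\alpha'}$ only the single red vertex with $\topnode$ closest to $\alpha$ matters, and that the resulting covering problem has size bounded by the number of children of $\alpha$, hence by $\ell$. The only cosmetic difference is that the paper phrases the resulting instance as \HitSet\ with a universe of size at most $\ell$ (one element per child, one set per blue $\alpha$-vertex), whereas you phrase it as the dual \SetCov\ with at most $\ell$ sets; both are solved in $2^{\Oh(\ell)}\cdot\polyn$ time by brute force over subsets of children.
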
  

\begin{proof}
The correctness of the step follows directly from Lemma~\ref{lemma:blue-alpha-greedy}.
In the remaining proof, we show how to compute $\OPT_i$ for every $i \in [0, d + 1]$, by constructing an instance of \HitSet.
As in Lemma~\ref{lemma:greedy-select:red}, we first argue that only one red vertex (which is closest to $\alpha$) is critical while constructing a \HitSet instance.
 
Recall that, by assumption, none of the previous reduction rules are applicable. 
As in the previous case, let $\alpha'$ be a child of $\alpha$.
We first argue that there are no blue vertices
whose path is completely contained in the path rooted at $\alpha'$.
Assume, for the sake of contradiction, that there exists such a blue vertex $v$.
As \cref{rr:domSet:blueNeedNeighbors} is not applicable, $v$ is adjacent to at least one red vertices.
However, since all $\alpha$-vertices are blue, by the property of the instance, there are no red $\alpha$-vertices.
In particular, $v$ is not adjacent to any red $\alpha$-vertex.
This contradicts the fact that \cref{rr:domSet:greedilySelectReds} is not applicable.
Hence, there is no blue vertex whose model is contained in the path rooted at $\alpha'$.
Since this is true for any child of $\alpha$, there are no blue vertices in $\underB{\alpha}$, i.e., $\underB{\alpha}=\emptyset$ and $\interB{\alpha}=\containB{\alpha}$.
Now, for a child $\alpha'$ of $\alpha$, let $v_{\alpha'} \in \belowR{\alpha'}$ be a red vertex such that
    $\topnode_\mld(v_r)$ is closest to $\alpha$.
Since all blue vertices contain $\alpha$ in their model, the only critical red vertex in this leg is $v_{\alpha'}$.
 
We now explain how to construct an instance $(U, \calF)$ of \HitSet.
For every child $\alpha'$ of $\alpha$, let $v_{\alpha'}$ be the vertex as mentioned above.
Add an element $u_{\alpha'}$ corresponding to $v_{\alpha'}$ in $U$.
For every blue $\alpha$-vertex $v$, we define $S_v \subseteq U$ as the collection of elements corresponding to the red vertices in $\calI_i$ that are adjacent to $v$.
This completes the construction of the instance.

It is easy to see the one-to-one correspondence between the optimum solutions of these two instances.
The running time of the algorithm follows from the simple brute-force algorithm for \HitSet parameterized by the size of the universe and the fact that $\alpha$ has at most $\ell$ many children. 
\end{proof}

\section{Multicut with Undeletable Terminals}
\label{sec:multicut-bounded-leafage}

\tikzset{
  circ/.style = {circle,draw,fill,inner sep=1.3pt},
  circb/.style = {circle,draw=teal,fill=teal,text=teal,inner sep=.8pt},
  circr/.style = {circle,draw=purple,fill=purple,inner sep=.8pt},
  circg/.style = {circle,draw=olive,fill=olive,inner sep=.8pt},
  circlg/.style = {circle,draw=gray,fill=gray,inner sep=.8pt},
  scirc/.style = {circle,draw,fill,inner sep=.8pt},
}

This section considers the \textsc{MultiCut with Undeletable Terminals} problem formally defined as follows.\\

\defproblem{ \textsc{MultiCut with Undeletable Terminals} (\MCundel)}{%
An undirected graph $G$,
a set $\Pairs \subseteq V(G) \times V(G)$,
and an integer $k$.
}{%
Is there a set $S \subseteq V(G) \setminus V(\Pairs)$ such that
$\abs{S} \le k$ and
for all $(p,p') \in \Pairs$,
there is no path between $p$ and $p'$ in $G-S$?
}

\bigskip

In the following, a set $S \subseteq V(G) \setminus V(\Pairs)$ such that for all $(p,p') \in \Pairs$,
there is no path between $p$ and $p'$ in $G-S$ is called a \emph{$P$-multicut} in $G$.
We first prove that when the input is restricted to chordal graphs,
the problem is unlikely to admit an \FPT~algorithm
when parameterized by the leafage.
We then complement this result with an \XP-algorithm parameterized by the leafage.
We restate the theorem with the precise statement for the reader's convenience.

\thmmcleafagewhard*

To prove that the problem is \W[1]-hard,
we present a parameter preserving reduction from \textsc{Multicolored Clique}.
An instance of this problem consists of 
a simple graph $G$, an integer $q$, and a partition $(V_1, V_2, \dots, V_q)$ of $V(G)$.
The objective is to determine whether there is a {clique} in $G$ that contains exactly one vertex from each part $V_i$.
Such a clique is called a \emph{multicolored clique}.
We assume, without loss of generality, that each $V_i$ is an independent set
and that $|V_1| = \ldots = |V_q| = n$.%
\footnote{
Unlike in the rest of the article,
we \emph{do not} use $n$ to denote the total number of vertices in $G$
to keep notation simple while presenting the reduction.
}
This implies, in particular, that $|E(G)| < n^2 \cdot q^2$.
For every $i \in [q]$, we denote by $v^i_1,\ldots,v^i_n$ the vertex set of $V_i$
and for every $i \neq j \in [q]$,
we denote by $E_{i,j} \subseteq E(G)$ the set of edges between $V_i$ and $V_j$.
We define $M \deff (n + 1)^2 \cdot q^2$.

\begin{figure}[t]
  \centering
  \begin{tikzpicture}
    \node[circ,label={[label distance=.15cm]below: $p_{1}$}] (pn1) at (-0.1,.75) {};

    \node[draw=none] at (1.35,.75) { $K_1$};
    \draw (0.9,0.5) rectangle (1.8,1.0);

    \draw[very thick] (pn1) -- (0.9,0.5)
    (pn1) -- (0.9,1.0);

    \node[circ,label={[label distance=.15cm]below: $p_2$}] (pn2) at (2.8,.75) {};
    \draw[very thick] (pn2) -- (1.8,1.0)
    (pn2) -- (1.8,0.5);

    \node[draw=none] at (4.25,.75) { $K_{2}$};
    \draw (3.8,0.25) rectangle (4.7,1.25);

    \draw[very thick] (pn2) -- (3.8,1.25)
    (pn2) -- (3.8,0.25);

  \node[circ,label={[label distance=.15cm]below: $p_3$}] (pn2) at (5.8,.75) {};

  \draw[very thick] (pn2) -- (4.7,1.25)
    (pn2) -- (4.7,0.25);

  \node[draw=none] at (6.75,.75) {$\cdots$};

  \node[circ,label={[label distance=.15cm]below: $p_{n}$}] (pn) at (7.5,.75) {};

  \draw[very thick] (pn) -- (8.7,0)
    (pn) -- (8.7,1.5);

  \node[draw=none] at (9.15,.75) { $K_{n}$};
    \draw (8.7,0) rectangle (9.6,1.5);

    \node[circ,label={[label distance=.15cm]below: $p_{n+1}$}] (pnp1) at (10.6,.75) {};
    \draw[very thick] (pnp1) -- (9.6,1.5)
    (pnp1) -- (9.6,0);
  \end{tikzpicture}
  \caption{The auxiliary graph $B$. Rectangles represent cliques and thick edges indicate that the corresponding vertex is complete to the corresponding cliques.}
  \label{fig:auxB}
\end{figure}
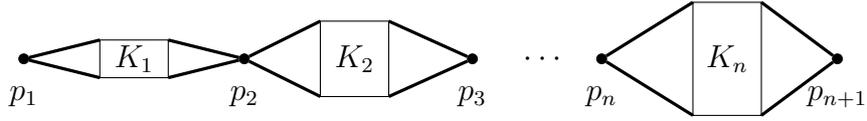

\paragraph*{Reduction.}
The reduction takes as input an instance $(G, q, (V_1, \dots, V_q))$ of \textsc{Multicolored Clique}
and outputs an instance $(H, \Pairs, k)$ of \MCundel which is constructed as follows.
\begin{itemize}
  \item The reduction starts by constructing an auxiliary graph $B$.
  The vertex set of $B$ consists of $n+1$ vertices $p_1,\ldots,p_{n+1}$
  and $n$ vertex-disjoint cliques $K_1,\ldots,K_{n}$
  such that $|K_a| = a \cdot M$ for every $a \in [n]$.
  Then, it adds edges so that $p_1$ is complete to $K_1$,
  $p_{n+1}$ complete to $K_{n}$, and $p_a$ complete to $K_{a-1} \cup K_a$
  for every $a \in [n] \setminus \{1\}$.
  This completes the construction of $B$ (see \Cref{fig:auxB}).

  \item
  For each $i \in [q]$, the reduction introduces two vertex-disjoint copies $B^{i,\alpha}$ and $B^{i,\beta}$ of $B$.
  For every $i \in [q]$,
  let $p_1^{i, \alpha},\dots,p_{n+1}^{i,\alpha}$ denote the copies of
  $p_1,\dots,p_{n+1}$ in $B^{i, \alpha}$ and
  $K_1^{i,\alpha},\dots,K_{n}^{i,\alpha}$ denote the copies of
  $K_1,\ldots,K_{n}$ in $B^{i,\alpha}$.
  Moreover, for every $1 \le a_1 \le a_2 \le n + 1$,
  we define, for notational convenience,
  \[
    p^{i, \alpha}[a_1, a_2]
      \deff \{p^{i, \alpha}_{a} \mid a_1 \le a \le a_2\}
  \]
  and
  \[
    K^{i, \alpha}[a_1, a_2]
      \deff \bigcup_{a_1 \le a \le a_2} K^{i, \alpha}_a
      .
  \]
  We define $p^{i, \beta}_a$, $K^{i, \beta}_a$,
  $p^{i, \beta}[a_1, a_2]$, and $K^{i, \beta}[a_1, a_2]$
  in a similar way.

  \item
  For $i \in [q]$ and $a \in [n]$,
  the reduction uses $p^{i, \alpha}_a$, $p^{i, \beta}_{n + 1 - a}$,
  $K^{i, \alpha}_a$, and $K^{i, \beta}_{n + 1 - a}$
  to encode vertex $v^i_a$.

  \item
  For every edge $e = v^i_{a_i}v^j_{a_j} \in E(G)$, the reduction introduces an \emph{edge-vertex} $v_e$ and
  adds edges so that $v_e$ is complete to the following sets.
  \begin{itemize}
    \item
    $ p^{i,\alpha}[a_i+1, n+1]$ and
    $K^{i, \alpha}[a_i, n+1]$ in $V(B^{i, \alpha})$.
    \item
    $ p^{j,\alpha}[a_j+1, n+1]$ and
    $K^{i, \alpha}[a_j, n+1]$ in $V(B^{j, \alpha})$.
    \item
    $ p^{i,\beta}[n + 1 - a_i + 1, n+1]$ and
    $K^{i, \beta}[n + 1 - a_i + 1, n+1]$ in $V(B^{i, \beta})$.
    \item
    $ p^{j,\beta}[n + 1 - a_j + 1, n+1]$ and
    $K^{i, \beta}[n + 1 - a_j + 1, n+1]$ in $V(B^{j, \beta})$.
  \end{itemize}
  Note that $v_e$ is adjacent to vertices in
  $K^{i, \alpha}[a_i] \cup K^{j, \alpha}[a_j]$
  but not to any vertex in
  $K^{i, \beta}[n + 1 - a_i ] \cup K^{j, \beta}[n + 1 - a_j ]$.

  \item
  The reduction introduces a central clique $K$ of size $2M^2$
  and makes it complete to
  $\{p^{i, \alpha}_{n + 1}, p^{i, \beta}_{n + 1} \mid i \in [q]\}$ and $V_E$
  where $V_E = \{v_e \mid e \in E(G)\}$ is the set of edge-vertices.

  This completes the construction of $H$.

  \item
 The reduction further defines
  \begin{align*}
    \Pairs &\deff
    \{(p_a^{i,\alpha},p_{n + 2 -a}^{i,\beta}) \mid a \in [n] \text{ and } i \in [q]\},
    \text{ and } \\
    k &\deff q(n + 1)M + |E(G)| - q(q-1)/2.
\end{align*}
\end{itemize}
The reduction returns $(H, \Pairs, k)$ as the instance of \MCundel.
This completes the reduction. It is easy to see that $H$ is chordal and has leafage at most $2q$.
See \Cref{fig:redH} for a tree representation of~$H$.

\begin{figure}[t]
  \centering
  \begin{tikzpicture}
	 \foreach \j in {0,1,2,3}
	 {
	 \ifthenelse{\j=0 \OR \j=1}{\pgfmathtruncatemacro{\p}{2}}{\pgfmathtruncatemacro{\p}{1}}
    \ifthenelse{\j=0 \OR \j=2}{\def\q{\beta}}{\def\q{\alpha}}
    \pgfmathsetmacro{\yb}{\j*1.5}
    \pgfmathsetmacro{\yu}{\yb+.15}

	 \foreach \i in {2,3,4,5}
    {\pgfmathsetmacro{\xl}{2.4*\i-.4}
    \pgfmathsetmacro{\xr}{\xl+1.2}
    \pgfmathtruncatemacro{\l}{\i - 1}
    \node[circr] (a\j\i) at (\xl,\yu) {};
    \node[circr] (b\j\i) at (\xr,\yu) {};
    \draw[very thick,purple] (a\j\i) -- (b\j\i) node[midway,above,purple] { $K_\l^{\p,\q}$};
    }

     \foreach \i in {1,2,3,4,5}
    {\pgfmathsetmacro{\xl}{2.4*\i+.8}
    \pgfmathsetmacro{\xr}{\xl+1.2}
    \pgfmathtruncatemacro{\l}{\i}
    \node[circb] (c\j\i) at (\xl,\yb) {};
    \node[circb] (d\j\i) at (\xr,\yb) {};
    \draw[very thick,teal] (c\j\i) -- (d\j\i) node[midway,above,teal] { $p_\l^{\p,\q}$};
    }
	}

	\node[scirc] (k1) at (16,2.5) {};
    \node[scirc] (k2) at (14,0.25) {};
    \node[scirc] (k3) at (14,1.75) {};
    \node[scirc] (k4) at (14,3.25) {};
    \node[scirc] (k5) at (14,4.75) {};

   	\draw[very thick] (k1) -- (k2)
    (k1) -- (k3)
    (k1) -- (k4)
    (k1) -- (k5) node[pos=.05,above right] { $K$};

  	\node[circlg] (e21) at (16,2.2) {};
    \node[circlg] (e22) at (14,-.3) {};
    \node[circlg] (e23) at (14,1.2) {};
    \node[circlg] (e24) at (14,2.7) {};
    \node[circlg] (e25) at (14,4.2) {};
    \node[circlg] (e26) at (12.8,4.2) {};
    \node[circlg] (e27) at (4.4,2.7) {};
    \node[circlg] (e28) at (8,1.2) {};
    \node[circlg] (e29) at (9.2,-.3) {};
    \draw[very thick,gray] (e21) -- (e22)
    (e21) -- (e23)
    (e21) -- (e24)
    (e21) -- (e25)
    (e24) -- (e27)
    (e23) -- (e28)
    (e22) -- (e29)
    (e25) -- (e26) node[pos=.95,below,gray] { $v_{e'}$};

    \node[circg] (e11) at (16,2.35) {};
    \node[circg] (e12) at (14,-.15) {};
    \node[circg] (e13) at (14,1.35) {};
    \node[circg] (e14) at (14,2.85) {};
    \node[circg] (e15) at (14,4.35) {};
    \node[circg] (e19) at (10.4,4.35) {};
    \node[circg] (e16) at (6.8,2.85) {};
    \node[circg] (e17) at (5.6,1.35) {};
    \node[circg] (e18) at (11.6,-.15) {};
    \draw[very thick,olive] (e11) -- (e12)
    (e11) -- (e13)
    (e11) -- (e14)
    (e11) -- (e15)
    (e13) -- (e17)
    (e12) -- (e18)
    (e14) -- (e16)
    (e15) -- (e19) node[pos=.95,below,olive] { $v_{e}$};

  \end{tikzpicture}
  \caption{A tree representation of the graph $H$ restricted to the gadgets representing $V_1$, $V_2$ and $E_{1,2}$ where $n= 4$ and $E_{1,2} = \{e = v_3^1v_1^2, e'=v_4^1v_2^2\}$.}
  \label{fig:redH}
\end{figure}
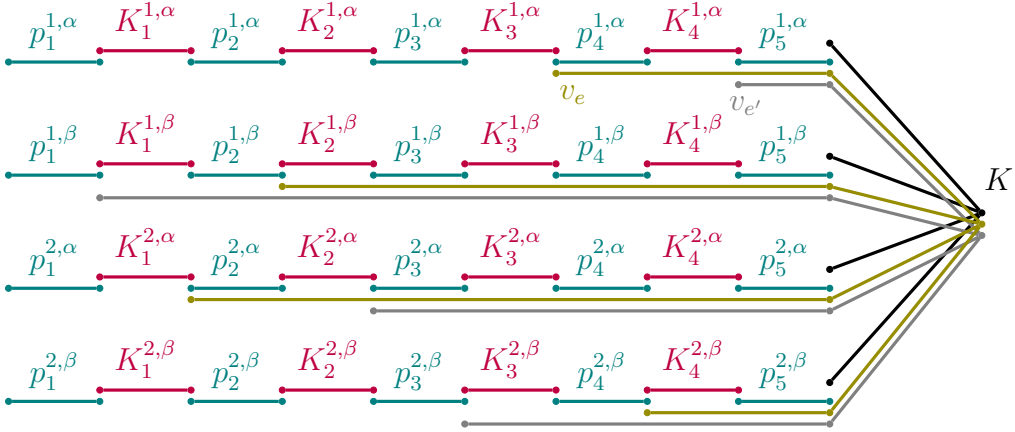

\paragraph*{Intuition.}
We first provide the intuition behind the reduction.
Recall that the reduction uses $p^{i, \alpha}_{a}$, $p^{i, \beta}_{n + 1 - a}$,
$K^{i, \alpha}_{a}$, and $K^{i, \beta}_{n + 1 - a}$ to encode vertex $v^i_a$
where $i \in [q]$ and $a \in [n]$.
Hence, for $a, b \in [n]$, if $a + b = n + 1$,
then $p^{i, \alpha}_a$ and $p^{i, \beta}_b$ correspond to the same vertex.
Note that the pairs in $\Pairs$ \emph{do not} correspond to the vertices associated with $v^i_{a}$.
Rather, $p^{i, \alpha}_{a + 1}$ is paired with $p^{i, \beta}_{n + 1 - a}$.
Conversely, for $a, b \in [n]$, if $a + b = n + 2$,
then $(p^{i, \alpha}_a, p^{i, \beta}_b) \in \Pairs$.
By the construction of $H$ and $\Pairs$, for a $\Pairs$-multicut $S$ of $H$,
if there is a path from $p^{i, \alpha}_a$ to $p^{i, \beta}_b$ in $H - S$,
then $a + b \ge n + 3$.

Now, consider the terminal pairs $(p^{i, \alpha}_1, p^{i, \beta}_{n + 1})$ in $\Pairs$ for some $i \in [q]$.
Because of the size constraints,
$S$ cannot contain all the vertices of the central clique $K$.
Since $S$ cannot contain a terminal,
it needs to include one clique from $B^{i, \alpha}$.
Let $a_i \in [n]$ be the largest index such that $K^{i, \alpha}_{a_i} \subseteq S$.
Using similar arguments, there must also exist $b_i \in [n]$ such that $K^{i, \beta}_{b_i} \subseteq S$ and $b_i$ is largest such index.
By definition of $a_i, b_i$ and construction of $H$,
there is a path from $p^{i, \alpha}_{a_i + 1}$ to $p^{i, \beta}_{b_i + 1}$ in $H - S$.
The discussion in the previous paragraph implies that
$a_i + 1 + b_i + 1 \ge n + 3$, i.e.,\ $a_i + b_i \ge n + 1$.
However, by definition of the solution size $k$ and the size of the cliques,
we have $a_i + b_i \le n + 1$.
Hence, the structure of the auxiliary graphs and the terminal pairs ensure that
the selected cliques in $S \cap V(B^{i, \alpha})$ and $S \cap V(B^{i, \beta})$
encode selecting a vertex in $V_i$ in $G$.

Suppose that $\{v_{a_1}^1,v_{a_2}^2,\allowbreak \ldots, v_{a_q}^q \}$
are the vertices in $G$ that are selected by $S$.
Recall that $V_E$ is the collection of edge-vertices in $H$.
Considering the remaining budget,
a solution $S$ can include at most $|E(G)| - q(q-1)/2$ many vertices in $V_E$.
We argue that $q(q-1)/2$ edges in $G$ corresponding to vertices in $V_E \setminus S$
should have their endpoints in $\{v_{a_1}^1,v_{a_2}^2,\allowbreak \ldots, v_{a_q}^q \}$
as otherwise some terminal pair is connected in $H - S$.
Hence, a $\Pairs$-multicut $S$ of $H$ corresponds to a multicolored clique in $G$.
We formalize this intuition in the following two lemmas.

\begin{lemma}
  \label{lemma:multicut-w-hard-forward}
  If $(G, q, (V_1, V_2, \dots, V_q))$ is a \yes-instance of {\sc Multicolored Clique},
  then $(H, \Pairs, k)$ is a \yes-instance of {\sc \MCundel}.
\end{lemma}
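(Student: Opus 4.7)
The plan is to exhibit an explicit $\Pairs$-multicut $S$ of size at most $k$, guided directly by the multicolored clique. Given a clique $\{v^1_{a_1},v^2_{a_2},\dots,v^q_{a_q}\}$ in $G$, I would set
\[
  S \deff \bigcup_{i=1}^{q}\bigl(K^{i,\alpha}_{a_i}\cup K^{i,\beta}_{n+1-a_i}\bigr) \;\cup\; \{v_e \mid e\in E(G),\ e \text{ is not an edge of the clique}\}.
\]
A direct count using $|K^{i,\alpha}_{a_i}|=a_i M$ and $|K^{i,\beta}_{n+1-a_i}|=(n+1-a_i)M$ gives $q(n+1)M+|E(G)|-\binom{q}{2}=k$, and since $S$ contains neither $p$-vertices nor central-clique vertices, $S\cap V(\Pairs)=\emptyset$. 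So the main content is showing $S$ separates every pair $(p^{i,\alpha}_a,p^{i,\beta}_{n+2-a})\in\Pairs$ in $H-S$.

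For the separation argument, I would first establish the structural description of $H-S$ along a single auxiliary graph $B^{i,\alpha}$: removing $K^{i,\alpha}_{a_i}$ splits $B^{i,\alpha}$ into a \emph{lower} part $L^{i,\alpha}=\{p^{i,\alpha}_1,\dots,p^{i,\alpha}_{a_i}\}\cup\bigcup_{a'<a_i}K^{i,\alpha}_{a'}$ and an \emph{upper} part containing $p^{i,\alpha}_{n+1}$ (the latter being attached to the central clique $K$, which is untouched). The key claim is that every edge-vertex $v_e$ adjacent to some node of $L^{i,\alpha}$ lies in $S$: indeed, such a $v_e$ corresponds to an edge $e=v^i_s v^j_t$ with $s\le a_i-1$, hence $e$ is not an edge of the clique (all clique edges incident to color $i$ have $s=a_i$), so $v_e\in S$ by construction. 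The analogous statement holds for $L^{i,\beta}$, so each lower part is a full connected component of $H-S$ isolated from the rest.

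Now fix a terminal pair $(p^{i,\alpha}_a,p^{i,\beta}_{n+2-a})$, and observe $a\le a_i$ iff $n+2-a\ge n+2-a_i>n+1-a_i$. Thus exactly one of the following holds: either $a\le a_i$, putting $p^{i,\alpha}_a\in L^{i,\alpha}$ while $p^{i,\beta}_{n+2-a}$ lies in the upper part of $B^{i,\beta}$; or $a\ge a_i+1$, putting $p^{i,\beta}_{n+2-a}\in L^{i,\beta}$ while $p^{i,\alpha}_a$ lies in the upper part of $B^{i,\alpha}$. In both cases the two terminals lie in different components of $H-S$, proving separation.

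The main obstacle I expect is the bookkeeping in the isolation claim: one must verify that \emph{no} edge-vertex and \emph{no} clique-vertex outside $S$ connects the lower part to anything else, which requires checking all four attachment sets $p^{i,\alpha}[s+1,n+1]$, $K^{i,\alpha}[s,n+1]$, $p^{i,\beta}[n+2-t,n+1]$, $K^{i,\beta}[n+2-t,n+1]$ against the indices $s,t$ of clique versus non-clique edges. The argument is routine once one fixes the convention $s\le a_i-1\Rightarrow v_e\in S$ and is applied symmetrically on both the $\alpha$ and $\beta$ sides; no use of the large constant $M$ is needed here (it will be needed in the converse direction to force the structure of $S$).
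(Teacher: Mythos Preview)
Your proposal is correct and follows essentially the same approach as the paper: you construct the identical set $S$, verify its size in the same way, and prove separation by showing that the ``lower'' part of each $B^{i,\alpha}$ (resp.\ $B^{i,\beta}$) below the deleted clique is cut off from the rest of $H-S$, using the observation that any edge-vertex touching that lower part corresponds to an edge with the wrong $V_i$-endpoint and hence lies in $S$. The paper phrases this as ``any path must pass through $K^{i,\alpha}_{a_i}$ or $N(p^{i,\alpha}_{a_i})\cap V_E$'' while you phrase it as ``$L^{i,\alpha}$ is a full component of $H-S$'', but these are two formulations of the same cut argument.
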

\begin{proof}
Assume that $(G, q, (V_1, V_2, \dots, V_q))$ is a \yes-instance of {\sc Multicolored Clique}
 and let $\{v_{a_1}^1,v_{a_2}^2,\allowbreak \ldots, v_{a_q}^q \}$
  be a clique in $G$ such that
  $v^{i}_{a_i} \in V_i$ for every $i \in [q]$.
  We construct a $\Pairs$-multicut $S$ of $H$ as follows.
  First, we add
  $V_E \setminus \{v_e \mid e \in \{v_{a_i}^i v_{a_j}^j \mid i,j \in [q]\}\}$ to $S$.
  For every $i \in [q]$,
  we further add $K_{a_i}^{i,\alpha}$ and $K_{n + 1 - a_i}^{i,\beta}$ to $S$.
  It is easy to verify that $|S| = q(n + 1)M + |E(G)| - q(q-1)/2 = k$.

  Let us show that $S$ is indeed a $\Pairs$-multicut.
  Fix indices $i \in [q]$ and $a \in [n]$,
  and consider the terminal pair $(p^{i, \alpha}_{a}, p^{i, \beta}_{n + 2 - a})$ in $\Pairs$.
  Suppose first that $a \le a_i$.
  By construction of $H$,
  any path from $p^{i,\alpha}_a$ to $p^{i,\beta}_{n+2-a}$ in $H$
  contains a vertex of $K^{i,\alpha}_{a_i}$ or of
  $N(p^{i,\alpha}_{a_i}) \cap V_E$.
  Recall that if edge $e \in E(G)$ is incident on $v^i_{a_i}$,
  then the edge-vertex $v_e$ in $H$ is adjacent only to vertices in
  $p^{i, \alpha}[a_i + 1, \cdots, n + 1]$ and
  $K^{i, \alpha}[a_i, \cdots, n + 1]$ in $V(B^{i, \alpha})$;
  in particular, it is \emph{not} adjacent to $p^{i, \alpha}_{a_i}$.
  As $S$ only excludes edge-vertices in $V_E$ that encode edges incident on $v^i_{a_i}$,
  it contains every vertex in $N(p^{i,\alpha}_{a_i}) \cap V_E$.
  Since $S$ also contains every vertex in $K^{i,\alpha}_{a_i}$,
  we conclude that there is no path from $p^{i,\alpha}_a$ to $p^{i,\beta}_{n+2-a}$ in $H - S$.

  Now, consider the case where $a_i < a$,
  i.e.,\ $n + 2 - a < n + 2 - a_i$.
  In this case,
  it is convenient to consider a path from $p^{i, \beta}_{n + 2 - a}$ to $p^{i, \alpha}_a$.
  Once again, by construction of $H$,
  any path from $p^{i, \beta}_{n + 2 - a}$ to $p^{i, \alpha}_a$ in $H$
  contains a vertex of
  $K^{i, \beta}_{n + 2 - (a_i + 1)} = K^{i, \beta}_{n + 1 - a_i}$ or of
  $N(p^{i, \beta}_{n + 2 - (a_i + 1)}) \cap V_E = N(p^{i, \beta}_{n + 1 - a_i}) \cap V_E$.
  Recall that if edge $e \in E(G)$ is incident on $v^{i}_{a_i}$,
  then the corresponding edge-vertex $v_e$ in $H$
  is adjacent only to vertices in $p^{i, \beta}[n - a_i + 2, \cdots n + 1]$
  and $K^{i, \beta}[n - a_i + 2, \cdots, n + 1]$ in $V(B^{i, \beta})$;
  in particular, it is \emph{not} adjacent to $p^{i, \beta}_{n + 1 - a_i}$.
  As $S$ only excludes edge-vertices in $V_E$ that encode edges incident on $v^i_{a_i}$,
  it contains every vertex in $N(p^{i, \beta}_{n + 1 - a_i}) \cap V_E$.
  Since $S$ also contains every vertex in $K^{i, \beta}_{n + 1 - a_i}$,
  we conclude that there is no path from $p^{i, \beta}_{n + 2 - a}$ to $p^{i, \alpha}_a$.
  This implies that no terminal pair in $\Pairs$ is connected in $H - S$
  which concludes the proof.
\end{proof}
\begin{lemma}
  \label{lemma:multicut-w-hard-backward}
  If $(H, \Pairs, k)$ is a \yes-instance of {\sc \MCundel},
  then $(G, q, (V_1, V_2, \dots, V_q))$ is a \yes-instance of {\sc Multicolored Clique}.
\end{lemma}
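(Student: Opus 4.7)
The plan is to analyze any $\Pairs$-multicut $S$ of $H$ with $\abs{S} \leq k$ and extract a multicolored clique from its structure. The first ingredient is a ``hub'' argument: since $\abs{K} = 2M^2$ dwarfs $k$, some $u^* \in K \setminus S$ survives, and $u^*$ is universally adjacent to every $p^{i,\alpha}_{n+1}$, every $p^{i,\beta}_{n+1}$, and every edge-vertex in $V_E$. I will use $u^*$ to splice together path fragments living inside the different gadgets.

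For each $i \in [q]$, let $a_i$ be the largest index in $[n]$ with $K^{i,\alpha}_{a_i} \subseteq S$ and define $b_i$ analogously for the $\beta$-side. To see $a_i$ exists, note that if every $K^{i,\alpha}_c$ had a surviving vertex, then concatenating them consecutively with $u^*$ would reconnect the terminal pair $(p^{i,\alpha}_1, p^{i,\beta}_{n+1}) \in \Pairs$. Using the maximality of $a_i, b_i$ I can construct, inside $H - S$, a path from $p^{i,\alpha}_{a_i+1}$ up through $B^{i,\alpha}$ to $p^{i,\alpha}_{n+1}$, across $u^*$ to $p^{i,\beta}_{n+1}$, and then down $B^{i,\beta}$ to $p^{i,\beta}_{b_i+1}$; if $a_i + b_i \leq n$ this path extends further inside $B^{i,\beta}$ to $p^{i,\beta}_{n+1-a_i}$, reconnecting $(p^{i,\alpha}_{a_i+1}, p^{i,\beta}_{n+1-a_i}) \in \Pairs$, a contradiction. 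Hence $a_i + b_i \geq n+1$. The matching upper bound is a budget count: the disjoint cliques $K^{i,\alpha}_{a_i}$ and $K^{i,\beta}_{b_i}$ contribute $M(a_i+b_i)$ to $\abs{S}$, and since $\abs{E(G)} < M$, the inequality $M\sum_i(a_i+b_i) \leq \abs{S} \leq k < q(n+1)M + M$ forces $a_i + b_i = n+1$ for every $i$. Thus the assignment $i \mapsto v^i_{a_i}$ selects exactly one vertex from each color class.

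The main remaining step, and the one I expect to be the most delicate, is to show that these selected vertices form a clique in $G$. Given an edge-vertex $v_e \notin S$ with $e = v^i_c v^j_{c'}$, I would inspect its adjacencies in $B^{i,\alpha}$: it is joined to $p^{i,\alpha}_{c+1}$ and to $K^{i,\alpha}_c, \ldots, K^{i,\alpha}_n$. If $c < a_i$, then $p^{i,\alpha}_{c+1}$ is reachable from $p^{i,\alpha}_1$ without crossing the cut $K^{i,\alpha}_{a_i}$, so $v_e$ together with the hub path reconnects the pair $(p^{i,\alpha}_1, p^{i,\beta}_{n+1})$. A symmetric inspection of $B^{i,\beta}$ forces $c \geq a_i$, so $c = a_i$, and likewise $c' = a_j$. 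Hence every edge-vertex outside $S$ corresponds to an edge of $G$ of the form $v^i_{a_i} v^j_{a_j}$; there are at most $q(q-1)/2$ such edges, so $\abs{S \cap V_E} \geq \abs{E(G)} - q(q-1)/2$.

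Substituting this back into the budget gives $\abs{S} \geq q(n+1)M + \abs{E(G)} - q(q-1)/2 = k$, so every inequality is tight: in particular, for every pair $i < j$ there must exist an edge-vertex $v_e \notin S$ with $e = v^i_{a_i}v^j_{a_j}$, which means this edge actually belongs to $E(G)$. Therefore $\{v^i_{a_i} : i \in [q]\}$ is a multicolored clique in $G$, finishing the proof. The trickiest point is the shortcut analysis for edge-vertices: one must invoke both the hub $u^*$ and the maximality of $a_i, b_i$ simultaneously and verify that the reconnecting path actually lands on a pair that belongs to $\Pairs$, rather than just any two terminals.
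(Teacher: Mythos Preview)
Your proof is correct and follows essentially the same route as the paper: survive a hub vertex in $K$, define $a_i,b_i$ via the largest fully-cut cliques, use path-splicing to force $a_i+b_i=n+1$, then show every surviving edge-vertex must encode an edge $v^i_{a_i}v^j_{a_j}$ and count. One small slip: your sentence ``A symmetric inspection of $B^{i,\beta}$ forces $c \geq a_i$'' should read $c \leq a_i$ (the $\alpha$-side already gave $c \geq a_i$; the $\beta$-side rules out $c > a_i$), after which $c=a_i$ follows as you claim.
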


\begin{proof}
Assume that $(H, \Pairs, k)$ is a \yes-instance of \MCundel
 and let $S$ be a $\Pairs$-multicut of $H$ of size at most $k$.
  Recall that, by definition of the problem, $S \cap V(\Pairs) = \emptyset$.
  Also, recall that the reduction adds the clique $K$ of size $2M^2$
  and makes it complete to
  $\{p^{i, \alpha}_{n + 1}, p^{i, \beta}_{n + 1} \mid i \in [q]\}$ and $V_E$.
  Note that $K\setminus S \neq \emptyset$ as $k < 2M^2$.

  Consider an index $i \in [q]$.
  It is easy to see that there exists $a \in [n]$ such that
  $K^{i,\alpha}_a \subseteq S$ as otherwise,
  there is a path from $p^{i,\alpha}_1$ to $p^{i,\beta}_{n+1}$ in $H - S$.
  Let $a_i \in [n]$ be the largest index such that $K^{i,\alpha}_{a_i} \subseteq S$.
  Similarly, there must exist $b \in [n]$ such that $K^{i,\beta}_{b} \subseteq S$:
  let $b_i \in [n]$ be the largest index such that $K^{i,\beta}_{b_i} \subseteq S$.
  Note that by definition of $a_i, b_i$ and the fact that $K \setminus S \neq \emptyset$,
  there is path from $p^{i, \alpha}_{a_i + 1}$ to $p^{i, \beta}_{b_i + 1}$ in $H-S$.
  Now suppose for a contradiction that $a_i + 1 + b_i + 1 \le n + 2$.
  Then there exists $a'_i \in [n]$ such that $a'_i \ge a_i$ and $a'_i + 1 + b_i + 1 = n + 2$
  and so, by definition of $\Pairs$,
  $(p^{i, \alpha}_{a'_i + 1}, p^{i, \beta}_{b_i + 1}) \in \Pairs$.
  Moreover, by construction of $H$,
  the existence of a path from $p^{i, \alpha}_{a_i + 1}$ to $p^{i, \beta}_{b_i + 1}$ in $H - S$
  implies that there is path from $p^{i, \alpha}_{a'_i + 1}$ to $p^{i, \beta}_{b_i + 1}$ in $H - S$;
  this however, contradicts the fact that $S$ is a $\Pairs$-multicut of $H$.
  Therefore $a_i + 1 + b_i + 1 \ge n + 3$,
  i.e.,\ $a_i + b_i \ge n + 1$.
  Since this holds for any $i \in [q]$, we have that
  \[
    \left|S \cap \left(\bigcup_{i \in [q]}V(B^{i, \alpha}) \cup V(B^{i, \beta})\right)\right|
    \ge \sum_{i \in [q]} (a_i M + b_iM) \ge q (n + 1) M
    .
  \]
  Since $|E(G)| - q (q - 1)/2 < M$
  and $S$ has size at most $k = q(n + 1) M + |E(G)| - q (q - 1)/2$,
  it follows that, in fact, $a_i + b_i = n + 1$ for all $i \in [q]$.
  Hence, $S \cap (\bigcup_{i \in [q]}V(B^{i, \alpha}) \cup V(B^{i, \beta}))$
  corresponds to a collection of vertices
  $\{v_{a_1}^1,v_{a_2}^2,\allowbreak \ldots, v_{a_q}^q \}$ in $G$
  such that $v^{i}_{a_i} \in V_i$ for every $i \in [q]$.

  In the remaining proof, we argue that there are at least $q(q-1)/2$ edges with endpoints in $\{v_{a_1}^1,v_{a_2}^2,\allowbreak \ldots, v_{a_q}^q \}$.
  Since $|E(G)| - q (q - 1)/2 < M$,
  and every clique in $B^{i, \alpha}$ is of size at least $M$,
  for any $a \in [n]$ such that $a < a_i$,
  we have $K^{i, \alpha}_a \setminus S\neq \emptyset$.
  In other words, there is at least one vertex in $H-S$ from each clique
  $K^{i, \alpha}_a$ where $a < a_i$.
  Since $a_i$ is the largest index such that $K^{i, \alpha}_{a_i} \subseteq S$, this also holds for every $a > a_i$.
  As $S$ intersects every path from $p^{i, \alpha}_1$
  to $p^{i, \beta}_{n + 1}$, it contains every vertex in
  $N(p^{i, \alpha}_{a_i}) \cap V_E$.
  Using similar arguments, we conclude that $S$ also contains every vertex in
  $N(p^{i, \beta}_{b_i}) \cap V_E = N(p^{i, \beta}_{n + 1 - a_i}) \cap V_E$.
  Now recall that if edge $e \in E(G)$ is incident on $v^{i}_{a_i}$, 
  then the corresponding edge-vertex $v_e$ in $H$ is adjacent to
  \begin{itemize}
    \item
    terminals in $V(B^{i, \alpha})$
    which are in $p^{i, \alpha}[a_i + 1, n + 1]$, and
    \item
    terminals in $V(B^{i, \beta})$
    which are in $p^{i, \beta}[n - a_i + 2, n + 1]$.
  \end{itemize}
  In particular, $v_e$ is \emph{not} adjacent to
  $p^{i, \alpha}_{a_i}$ and $p^{i, \beta}_{n + 1 - a_i}$.
  This implies that only edges-vertices that
  correspond to edges incident on $v^{i}_{a_i}$ can be excluded from $S$.
  As this holds for any $i \in [q]$,
  every vertex in $V_E \setminus S$ has its endpoints in
  $\{v_{a_1}^1,v_{a_2}^2,\allowbreak \ldots, v_{a_q}^q \}$.
  As $|S \cap (\bigcup_{i \in [q]}V(B^{i, \alpha}) \cup V(B^{i, \beta}))|
  =  q (n + 1) M$
  and $k = q (n + 1) M + |E(G)| - q(q-1)/2$,
  we have $|S \cap V_E| \le |E(G)| - q (q - 1)/2$
  which implies that $|V_E \setminus S| \ge q(q - 1)/2$.
  Since $G$ is a simple graph, it follows that
  $\{v_{a_1}^1,v_{a_2}^2,\allowbreak \ldots, v_{a_q}^q \}$
  is a multicolored clique in $G$.
  This concludes the proof of the lemma.
\end{proof}

  Finally, it is known that, assuming the \ETH, there is no algorithm
  that can solve \textsc{Multicolored Clique}
  on instance $(G, q, (V_1, V_2, \dots V_{q}))$
  in time $f(q) \cdot |V(G)|^{o(q)}$ for any computable function $f$
  (see, e.g., \cite[Corollary~14.23]{DBLP:books/sp/CyganFKLMPPS15}).
Thus, together with the fact that the reduction takes polynomial time in the size of the input,
\cref{lemma:multicut-w-hard-forward,lemma:multicut-w-hard-backward},
and arguments that are standard for parameter preserving reductions, we conclude that the following holds.

\begin{lemma}
  \label{lemma:multicut-w-hard-leafage}
  {\sc MultiCut with Undeletable Terminals} on chordal graphs
is {\em \W[1]-hard} when parameterized by leafage $\ell$
and assuming the {\em \ETH}, does not admit an algorithm
running in time $f(\ell)\cdot n^{o(\ell)}$ for any computable function $f$.
\end{lemma}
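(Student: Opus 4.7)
The plan is to combine the reduction, the two correctness lemmas, and a leafage bound into a single parameter-preserving reduction from \textsc{Multicolored Clique}, and then transfer the \ETH\ lower bound for that problem through it. First, I would verify that $H$ is chordal and that $\lf(H) \le 2q$ by exhibiting the tree representation suggested in \Cref{fig:redH}: take a central node with $2q$ paths (``legs'') attached, one per pair $(i,\gamma)$ with $i \in [q]$ and $\gamma \in \{\alpha,\beta\}$. Along the leg for $(i,\gamma)$ one places the models of the cliques $K^{i,\gamma}_a$ and of the terminals $p^{i,\gamma}_a$ in the natural order; the models of the central clique $K$ and of each edge-vertex $v_e$ include the central node. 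The key check is that for $e = v^i_{a_i}v^j_{a_j}$ the set of vertices to which $v_e$ is complete forms an upper contiguous portion of the legs for $V_i$ and $V_j$, so $\mld(v_e)$ is genuinely a subtree. Adjacencies in $H$ correspond exactly to intersections of the chosen models (straight from the construction), so $H$ is chordal and represented by a $2q$-leaf tree.

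Combining this with \Cref{lemma:multicut-w-hard-forward,lemma:multicut-w-hard-backward}, which establish the biconditional between \yes-instances of \textsc{Multicolored Clique} on $(G, q, (V_1, \dots, V_q))$ and \yes-instances of \MCundel\ on $(H, \Pterm, k)$, together with the fact that the construction runs in time polynomial in the input size, yields a parameter-preserving reduction from \textsc{Multicolored Clique} parameterized by $q$ to \MCundel\ on chordal graphs parameterized by $\ell$, with $\ell \le 2q$. This immediately gives the \W[1]-hardness claim of the lemma.

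For the \ETH\ lower bound, I would invoke the well-known fact (\cite[Corollary~14.23]{DBLP:books/sp/CyganFKLMPPS15}) that \textsc{Multicolored Clique} on $(G, q, (V_1, \dots, V_q))$ admits no $f(q) \cdot |V(G)|^{o(q)}$-time algorithm under \ETH. A hypothetical $f(\ell) \cdot n^{o(\ell)}$-time algorithm for \MCundel\ on chordal graphs, composed with the reduction, would solve \textsc{Multicolored Clique} in time $f'(q) \cdot |V(G)|^{o(q)}$: indeed, $|V(H)|$ is polynomial in $|V(G)|$ and $q$, and $\ell \le 2q$ lets us absorb the constant $2$ and replace $o(\ell)$ by $o(q)$, contradicting \ETH. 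The main step that warrants care is the leafage certificate, since it controls the slope of the lower bound; once the edge-vertex models are confirmed to fit as subtrees inside the $2q$-leaf skeleton, the remaining arguments are routine.
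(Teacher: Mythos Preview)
Your proposal is correct and follows essentially the same approach as the paper: combine the reduction with \Cref{lemma:multicut-w-hard-forward,lemma:multicut-w-hard-backward}, certify $\lf(H)\le 2q$ via the tree representation of \Cref{fig:redH}, and transfer the \ETH\ lower bound for \textsc{Multicolored Clique} through the polynomial-time parameter-preserving reduction. In fact you give more detail on the leafage certificate than the paper does, which simply asserts it and refers to the figure.
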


The remainder of this section is devoted to the proof of the following lemma,
which together with \Cref{lemma:multicut-w-hard-leafage} proves \Cref{thm:mc-w-hardness}.

\begin{lemma}
{\sc MultiCut with Undeletable Terminals} on chordal graph of leafage at most $\ell$ admits an \XP-algorithm running in time $n^{\calO(\ell)}$.
\end{lemma}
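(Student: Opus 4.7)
The plan is to reformulate \MCundel{} as a combinatorial edge-selection problem on the tree representation $T$ and then enumerate polynomially many candidate solutions, exploiting the bound on the number of leaves.

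First, I will establish a structural lemma on the tree representation $(T,\calM)$ of $G$. Recall that for each edge $e \in E(T)$, $\ver(e)$ is the clique of vertices whose model contains $e$. For every terminal pair $(p,p') \in \Pairs$, either $\mld(p) \cap \mld(p') \neq \emptyset$ (in which case $p$ and $p'$ are adjacent and we immediately return a trivial \no-instance), or we let $\pi(p,p')$ denote the unique path in $T$ between $\mld(p)$ and $\mld(p')$. Using the Helly property of subtrees in a tree, I will show that $S \subseteq V(G) \setminus V(\Pairs)$ is a $\Pairs$-multicut if and only if for every such pair there exists an edge $e \in \pi(p,p')$ with $\ver(e) \subseteq S$: in the forward direction, any $p$-$p'$ path in $G$ corresponds to a chain of overlapping models whose union, viewed as a subtree of $T$, covers $\pi(p,p')$, so some model in the chain contains each edge of $\pi(p,p')$; in the reverse direction, picking a non-$S$ vertex covering each edge of $\pi(p,p')$ yields a $p$-$p'$ walk in $G-S$. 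Setting $E_{\textsf{bad}} := \{e \in E(T) : \ver(e) \cap V(\Pairs) \neq \emptyset\}$, the problem becomes: find $E' \subseteq E(T) \setminus E_{\textsf{bad}}$ that intersects every $\pi(p,p')$ while minimizing $|\bigcup_{e \in E'} \ver(e)|$, and check whether this minimum is at most $k$.

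Next, I will exploit the leafage bound. Since $T$ has at most $\ell$ leaves, contracting its degree-two nodes yields a shape tree $T^\star$ with at most $2\ell - 2$ nodes, whose edges correspond to at most $2\ell - 3$ maximal paths (\emph{segments}) in $T$ between branching nodes and leaves. For each of these $\calO(\ell)$ segments, I will enumerate a polynomial-size \emph{profile} encoding the interaction of the chosen cut edges $E'$ with that segment; since there are $n^{\calO(1)}$ profiles per segment and $\calO(\ell)$ segments, the total enumeration is $n^{\calO(\ell)}$. For each global profile, I will then verify that every witness path is hit and compute the minimum cost by filling in any additional cuts within each segment through a polynomial-time dynamic program, where the restriction to a single segment reduces to a one-dimensional weighted point-stabbing problem over intervals.

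The main obstacle will be to formalize the per-segment profile so that it is both small enough to enumerate in $n^{\calO(1)}$ time and rich enough to represent some optimal solution. A natural candidate records, for each segment, the extremal cut edges (the topmost and bottommost edges of $E' \cap \sigma$) together with a concise description of how the cut treats the terminal-pair endpoints whose models lie partially within the segment; the remaining cut edges inside the segment are then recovered by the completion dynamic program. Showing that such a profile suffices --- equivalently, that an optimal $E'$ admits a canonical form determined by extremal data plus local per-segment computation --- is the technical heart of the argument, after which correctness and the $n^{\calO(\ell)}$ running time follow by combining the enumeration with the polynomial-time completion step.
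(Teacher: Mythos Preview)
Your proposal is essentially the same approach as the paper's, but you overcomplicate the completion step. Both arguments decompose $T$ into its $\calO(\ell)$ maximal degree-$2$ paths (segments) and, for each segment, enumerate the two extremal destroyed edges (first and last) of an optimal solution, giving $n^{\calO(\ell)}$ branches. The difference is in what happens next.

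The paper simply observes that once the set $D$ of guessed extremal edges is fixed and one deletes $V_D = \bigcup_{e\in D}\ver(e)$ from $G$, every connected component of the remainder that still contains a terminal pair must have its models confined to the \emph{middle piece} of some segment (the part strictly between the two guessed edges), which is a path; hence that component is an interval graph and the residual \MCundel{} instance on it is solved by the known polynomial-time algorithm of Guo et al. Branches in which some component with a surviving terminal pair is not interval are discarded, and one argues that for the correct guess this never happens. No additional ``profile'' information is needed, and there is nothing to prove about canonical forms.

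Your version instead proposes to record, beyond the extremal edges, further per-segment data about how terminals interact with the segment, and then to run a bespoke interval-stabbing DP. This extra machinery is not wrong, but it is unnecessary: the separability you are worried about (vertices whose models span several segments) is handled automatically once $V_D$ is removed, since any vertex surviving the deletion has its model entirely inside a single piece of $T-D$. Your ``one-dimensional weighted point-stabbing'' DP is precisely the interval-graph \MCundel{} algorithm the paper invokes as a black box. So your outline is correct, but you should drop the extended profile and the canonical-form lemma and instead argue directly, as the paper does, that guessing only the extremal destroyed edges already reduces each nontrivial residual component to an interval instance.
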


\begin{proof}
  Let $(G,\Pairs)$ be an instance of \MCundel where $G$ is a chordal graph of leafage at most $\ell$.
  Let $(T, \mld)$ be a tree representation of $G$ of leafage at most $\ell$.
  We say that a path in $T$ is a \emph{maximal degree-$2$ path}
  if it contains no branching nodes,
  except for possibly the first and last node of the path, 
  and it cannot be extended without violating this property (that is, it is maximal).
  A $\Pairs$-multicut $S$ of $G$ is said to \emph{destroy} an edge $e \in E(T)$ if $\ver(e) \subseteq S$.
  
  Let us root $T$ at an arbitrary node $\rr \in V(T)$.
  Since the number of leaves of $T$ is at most $\ell$,
  $T$ has at most $2\ell -2$ maximal degree-$2$ paths,
  one starting at each each leaf or branching node (except the root)
  and ending at the first ancestor in $T$ which is a branching node.

  Now for each maximal degree-$2$ path $Q$ from $\alpha$ to $\beta$ in $T$,
  guess the first (i.e.,\ closest to $\alpha$)
  and last (i.e.,\ closest to $\beta$) edge of $Q$,
  say $e_1^Q$ and $e_2^Q$, respectively, such that
  $S$ destroys $e_1^Q$ and $e_2^Q$.
  Note that, it might be the case that an optimal solution does not destroy an edge of $Q$
  or only destroys one edge of $Q$ (i.e.,\ $e_1^Q=e_2^Q$).
  Since the length of any maximal degree-$2$ path is $\OO(n)$,
  this creates at most $(n+1)^{2\ell}$ branches.

  In each such branch,
  let $D \subseteq E(T)$ be the set of guessed edges of $T$.
  Pick $V_D = \{\ver(e) \mid e \in D\}$ in the solution and
  delete $V_D$ from $G$:
  let $(G',\Pairs')$ be the resulting instances
  and further let $T'$ be obtained from $T$ by deleting the edges in $D$
  and set $\mld' = \mld|_{V(G')}$.
  Observe that the tree representation of each connected component of $G'$
  is given by some tree of the forest $T'$ together with $\mld'$
  restricted to the vertices of the corresponding connected component.
  Note that it is enough to solve the problem independently on each connected component of $G'$.

  Thus, without loss of generality, assume that $G'$ is connected
  and let $(T',\mld')$ be a tree representation of $G'$ as defined above.
  Suppose that $G'$ has at least one terminal pair in $\Pairs'$,
  say $(s,t) \in \Pairs' \subseteq \Pairs$.
  If $T'$ is a path, i.e.,\ $G'$ is an interval graph,
  then the problem can be solved in polynomial time~\cite[Theorem~5]{DBLP:journals/eor/GuoHKNU08}.
  Otherwise, we ignore this branch.

  The algorithm outputs a solution if there is at least one branch
  where a solution was computed.
  Otherwise, there is no solution.

  It is not difficult to see that the above algorithm indeed solves the problem,
  as it considers all the possible ways a solution could intersect every maximal degree-$2$ path.
\end{proof}

\newcommand{\Tzero}{T_0}
\newcommand{\AAA}{\texttt{A}}
\newcommand{\trunc}{\textsf{trunc}}

\section{\mwcfull\ on Chordal Graphs}
\label{sec:multiway-cut}

In this section, we consider the \mwcfull\ problem formally defined below.
Given a graph $G$ and a set $\Pterm \subseteq V(G)$,
a set $S \subseteq V(G) \setminus \Pterm$ is a called a \emph{$\Pterm$-\mwcset} in $G$ 
if $G-S$ has no $(p,p')$-path for any two distinct $p,p' \in \Pterm$.\\

\defproblemques{\mwcfull\ (\mwc)}{
An undirected graph $G$ and a set $\Pterm \subseteq V(G)$ of terminals.
}{
Find the size of a minimum $\Pterm$-\mwcset\ in $G$.
}

\bigskip
\noindent
The aim of this section is to prove Theorem~\ref{thm:mwc-poly}
which states that \mwcfull\ can be solved in $\polyn$-time on chordal graphs.
Before turning to the proof,
we first start with a few definitions.
Let $(T,\calM)$ a tree representation of a chordal graph $G$
where $T$ is rooted at an arbitrary node $\rr \in V(T)$.
Given a subtree $T'$ of $T$ and a set $Q \subseteq V(G)$, 
we let $Q_{|T'} \subseteq Q$ be the set of vertices $x \in Q$ such that $\calM(x) \subseteq V(T')$.
Now let $Q \subseteq V(G)$ be an independent set of $G$ 
such that for every leaf $\eta$ of $T$, $\ver(\eta) \cap Q \neq \emptyset$. 
Then the \emph{\truncated\ tree w.r.t. $Q$} is the tree $T^{\trunc}_Q$ obtained from $T$ as follows.
Let $\{\eta_1,\ldots,\eta_q\}$ be the set of leaves of $T$.
For each $i \in [q]$, 
let $Q_i \subseteq Q \setminus \ver(\rr)$ be the set of vertices $p \in Q \setminus \ver(\rr)$ such that
$\topnode_{\mld}(p)$ is on the
$(\eta_i,\rr)$-path in $T$,
and let $p_i \in Q_i$ be the vertex of $Q_i$ such that
$\topnode_{\mld}(p_i)$ is closest to $\rr$.
Then $T^{\trunc}_Q$ is obtained from $T$ by deleting the subtrees rooted at the children of the nodes 
in $\{\topnode_{\mld}(p_i)~|~i \in [q]\}$.
Note that, by construction, the set of leaves of $T^\trunc_Q$ is $\{\topnode_{\mld}(p_i)~|~i \in [q]\}$
and that, apart from the vertices in $\{p_i ~|~ i \in [q]\}$, there is at most one other vertex in $Q$ 
whose model intersects $V(T^\trunc_Q)$, namely the potential vertex in $Q \cap \ver(\rr)$
(note that if such a vertex exists, its model is in fact fully contained in $T^\trunc_Q$).
Finally, given a set $P \subseteq V(G)$,
a $P$-\mwcset\ $X$ in $G$ is said to \emph{destroy} an edge $e \in E(T)$ if $\ver(e) \subseteq X$.

\newcommand{\Ttilde}{\widetilde{T}}
\newcommand{\Gtilde}{\widetilde{G}}
\newcommand{\Ptilde}{\widetilde{P}}
\newcommand{\rrtilde}{\tilde{\textsf{r}}}

We now turn to the proof of \Cref{thm:mwc-poly}.
Throughout the remaining of this section, we let $(G,\Pterm)$ be an instance of \mwc, 
where $G$ is a $n$-vertex chordal graph,
and further let $(T,\mld)$ be a tree representation of $G$. 
First, we may assume that $\Pterm$ is an independent set:
indeed, if there exist $p,p' \in \Pterm$ such that $pp' \in E(G)$,
then $(G,\Pterm)$ is a \no-instance.
Furthermore, if a vertex $v \in V(G)$ does not belong to any $(p,p')$-path in $G$,
where $p,p' \in \Pterm$,
then it can be safely deleted
as no minimal $\Pterm$-\mwcset\ in $G$ may contain $v$.
Hence, we assume that every vertex in $G$ participates in some $(p,p')$-path where $p,p' \in \Pterm$;
in particular, we may assume that
for every leaf $\eta$ of $T$, $\ver(\eta) \cap \Pterm \neq \emptyset$.
Note that, consequently, for every internal node $\alpha \in V(T)$,
the truncation of $T_\alpha$ w.r.t. $P_{|T_\alpha}$ exists.

Now let $\Tzero$ be the tree obtained by adding a new node $\rr_0$ 
and connecting it to an arbitrary node $\rr \in V(T)$.
Observe that $(\Tzero,\mld)$ is also a tree representation of $G$.
In the following, we root $T_0$ at $\rr_0$.
To prove \Cref{thm:mwc-poly}, we design a dynamic program that computes, 
in a bottom-up traversal of $T_0$, the entries of a table $\AAA$ whose content is defined as follows.
The table $\AAA$ is indexed over the edges of $E(\Tzero)$.
For each node $\alpha \in V(T)$,
$\AAA[\alpha\parent_{\Tzero}(\alpha)]$ stores the size of a minimum $\Pterm_{|T_\alpha}$-\mwcset\ in $G_{|T_\alpha}$.
The size of a minimum $\Pterm$-\mwcset\ in $G$ may then be found in $\AAA[\rr\rr_0]$.
We describe below how to compute the entries of $\AAA$.

\paragraph*{Update Procedure.}
For every leaf $\eta$ of $T$, we set $\AAA[\eta\parent_{\Tzero}(\eta)] = 0$.
Consider now an internal node $\alpha$ of $T$. 
We show how to compute $\AAA[\alpha\parent_{\Tzero}(\alpha)]$
assuming that for every edge $e \in E(T_\alpha)$, the entry $\AAA[e]$ is correctly filled.

\newcommand{\sss}{\texttt{s}}
\newcommand{\ttt}{\texttt{t}}

Let $\Ttilde$ be the truncation of $T_\alpha$ w.r.t. $\Pterm_{|T_\alpha}$
and let $\Gtilde = G_{|\Ttilde}$.
Denote by $\eta_1,\ldots,\eta_q$ the leaves of $\Ttilde$.
Recall that, by construction, for every $i \in [q]$,
there exists $p_i \in \Pterm_{|T_\alpha}$ such that $\eta_i = \topnode_{\mld}(p_i)$:
we let $\Ptilde = \{p_i~|~i \in [q]\}$.
Furthermore, it may be that $\Pterm_{|T_\alpha} \cap \ver(\rr)$ is nonempty:
we let $\Ptilde_\rr = \Pterm_{|T_\alpha} \cap \ver(\rr)$.
Note that $|\Ptilde_\rr| \leq 1$:
if $\Ptilde_\rr \neq \emptyset$ then 
we refer to the terminal in $\Ptilde_\rr$ as the \emph{root terminal}.
Observe that $V(\Gtilde) \cap \Pterm_{|T_\alpha} = V(\Gtilde) \cap \Pterm = \Ptilde \cup \Ptilde_\rr$ by construction.
To compute $\AAA[\alpha\parent_{\Tzero}(\alpha)]$, we distinguish two cases:
\begin{itemize}
\item[(1)]
if $\Ptilde_\rr \neq \emptyset$ then we construct a unique instance $(H_0,\sss,\ttt,\wt_0)$ of \stcut;
\item[(2)]
otherwise, for every $i \in [0,q]$, we construct an instance $(H_i,\sss,\ttt,\wt_i)$ of \stcut.
\end{itemize}
We describe below how such instances are constructed.
First, recall that an instance of the \stcut\ problem consists of a digraph $D$,
vertices $s,t \in V(D)$, a weight function $\wt : E(D) \to \mathbb{N} \cup \{\infty\}$,
and the goal is to find a set $X \subseteq E(D)$ such that
$D-X$ has no $(s,t)$-path and $\wt(X)$ is minimum with this property,
where $\wt(X) = \sum_{u \in X} \wt(u)$. 

\newcommand{\sor}{\texttt{source}}
\newcommand{\snk}{\texttt{sink}}
\newcommand{\conn}{\texttt{conn}}
\newcommand{\rterm}{\texttt{rterm}}

\paragraph{Construction of the \stcut\ Instances.} 
For every $i \in [q]$, let us denote by $\Ptilde_i = \Ptilde \setminus \{p_i\}$
and let $\Ptilde_0 = \Ptilde$.
Consider $i \in [0,q]$.
Before turning to the formal construction of the instance $(H_i,\sss,\ttt,\wt_i)$, 
let us first give an intuitive idea of the construction. 
The digraph $H_i$ is obtained from $\Ttilde$
by orienting all edges of $\Ttilde$ towards its root $\rrtilde = \alpha$
and further adding vertices and weighted arcs to encode the graph $G_{|T_\alpha}$.
The arcs in $H_i$ corresponding to the edges of $\Ttilde$ are called the {\em tree arcs}
and the nodes in $H_i$ corresponding to the nodes of $\Ttilde$ are called the {\em tree nodes}.
The idea is that we separate, for each terminal $p \in \Ptilde_i$,
the node $\topnode_{\mld}(p)$ from the root $\rrtilde$.
To achieve this, we add a \emph{source} node $\sss$ and \emph{source} arcs from $\sss$ to $\topnode_{\mld}(p)$
(of infinite weight) and look for an $(\sss,\rrtilde)$-cut in $H_i$.
Since the edges of $T$ can presumably not be independently destroyed in a $\Pterm$-\mwcset,
we need some additional vertices to encode these dependencies.
For each vertex $v \in V(\Gtilde) \setminus \Ptilde_i$, we introduce a node $\gamma(v)$ in $H_i$
which is reachable via \emph{\connection} arcs (with infinite weight) 
from all the tree nodes that are contained in the model of $v$.
This node $\gamma(v)$ is further connected via a \emph{\sink} arc (of weight one) to $\topnode_{\mld}(v)$
which ensures that if we want to cut a tree arc,
we also have to cut all the sink arcs associated to vertices containing the corresponding edge in their model.
The index $i$ is then used to specify which root-to-leaf path of $\Ttilde$ is uncut: 
if $i = 0$ then every such path is cut, otherwise the $(\eta_i,\rrtilde)$-path is uncut.
To encode the rest of the solution,
we associate with each tree arc $(\beta,\delta)$ a weight $\wt_i((\beta,\delta))$ 
corresponding to the size of a minimum $\Pterm_{|\beta}$-\mwcset\ in $G_{|\beta}$.

We proceed with the formal construction of $H_i$.
The vertex set of $H_i$ is $V(H_i)=V(\Ttilde) \uplus \{\sss\} \uplus \{\Gamma\}$
where $\Gamma = \{\gamma(v) \mid v \in V(\Gtilde) \setminus \Ptilde\}$,
that is, $\Gamma$ contains a node of every non-terminal vertex in $\Gtilde$.
For every $z \in \Gamma$, we denote by $\gamma^{-1}(z)$ the corresponding vertex in $V(\Gtilde) \setminus \Ptilde$.
The arc set of $H_i$ is partitioned into four sets:
\begin{itemize}
\item the set $E_{\Ttilde}$ of \emph{tree arcs} containing all the edges of $\Ttilde$ oriented towards the root $\rrtilde$,
\item the set $E^i_{\sor}=\{(\sss,\topnode_{\mld}(p)) \mid p \in \Ptilde_i\}$ of \emph{source} arcs,  
\item the set  $E_{\conn} = \{(\alpha,\gamma(v)) \mid \gamma(v) \in \Gamma, \alpha \in \mld(v) \cap V(\Ttilde)\}$ of \emph{\connection} arcs and
\item the set $E_{\snk}=\{(\gamma(v),\topnode_{\mld}(v)) \mid v \in V(\Gtilde) \setminus \Ptilde\}$ of \emph{\sink} arcs.
\end{itemize}
Furthermore, if $\Ptilde_\rr \neq \emptyset$, 
then we let $E_{\rterm} \subseteq E_{\Ttilde}$ be the set of tree arcs $(\beta,\delta) \in E_{\Ttilde}$ 
such that the edge $\beta\delta$ is contained in the model of the root terminal;
otherwise, we let $E_{\rterm} = \emptyset$. 
The weight function $\wt_i : E(H_i) \to \mathbb{N} \cup \{\infty\}$ is defined as follows.
For every $j \in [q]$, let $\rho_j$ be the path in $\Ttilde$ from $\eta_j$ to $\rrtilde$ 
and let $\overrightarrow{\rho_j}$ be the corresponding directed path in $H_i$
(that is, $\overrightarrow{\rho_j}$ is the path in $H_i$ from $\eta_j$ to $\rrtilde$ consisting only of tree arcs).
Then for every arc $e$ of $H_i$, 
\[
  \wt_i(e) = 
  \begin{cases}
  \AAA[e]  &\text{if } i=0 \text{ and } e \in E_{\Ttilde} \setminus E_{\rterm} \\
  \AAA[e] &\text{if } i\neq 0\text{, } e \in E_{\Ttilde} 
  \text{ and } e \text{ does not belong to the path } \overrightarrow{\rho_i} \\
  1 &\text{if } e \in E_{\snk} \\
  \infty &\text{otherwise.}
    \end{cases}
\]
Note, in particular, that every arc in $E_{\rterm}$ (if any) has infinite weight.
Similarly, if $i \neq 0$, then every arc of the path $\overrightarrow{\rho_i}$ has infinite weight.
This completes the construction of the instance $(H_i,\sss,\ttt=\rrtilde,\wt_i)$ (see \Cref{fig:Hi}).
It is easy to see that such an instance can be constructed in $\calO(n^2)$-time.\\

\tikzset{
  circ/.style = {circle,draw,fill,inner sep=1.3pt},
  circb/.style = {circle,draw=teal,fill=teal,text=teal,inner sep=.8pt},
  circr/.style = {circle,draw=red,fill=red,inner sep=1.3pt},
  circg/.style = {circle,draw=olive,fill=olive,inner sep=.8pt},
  circlg/.style = {circle,draw=gray,fill=gray,inner sep=.8pt},
  scirc/.style = {circle,draw,fill,inner sep=.8pt},
}

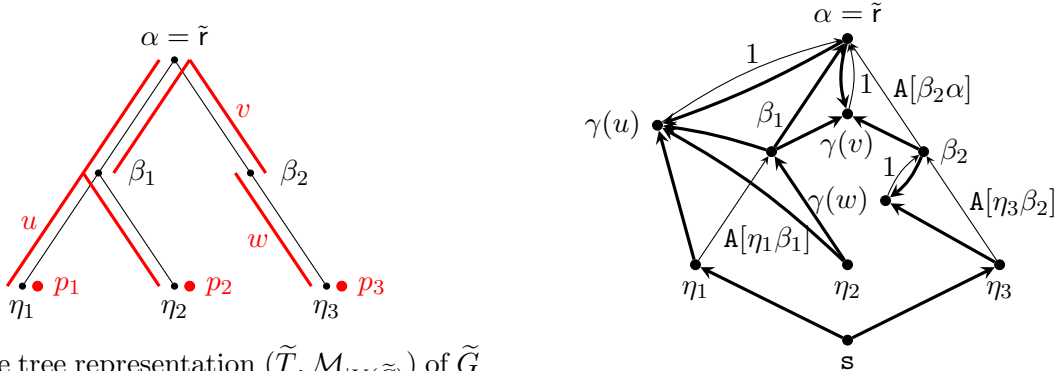
\begin{figure}
\begin{subfigure}[b]{.45\textwidth}
\centering
\begin{tikzpicture}
\node[scirc,label=below:{\small $\eta_1$}] (eta1) at (0,0) {};
\node[scirc,label=below:{\small $\eta_2$}] (eta2) at (2,0) {};
\node[scirc,label=below:{\small $\eta_3$}] (eta3) at (4,0) {};
\node[scirc,label={[label distance=.2cm]0:\small $\beta_1$}] (beta1) at (1,1.5) {};
\node[scirc,label={[label distance=.2cm]0:\small $\beta_2$}] (beta2) at (3,1.5) {};
\node[scirc,label=above:{\small $\alpha = \rrtilde$}] (alpha) at (2,3) {};

\draw (eta1) -- (alpha)
(eta2) -- (beta1)
(eta3) -- (alpha);

\node[circr,label=right:{{\color{red} \small $p_1$}}] at (.2,0) {};
\node[circr,label=right:{{\color{red} \small $p_2$}}] at (2.2,0) {};
\node[circr,label=right:{{\color{red} \small $p_3$}}] at (4.2,0) {};

\draw[very thick,red] (-.2,0) -- (1.8,3) node[left,pos=0.28] {\color{red} \small $u$};
\draw[very thick,red] (1.8,0) -- (.8,1.5);

\draw[very thick,red] (3.8,0) -- (2.8,1.5) node[left,pos=.4] {\color{red} \small $w$};

\draw[very thick,red] (3.2,1.5) -- (2.2,3) node[right,pos=.55] {\color{red} \small $v$};
\draw[very thick, red] (2.2,3) -- (1.2,1.5);
\end{tikzpicture}
\caption{The tree representation $(\Ttilde,\calM_{|V(\Gtilde)})$ of $\Gtilde$ where $V(\Gtilde) = \{p_1,p_2,p_3,u,v,w\}$ 
and $\Ptilde_\rr = \emptyset$.}
\end{subfigure}
\hspace*{.5cm}
\begin{subfigure}[b]{.45\textwidth}
\centering
\begin{tikzpicture}
\node[circ,label=below:{\small $\eta_1$}] (eta1) at (0,0) {};
\node[circ,label=below:{\small $\eta_2$}] (eta2) at (2,0) {};
\node[circ,label=below:{\small $\eta_3$}] (eta3) at (4,0) {};
\node[circ,label={[label distance=.1cm]92:\small $\beta_1$}] (beta1) at (1,1.5) {};
\node[circ,label=right:{\small $\beta_2$}] (beta2) at (3,1.5) {};
\node[circ,label=above:{\small $\alpha = \rrtilde$}] (alpha) at (2,3) {};
\node[circ,label=left:{\small $\gamma(u)$}] (u) at (-.5,1.85) {};
\node[circ,label=below:{\small $\gamma(v)$}] (v) at (2,2) {};
\node[circ,label=left:{\small $\gamma(w)$}] (w) at (2.5,.85) {};
\node[circ,label=below:{\small $\sss$}] (s) at (2,-1) {};

\draw[very thick,->,>=stealth] (s) -- (eta1);
\draw[very thick,->,>=stealth] (s) -- (eta3);

\draw[very thick,->,>=stealth] (beta1) edge[bend right=5] (u);
\draw[very thick,->,>=stealth] (eta1) -- (u);
\draw[very thick,->,>=stealth] (eta2) edge[bend right=7] (u);
\draw[very thick,->,>=stealth] (alpha) edge[bend left=5] (u);
\draw[->,>=stealth] (u) edge[bend left=10] (alpha);
\node[draw=none] at (.75,2.8) {\small $1$};

\draw[very thick,->,>=stealth] (beta1) -- (v);
\draw[very thick,->,>=stealth] (beta2) -- (v);
\draw[very thick,->,>=stealth] (alpha) edge[bend right=15] (v);
\draw[->,>=stealth] (v) edge[bend right=15] (alpha);
\node[draw=none] at (2.25,2.4) {\small $1$};

\draw[very thick,->,>=stealth] (eta3) -- (w);
\draw[very thick,->,>=stealth] (beta2) edge[bend left=15] (w);
\draw[->,>=stealth] (w) edge[bend left=15] (beta2);
\node[draw=none] at (2.55,1.3) {\small $1$};

\draw[->,>=stealth] (eta1) -- (beta1) node[right,pos=.2] {\small $\AAA[\eta_1\beta_1]$};
\draw[very thick,->,>=stealth] (eta2) -- (beta1);
\draw[very thick,->,>=stealth] (beta1) -- (alpha);
\draw[->,>=stealth] (eta3) -- (beta2) node[right,pos=.55] {\small $\AAA[\eta_3\beta_2]$};
\draw[->,>=stealth] (beta2) -- (alpha) node[right,pos=.55] {\small $\AAA[\beta_2\alpha]$};
\end{tikzpicture}
\caption{The instance $(H_2,\sss,\rrtilde,\wt_2)$ (thick arcs have infinite weight).}
\end{subfigure}
\caption{An illustration of the construction of the \stcut\ instances.}
\label{fig:Hi}
\end{figure}

\newcommand{\cost}{\textnormal{\texttt{cost}}}

Now let $X_0$ be an $(\sss,\rrtilde)$-cut in $H_0$ such that $\wt_0(X_0)$ is minimum;
and if $\Ptilde_\rr = \emptyset$, 
then for every $i \in [q]$, further let $X_i$ be an $(\sss,\rrtilde)$-cut in $H_i$ such that $\wt_i(X_i)$ is minimum.
For each $i \in [q]$, let us denote by $\cost_i =  \AAA[\eta_i\parent_{\Tzero}(\eta_i)]$ 
and let $\cost_0=0$.
Then we set 
\[
\AAA[\alpha\parent_{\Tzero}(\alpha)] = 
\begin{cases}
|X_0| & \text{if } \Ptilde_\rr \neq \emptyset \\
\min_{i \in [0,q]}\{|X_i| + \cost_i\} & \text{otherwise}
\end{cases}
\]
In the following, for convenience, we let $I = [0,q]$ if $\Ptilde_\rr = \emptyset$, 
and $I = \{0\}$ otherwise.
We next show that the entry $\AAA[\alpha\parent_{\Tzero}(\alpha)]$ is updated correctly. 
To this end, we show that $G_{|T_\alpha}$ has a $P_{|T_\alpha}$-\mwcset\ of size at most $k$
if and only if there exists $i \in I$ such that
$H_i$ has an $(\sss,\rrtilde)$-cut of weight at most $k -\cost_i$ w.r.t. $\wt_i$.

\begin{lemma}\label{lem:backward}
  For any $i \in I$, if $H_i$ has an $(\sss,\rrtilde)$-cut $Y$ such that $\wt_i(Y) \leq k -\cost_i$,
  then $G_{|T_\alpha}$ has a $P_{|T_\alpha}$-\mwcset\ of size at most $k$.
\end{lemma}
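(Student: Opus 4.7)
The plan is to construct explicitly a $\Pterm_{|T_\alpha}$-multiway cut $S$ of size at most $k$ from the given cut $Y$. A natural candidate is the union of three sets: (i) every vertex $v$ whose sink arc $(\gamma(v), \topnode_\mld(v))$ lies in $Y$; (ii) for each tree arc $(\beta, \delta) \in Y$, an optimal $\Pterm_{|T_\beta}$-multiway cut $C_{\beta\delta}$ of $G_{|T_\beta}$, which exists by the inductive correctness of the already-computed entries of $\AAA$ and has size $\AAA[\beta\delta]$; and (iii) when $i \geq 1$, an optimal $\Pterm_{|T_{\eta_i}}$-multiway cut $C_{\eta_i}$ of $G_{|T_{\eta_i}}$ of size $\cost_i$. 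The cardinality bound will be immediate: since $\wt_i(Y) < \infty$, no arc of $E^i_\sor$, $E_\conn$, $E_\rterm$, or $\overrightarrow{\rho_i}$ (when $i \geq 1$) can lie in $Y$; hence $Y$ consists solely of ordinary tree arcs and sink arcs, whose $\wt_i$-weights sum to exactly the size contributions of $S$, giving $|S| \leq \wt_i(Y) + \cost_i \leq k$.

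Validity of $S$ as a multiway cut I would argue by contradiction. Assume that two distinct terminals $p, p' \in \Pterm_{|T_\alpha}$ are joined by a path $\pi$ in $G_{|T_\alpha} - S$; from $\pi$ I would construct an $(\sss, \rrtilde)$-path in $H_i$ avoiding $Y$, contradicting that $Y$ is a cut. The construction exploits the fact that the union $\bigcup_{v \in V(\pi)} \mld(v)$ is a connected subtree of $T_\alpha$, along which a walk between $\mld(p)$ and $\mld(p')$ can be traced. Upward steps of the walk are realized directly by tree arcs of $H_i$, none of which lie in $Y$ since $\pi$ avoids the recursive mwc-sets $C_{\beta\delta} \subseteq S$; any downward detour from a node $\beta \in \mld(v) \cap V(\Ttilde)$ to $\topnode_\mld(v)$, for an internal non-terminal $v$ of $\pi$, is simulated by the gadget $\beta \to \gamma(v) \to \topnode_\mld(v)$, both of whose arcs are safely avoided by $Y$ (the connection arc has infinite weight, and the sink arc is out of $Y$ precisely because $v \notin S$).

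The main obstacle will be a careful case analysis on where $p$ and $p'$ lie, combined with the asymmetry introduced by the index $i$. Terminals buried inside a truncated subtree $T_{\eta_j}$ cannot actually occur on $\pi$: either the tree arc $(\eta_j, \parent_{\Tzero}(\eta_j))$ lies in $Y$ and the corresponding recursive mwc-set included in $S$ already blocks the escape from $T_{\eta_j}$, or else $j = i$ and the set $C_{\eta_i} \subseteq S$ does so. When $i \geq 1$, the absence of a source arc into $\eta_i$ and the infinite weights along $\overrightarrow{\rho_i}$ mean that the lift cannot start from the terminal at $\eta_i$, and the companion $C_{\eta_i}$ must instead supply the separation; when a root terminal exists, the infinite weights on $E_\rterm$ guarantee that the lifted path is forced to reach $\rrtilde$ without accidentally bypassing the cut through the root terminal's model.
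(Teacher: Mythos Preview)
Your construction of $S$ is natural and the size bound is indeed immediate from the finiteness of $\wt_i(Y)$. However, the validity argument (that $S$ is a $\Pterm_{|T_\alpha}$-\mwcset) has two concrete gaps.

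\medskip
\textbf{Gap 1.} Your justification ``upward steps of the walk are realized directly by tree arcs of $H_i$, none of which lie in $Y$ since $\pi$ avoids the recursive mwc-sets $C_{\beta\delta}\subseteq S$'' is a non sequitur. A tree arc $(\beta,\delta)$ may perfectly well belong to $Y$ while $\pi$ still avoids $C_{\beta\delta}$: the latter only tells you that $\pi$ is not contained in $G_{|T_\beta}$ (otherwise the multiway cut $C_{\beta\delta}$ of $G_{|T_\beta}$ would intersect it), not that the arc is absent from $Y$. So along the walk you may hit tree arcs in $Y$, and you must explain how to bypass them.

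\medskip
\textbf{Gap 2.} The lifted walk traces the union $\bigcup_{v\in V(\pi)}\mld(v)$, which is a subtree whose topmost node $\mu$ need not be $\rrtilde$. Your description produces a path from $\sss$ to (at best) $\mu$, not to $\rrtilde$. To finish you need the extra observation that no tree arc on the segment from $\mu$ to $\rrtilde$ lies in $Y$: indeed, if some $(\beta,\delta)\in Y$ had $\beta$ an ancestor of $\mu$, then $\pi\subseteq G_{|T_\mu}\subseteq G_{|T_\beta}$, so the recursive cut $C_{\beta\delta}\subseteq S$ would already block $\pi$. This is the real role of the $C_{\beta\delta}$'s, not the one you assigned them in Gap~1.

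\medskip
\textbf{Comparison with the paper.} The paper takes a different, forward route and never lifts a path back into $H_i$. It first proves a structural claim about $Y$: on every leaf-to-root path $\overrightarrow{\rho_j}$ there exists a tree arc $e_j=(x_j,y_j)\in Y$ with the property that \emph{every} vertex $v\in V(\Gtilde)$ whose model contains $x_j$ and whose top node lies strictly above $x_j$ has its sink arc in $Y$. Using one such ``clean'' arc per leaf, the paper picks $E=\{e_j\}\cup\{e^*\}$, sets $S=\bigcup_{e\in E} S_e\cup\{\gamma^{-1}(z):z\in N_e\}$, and verifies directly that any terminal path must cross one of the edges $e_j$ and is therefore hit by a vertex in $S$. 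Your $S$ is in fact a \emph{superset} of the paper's (you use all tree arcs of $Y$, not just the clean ones, and all sink-arc vertices), so your $S$ is a valid multiway cut once one has the paper's structural claim; but the lifting argument you sketch does not supply an independent proof of this.
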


\begin{proof}
Assume that there exists $i \in I$ such that 
$H_i$ has an $(\sss,\rrtilde)$-cut $Y$ where $\wt_i(Y) \leq k -\cost_i$.
For every $j \in [q] \setminus \{i\}$, 
let $A_j$ be the set of tree arcs on the path $\overrightarrow{\rho_j}$ belonging to $Y$ 
(recall that $\overrightarrow{\rho_j}$ is the path in $H_i$ from $\eta_j$ to $\rrtilde$ consisting only of tree arcs).
Note that since $Y$ is an $(\sss,\rrtilde)$-cut, $A_j \neq \emptyset$ for every $j \in [q] \setminus \{i\}$.

  \begin{claim}\label{claim:arcs-and-gamma}
    For every terminal $j \in [q] \setminus \{i\}$, there exists an arc $(x,y) \in A_j$
    such that for every $z \in N_{H_i}^+(x) \setminus (N_{H_i}^-(x) \cup \{y\})$, the sink arc with tail $z$ belongs to $Y$.
  \end{claim}
  
  \begin{claimproof}
    Suppose for a contradiction that this does not hold for some index $j \in [q] \setminus \{i\}$, that is,
    for every arc $(x,y) \in A_j$, there exists $z \in N_{H_i}^+(x) \setminus (N_{H_i}^-(x) \cup \{y\})$
    such that the sink arc with tail $z$ does not belong to $Y$.
    Let $(x_1,y_1),\ldots,(x_a,y_a)$ be the arcs of $A_j$
    ordered according to their order of appearance when traversing the path $\overrightarrow{\rho_j}$.
    We show that, in this case, there is a path from $\sss$ to $\rrtilde$ in $H-Y$.
    For every $b \in [a]$,
    denote by $Z_b \subseteq N_{H_i}^+(x_b) \setminus (N_{H_i}^-(x_b) \cup \{y_b\})$
    the set of vertices $z$ such that the sink arc with tail $z$ does not belong to $Y$.
    Let $b_1,\ldots,b_w \in [a]$ be the longest sequence defined as follows:
    \begin{itemize}
      \item
      $b_1 \in [a]$ is the largest index such that $Z_1 \cap Z_{b_1} \neq \emptyset$ and
      \item
      for every $l > 1$, $b_{l} \in [a]$ is the largest index such that $Z_{b_{l-1}+1} \cap Z_{b_{l}} \neq \emptyset$.
    \end{itemize}
    For every $l \in [w]$,
    consider a vertex $z_{b_{l}} \in Z_{j_{l}}$
    and let $h_{b_{l}} \in N_{H_i}^+(z_{b_{l}})$ be the head of the sink arc with tail $z_{b_{l}}$.
    Then for every $l \in [w-1]$, $h_{b_{l}}$ lies on the path $\overrightarrow{\rho_j}[y_{b_{l}},x_{b_{l}+1}]$:
    indeed, since $z_{b_{l}} \notin Z_{b_{l} + 1}$ by the choice of $b_{l}$,
    either $z_{b_{l}} \notin N_{H_i}^+(x_{b_{l}+1})$
    or $z_{b_{l}} \in N_{H_i}^+(x_{b_{l}+1}) \cap N_H^-(x_{b_{l}+1})$;
    but $z_{b_{l}} \in N_{H_i}^+(x_{b_{l}}) \setminus N_{H_i}^-(x_{b_{l}})$ by construction,
    and so, $h_{b_{l}}$ necessarily lies on $\overrightarrow{\rho_j}[y_{b_{l}},x_{b_{l}+1}]$.

    Now observe that, by maximality of the sequence, $b_w = a$:
    indeed, if $b_w < a$ then the sequence could be extended as $Z_{b_w + 1} \neq \emptyset$ by assumption.
    Since $z_{b_w} \notin N_{H_i}^-(x_{b_w})$, this implies, in particular, that 
    $h_{b_w}$ lies on the path $\overrightarrow{\rho_j}[y_{b_w},\rrtilde]$. 
    It follows that
    \[
      \sss \overrightarrow{\rho_j}[\eta_j,x_1] z_{b_1} 
      \overrightarrow{\rho_j}[h_{b_1},x_{b_1+1}]
      z_{b_2} \ldots z_{b_{l}} \overrightarrow{\rho_j}[h_{b_{l}},x_{b_{l}+1}]
      z_{b_{{l}+1}} \ldots \overrightarrow{\rho_j}[h_{b_{w-1}},x_{b_{w-1}+1}]
      z_{b_w} L[h_{b_w},\rrtilde]
    \]
    is a path from $\sss$ to $\rrtilde$ in $H-Y$, a contradiction which proves our claim.
  \end{claimproof}

  For every $j \in [q] \setminus \{i\}$,
  let $e_j = (x_j,y_j) \in A_j$ be the arc closest to $\rrtilde$ such that
  for every $z \in N_{H_i}^+(x_j) \setminus (N_{H_i}^-(x_j) \cup \{y_j\})$, the sink arc with tail $z$ belongs to $Y$
  (note that we may have $e_j = e_{j'}$ for two distinct $j,j' \in [q] \setminus \{i\}$).
  Denote by $E = \{e_j \mid j \in [q] \setminus \{i\}\} \cup \{e^*\}$ where $e^* = (\eta_i, \parent(\eta_i))$.  
  For every $e=(x,y) \in E$,
  let $\Ptilde_{e} \subseteq \Ptilde_i$ be the set of terminals in $\Ptilde_i$
  which are also terminals in the instance restricted to $T_x$.
  Note that $\{\Pterm_{e} \mid e \in E \setminus \{e^*\}\}$ is a partition of $\Ptilde_i$:
  indeed, by construction, every $p \in \Ptilde_i$ belongs to at least one such set
  and if there exist $e,e' \in E \setminus \{e^*\}$ such that $\Ptilde_{e} \cap \Ptilde_{e'} \neq \emptyset$,
  then for any $j \in [q] \setminus \{i\}$ such that $p_j \in \Pterm_{e} \cap \Pterm_{e'}$, $e,e' \in A_j$;
  in particular, both $e$ and $e'$ lie on the path $\overrightarrow{\rho_j}$,
  a contradiction to the choice of the arc in $A_j$.

  Now for every $e = (x,y) \in E$, let $S_{e}$ be a minimum $P_{|T_{x}}$-multiway-cut in $G_{|T_{x}}$
  and denote by $N_{e} = N_{H_i}^+(x) \setminus (N_{H_i}^-(x) \cup \{y\})$.
  We define
  \[
    S = S_{e^*} \cup \bigcup_{e \in E \setminus \{e^*\}} S_{e} \cup \{\gamma^{-1}(z) \mid z \in N_{e}\}
    .
  \]
  
  \begin{claim}
    $S$ is a $P_{|T_\alpha}$-\mwcset\ in $G_{|T_\alpha}$.
  \end{claim}
  
  \begin{claimproof}
    Since for every $e = (x,y) \in E$, $S_{e}$ is a $P_{|T_{x}}$-multiway-cut in $G_{|T_{x}}$,
    it is in fact enough to show that for every $e,e' \in E$, $p \in \Ptilde_{e}$ and $p' \in \Ptilde_{e'}$,
    there is no path from $p$ to $p'$ in $G_{|T_\alpha} - S$.

    Consider therefore $j,j' \in [q] \setminus \{i\}$ 
    such that $p_j \in \Ptilde_{e}$ and $p_{j'} \in \Ptilde_{e'}$ for two distinct $e,e' \in E$.
    Since, as shown above, $\{ \Ptilde_f \mid f \in E \setminus \{e^*\}\}$ is a partition of $\Ptilde_i$,
    $p_{j'} \notin \Ptilde_{e}$ and $p_j  \notin \Ptilde_{e'}$;
    in particular, $e'$ does not lie on the path $\overrightarrow{\rho_j}$
    and $e$ does not lie on the path $\overrightarrow{\rho_{j'}}$.
    It follows that any path in $G_{|T_\alpha}$ from $p_j$ to $p_{j'}$ 
    contains at least one vertex $x$ whose model contains the edge corresponding to $e$;
    but then, $\gamma(x) \in N_{e}$ and so, $x \in S$ by construction.
    Thus, there is no path from $p_j$ to $p_{j'}$ in $G_{|T_\alpha} - S$. 
      \end{claimproof}
      
  Finally, note that, by construction,
  \begin{equation*}
  \begin{split}
    |S| &= |S_{e^*}| + \sum_{e \in E \setminus \{e^*\}} |S_{e}| 
    + \abs*{\bigcup_{e \in E \setminus \{e^*\}}  \{\gamma^{-1}(z) \mid z \in N_{e}\}}\\
    &= |S_{e^*}|  + \sum_{e \in E \setminus \{e^*\}} \wt_i(e) 
    + \sum_{z \in \bigcup_{e \in E \setminus \{e^*\}} N_{e}} \wt_i((z,\topnode_{\mld}(\gamma^{-1}(z))))\\
    &\leq \cost_i + \wt_i(Y) \leq k
  \end{split}
  \end{equation*}
  which concludes the proof.
\end{proof}

\begin{lemma}\label{lem:forward}
If $G_{|T_\alpha}$ has a $P_{|T_\alpha}$-\mwcset\ $X$ of size at most $k$,
then there exists $i \in I$ such that $H_i$ has an $(\sss,\rrtilde)$-cut $Y$ where $\wt_i(Y) \leq k -\cost_i$.
\end{lemma}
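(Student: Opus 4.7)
The plan is to choose an appropriate index $i \in I$ guided by the connectivity structure of $G_{|T_\alpha} - X$, explicitly construct an $(\sss,\rrtilde)$-cut $Y$ in $H_i$, and bound $\wt_i(Y)$ by charging each contributing arc injectively against a distinct vertex of $X$.

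First, I would let $K^*$ denote the (unique) connected component of $G_{|T_\alpha} - X$ containing $\ver(\alpha) \setminus X$ (taking $K^* = \emptyset$ if $\ver(\alpha) \subseteq X$). Since $\ver(\alpha)$ is a clique in $G$ and $X$ is a multiway cut, at most one terminal of $V(\Gtilde) \cap \Pterm$ lies in $K^*$. If $\Ptilde_\rr \neq \emptyset$, then the root terminal lies in $K^*$ and I set $i = 0$. Otherwise I set $i = j$ when $K^*$ contains some $p_j$ with $j \in [q]$, and $i = 0$ when no such $j$ exists; in every case $i \in I$.

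Next, I would define $\bar R = \{\alpha\} \cup \{\beta \in V(\Ttilde) : \ver(\beta) \cap K^* \neq \emptyset\}$ and $R = V(\Ttilde) \setminus \bar R$, extending $R$ inside $V(H_i)$ to contain $\sss$ and every $\gamma(v)$ with $\mld(v) \cap V(\Ttilde) \not\subseteq \bar R$. The key observation is that $\bar R$ is ancestor-closed in $\Ttilde$: for $\beta \in \bar R \setminus \{\alpha\}$, picking $u \in K^* \cap \ver(\beta)$ and a path in $G_{|T_\alpha} - X$ from $u$ to some $u' \in K^* \cap \ver(\alpha)$, the union of the models of vertices on this path is a connected subgraph of $T$ containing both $\beta$ and $\alpha$, hence the whole $\beta$-to-$\alpha$ path, so every intermediate node has nonempty $\ver \cap K^*$. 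Consequently the components of $\Ttilde[R]$ are subtrees hanging off $\bar R$, each with a unique topmost node $\beta^s$ whose parent lies in $\bar R$, and $\{\beta^s\}_s$ forms an antichain in $\Ttilde$ (hence in $\Tzero$). I would then set
\[
    Y \;=\; \{(\beta^s, \parent(\beta^s)) : s\} \;\cup\; \{(\gamma(v), \topnode_\mld(v)) : v \in V(\Gtilde) \setminus \Pterm,\ \gamma(v) \in R,\ \topnode_\mld(v) \in \bar R\}.
\]
The same connectivity argument shows that every node of $\overrightarrow{\rho_i}$ lies in $\bar R$ when $i \neq 0$ (because $p_i \in K^*$), and that both endpoints of every arc of $E_\rterm$ lie in $\bar R$ when $\Ptilde_\rr \neq \emptyset$, so no infinite-weight arc belongs to $Y$. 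That $Y$ separates $\sss$ from $\rrtilde$ is then routine: each $\eta_j$ with $j \neq i$ lies in $R$ (else $p_j \in \ver(\eta_j)$ would be forced into $K^*$), and every $(\sss,\rrtilde)$-path must leave $R$ either through a tree arc of the first set or through a connection-sink pair whose sink arc belongs to the second set.

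Finally, I would bound $\wt_i(Y)$. Each tree arc $(\beta^s, \parent(\beta^s))$ contributes $\AAA[(\beta^s, \parent(\beta^s))] \leq |X \cap V(G_{|T_{\beta^s}})|$, since $X \cap V(G_{|T_{\beta^s}})$ is a $P_{|T_{\beta^s}}$-multiway cut in $G_{|T_{\beta^s}}$, and the subtrees $T_{\beta^s}$ in $\Tzero$ are pairwise disjoint by the antichain property. For each sink arc $(\gamma(v), \topnode_\mld(v))$ in $Y$, the vertex $v$ must lie in $X$: otherwise $v$ would share $\topnode_\mld(v)$ with some vertex of $K^*$ (or have $\topnode_\mld(v) = \alpha$ and hence lie in $K^*$ itself), placing $v$ into $K^*$ and forcing $\mld(v) \subseteq \bar R$, contradicting $\gamma(v) \in R$. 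The sink-arc-charged and tree-arc-charged vertices are disjoint: if $v \in V(\Gtilde) \cap V(G_{|T_{\beta^s}})$ then $\mld(v) \subseteq V(\Ttilde) \cap V(T_{\beta^s})$ lies in the $R$-subtree below $\beta^s$, so $\topnode_\mld(v) \in R$ excludes $v$ from the sink-arc count. When $i \neq 0$, the at-least-$\cost_i$ vertices of $X \cap V(G_{|T_{\eta_i}})$ escape the accounting entirely, because $\eta_i \in \bar R$ together with the antichain property force $T_{\eta_i}$ to be disjoint from every $T_{\beta^s}$ in $\Tzero$, while the models of these vertices extend below $\Ttilde$. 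Summing yields $\wt_i(Y) \leq |X| - \cost_i$. The main obstacles will be establishing the ancestor-closure of $\bar R$ via the chordal union-of-models argument, and verifying that the sink-arc and tree-arc charges remain disjoint.
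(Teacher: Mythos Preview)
Your proof is correct and takes a genuinely different route from the paper's. The paper argues first that any $P_{|T_\alpha}$-\mwcset\ must \emph{destroy} an edge on every root-to-leaf path $\rho_j$ of $\Ttilde$ except at most one, selects for each $j$ the destroyed edge $e_j$ \emph{closest to the leaf} $\eta_j$, and then defines $Y$ to consist of the corresponding tree arcs together with the sink arcs of all vertices of $X$ whose model contains some $e_j$; proving that this $Y$ is an $(\sss,\rrtilde)$-cut requires a somewhat delicate path-rerouting contradiction argument. By contrast, you work with the component $K^*$ of $G_{|T_\alpha}-X$ meeting $\ver(\alpha)$ and the induced partition $(R,\bar R)$ of $V(H_i)$, and take $Y$ to be the set of arcs crossing from $R$ to $\bar R$; this makes the cut property almost immediate (a clean partition argument), and the weight bound follows by an injective charging into $X$. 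In effect, the paper cuts at the \emph{lowest} destroyed edge on each path while your construction cuts at the \emph{highest} boundary of the $K^*$-reachable region; both are valid and yield the same bound. Your approach is more conceptual and makes the accounting (especially the disjointness of the sink-arc, tree-arc, and $\cost_i$ contributions) particularly transparent, whereas the paper's approach is more explicit about which edges are destroyed and ties in directly with the preceding structural observation about root-to-leaf paths. One minor remark: in your definition of $Y$ you wrote $v\in V(\Gtilde)\setminus\Pterm$, but $\Gamma$ is indexed by $V(\Gtilde)\setminus\Ptilde$; this makes no difference since the root terminal, if present, lies in $K^*$ and hence has $\gamma(\cdot)\in\bar R$, but it is worth aligning the notation.
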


\begin{proof}
Recall that for every $j \in [q]$, $\rho_j$ is the unique $(\eta_j,\rrtilde)$-path in $\Ttilde$.
To prove the lemma, we first show the following.

\begin{claim}
\label{clm:GtoHi}
If there exists $i \in [q]$ such that $G_{|T_\alpha}$ has a $P_{|T_\alpha}$-\mwcset\ $X$ of size at most $k$ where
\begin{itemize}
\item[(1)]
$X$ does not destroy any edge of $\rho_i$ and
\item[(2)]
for every $j \in [q] \setminus \{i\}$, $X$ destroys an edge of $\rho_j$,
\end{itemize}
then $H_i$ has an $(\sss,\rrtilde)$-cut $Y$ such that $\wt_i(Y) \leq k -\cost_i$.
\end{claim}
 
 \begin{claimproof}
Assume that such an index $i \in [q]$ exists 
and let $X$ be a $P_{|T_\alpha}$-\mwcset\ $X$ of size at most $k$ satisfying item (1) and (2).
Note that since $X$ does not destroy any edge of $\rho_i$, $\Ptilde_\rr = \emptyset$
for, otherwise, $p_i$ and the root terminal would be in the same connected component of $G_{|T_\alpha} - X$
thereby contradicting the fact that $X$ is a $P_{|T_\alpha}$-\mwcset. 
For every $j \in [q] \setminus\{i\}$, 
  let $e_j \in E(\Ttilde)$ be the closest edge to $\eta_j$ on $\rho_j$ such that $\ver(e_j) \subseteq X$
  (note that the edges $e_1, \ldots, e_q$ are not necessarily pairwise distinct).
Denote by $E = \{e_j~|~j \in [q] \setminus \{i\}\}$.
  We construct an $(\sss,\rrtilde)$-cut $Y$ in $H_i$ as follows:
  $Y$ contains the tree arcs of $H_i$ corresponding to the edges in $E$ and
  for each $v \in X$ such that $\mld(v)$ contains at least one edge of $E$
  (that is, $v \in \ver(e)$ for some edge $e \in E$),
  we include in $Y$ the \sink\ arc $(\gamma(v),\topnode_{\mld}(v))$ of $E(H_i)$.
  Let us show that $Y$ is indeed an $(\sss,\rrtilde)$-cut in $H_i$.
  
  For every $j \in [q] \setminus \{i\}$,
  let $V_-^j \subseteq V(\Ttilde)$ ($V_+^j \subseteq V(\Ttilde)$, respectively) 
  be the set of nodes of the subpath of $\rho_j$
  from $\eta_j$ to the tail of $e_j$ (the head of $e_j$ to $\rrtilde$, respectively).
  We contend that for every $j \in [q] \setminus \{i\}$, there is no $(V_-^j,V_+^j)$-path in $H_i-Y$.
  Note that if true, this would prove that $Y$ is indeed an $(\sss,\rrtilde)$-cut in $H_i$.
    For the sake of contradiction,
    suppose that, for some $j \in [q] \setminus \{i\}$,
    there is a path $L$ in $H_i -Y$ 
    from a vertex $x \in V_-^j$ to a vertex $y \in V_+^j$.
    Since the tree arc in $H_i$ corresponding $e_j$ belongs to $Y$,
    there must exist a vertex $z \in V(L)$ such that
    $N^-_{H_i}(z) \cap V_-^j \cap V(L) \neq \emptyset$ and 
    $N^+_{H_i}(z) \cap V_+^j \cap V(L) \neq \emptyset$; 
    in particular, the sink arc $e$ with tail $z$ must belong to $L$.
    By construction of $H_i$, it must then be that $\mld(\gamma^{-1}(z))$ contains the edge $e_j$,
    that is, $\gamma^{-1}(z) \in \ver(e_j)$; 
    but then, $\gamma^{-1}(z) \in X$ and so, $e \in Y$ by construction,
    a contradiction which proves our claim.
    
    Let us finally show that $\wt_i(Y) \leq k - \cost_i$.
    To this end, for every $e \in E$, 
    let $X_e \subseteq X$ be the restriction of $X$ to $T_{t_e}$
    where $t_e$ is the endpoint of $e$ the furthest from $\rrtilde$
    (note that for any two distinct $e,e' \in E$, $X_e \cap X_{e'} = \emptyset$).
    Then, for every $e \in E$, $X_e$ is a $\Pterm_{|T_{t_e}}$-\mwcset\ in $G_{|T_{t_e}}$
    and so, $\wt_i(e) \leq |X_e|$.
    Similarly, the restriction $X_i$ of $X$ to $T_{\eta_i}$ is a $\Pterm_{|T_{\eta_i}}$-\mwcset\ in $G_{|T_{\eta_i}}$
    and so, $|X_i| \geq \cost_i$
    (note that, by construction, $X_i \cap X_e = \emptyset$ for every $e \in E$).
    Letting $X' = \bigcup_{e \in E} \ver(e)$,
    it then follows from the definition of $Y$ that 
    \[
    \wt_i(Y) = |X'| + \sum_{e \in E} \wt_i(e) \leq |X'| + \sum_{e\in E} |X_e| \leq |X| - |X_i| \leq k - \cost_i
    \]
    as $X' \cap X_i = \emptyset$ and for every $e \in E$, $X' \cap X_e = \emptyset$. 
  \end{claimproof}
  
  Using similar arguments, we can also prove the following.
  
  \begin{claim}
  \label{clm:GtoH0}
  If $G_{|T_\alpha}$ has a $P_{|T_\alpha}$-\mwcset\ $X$ of size at most $k$ 
  such that for every $i \in [q]$, $X$ destroys an edge of $\rho_i$,
  then $H_0$ has an $(\sss,\rrtilde)$-cut $Y$ such that $\wt_i(Y) \leq k$.
  \end{claim}

To conclude the proof of \Cref{lem:forward}, let us show that 
for any $P_{|T_\alpha}$-\mwcset\ $S$ in $G_{|T_\alpha}$,
$S$ destroys an edge of every root-to-leaf path of $\Ttilde$, except for at most one when $\Ptilde_\rr = \emptyset$.
Note that if the claim is true, the lemma would then follow from Claims~\ref{clm:GtoHi} and \ref{clm:GtoH0}.

Let $S$ be a $P_{|T_\alpha}$-\mwcset\ in $G_{|T_\alpha}$.
Observe first that if $\Ptilde_\rr \neq \emptyset$ then for every $i \in  [q]$, 
$S$ must destroy an edge of $\rho_i$ for, otherwise, 
$p_i$ and the root terminal are in the same connected component of $G_{|T_\alpha} - S$,
thereby contradicting the fact that $S$ is a $P_{|T_\alpha}$-\mwcset.
Assume therefore that $\Ptilde_\rr = \emptyset$
and suppose, for the sake of contradiction, that there exist two distinct indices $i ,j \in [q]$
such that $S$ destroys no edge of $\rho_i$ and no edge of $\rho_j$.
Then for every edge $e$ of $\rho_i \cup \rho_j$, 
$\ver(e) \setminus S \neq \emptyset$:
for each such edge $e$,
let $\alpha_e \in \ver(e) \setminus S$.
 It is now not difficult to see that there is a path in $G_{|T_\alpha}-S$ from $p_i$ to $p_j$ 
 using only vertices from  $\{\alpha_e \mid e \text{ is an edge of } \rho_i \cup \rho_j\}$,
 a contradiction to the fact that $S$ be a $P_{|T_\alpha}$-\mwcset\ in $G_{|T_\alpha}$.
\end{proof}

  We now conclude by Lemmas~\ref{lem:backward} and~\ref{lem:forward},
  that $\AAA[\alpha\parent_{\Tzero}(\alpha)]$ indeed 
  stores the size of a minimum $\Pterm_{|T_\alpha}$-\mwcset\ in $G_{|T_\alpha}$.
  Since the construction of each $H_i$ takes polynomial-time,
  an $(\sss,\ttt)$-cut in $H_i$ can be computed in polynomial time (see, for instance, \cite{ford_fulkerson})
  and the number of $H_i$s is at most $n$,
  it takes plynomial-time to update $\AAA[\alpha\parent_{\Tzero}(\alpha)]$.
  Finally, since the number of edges of $T$ is linear in $n$,
  the overall running time is polynomial in $n$,
  which proves Theorem~\ref{thm:mwc-poly}.
  We remark that a more careful analysis of the running time of the algorithm 
  leads to an upper bound of $\calO(n^4)$.
\section{Restricting to $H_{\ell}$-induced-subgraph-free chordal graphs}
\label{sec:Hlchordal}

In this section, we consider problems restricted to $H_\ell$-induced-subgraph-free chordal graphs.
Recall that $H_{\ell}$ is the split graph on $2\ell$ vertices such that if $V(H_{\ell}) = C \uplus I$ is a split partition then
$(i)$ $|C| = |I| = \ell$,
$(ii)$ every vertex in $C$ is adjacent to exactly one vertex in $I$, and
$(iii)$ every vertex in $I$ is adjacent to exactly one vertex in $C$.
As mentioned in the Introduction, the class of $H_\ell$-induced-subgraph-free chordal graphs
is a natural generalization of the class of chordal graphs of leafage at most $\ell$.
In fact, denoting by $\calC_{\ell}$ the collection of all chordal graphs 
that have leafage at most $\ell$
and by $\calC^{is}_{\ell}$ the collection of all chordal graphs that do not contain $H_{\ell}$ as a induced subgraph
(that is, the collection of $H_{\ell}$-induced-subgraph-free chordal graphs),
the following holds.

\begin{observation}
\label{obs:subclasses-chordal}
$\calC_{\ell} \subsetneq \calC^{is}_{\ell + 1}$.
\end{observation}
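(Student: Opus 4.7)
The plan is to prove both the inclusion $\calC_\ell \subseteq \calC^{is}_{\ell+1}$ and its strictness. For the inclusion, I first observe that leafage is monotone under taking induced subgraphs: given a tree representation $(T, \calM)$ of $G$ with $|V_{=1}(T)| = \lf(G)$ and an induced subgraph $H$ of $G$, the restriction $(T, \calM|_{V(H)})$ is still a tree representation of $H$ on the same tree $T$, so $\lf(H) \le \lf(G)$. Next I lower-bound $\lf(H_{\ell+1}) \ge \ell+1$ by exhibiting the asteroidal set $\{i_1, \ldots, i_{\ell+1}\} \subseteq V(H_{\ell+1})$: for any index $a$ and any pair $b, c \in \{1, \ldots, \ell+1\} \setminus \{a\}$, the path $i_b, c_b, c_c, i_c$ through the clique avoids $N[i_a] = \{i_a, c_a\}$. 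By the classical bound of Lin, McKee, and West (already cited in the paper), the leafage of a chordal graph is at least the size of any asteroidal set. Combining the two observations, no graph $G \in \calC_\ell$ can contain $H_{\ell+1}$ as induced subgraph, since otherwise $\ell+1 \le \lf(H_{\ell+1}) \le \lf(G) \le \ell$.

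For the strict containment I exhibit the $(\ell+1)$-sun $S_{\ell+1}$: a graph on $2(\ell+1)$ vertices with clique $C = \{c_1, \ldots, c_{\ell+1}\}$, independent set $I = \{i_1, \ldots, i_{\ell+1}\}$, and each $i_k$ adjacent to exactly $c_k$ and $c_{(k \bmod (\ell+1))+1}$. Three properties must be verified. First, $S_{\ell+1}$ is chordal: since $I$ is independent, any chordless cycle $\gamma$ of length at least four that contains some $i_k$ must sandwich it between its only two clique-neighbors $c_k$ and $c_{k+1}$; these are themselves adjacent in the clique, which is a chord of $\gamma$; cycles lying entirely inside $C$ cannot be chordless of length $\ge 4$ because $C$ is a clique. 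Second, $S_{\ell+1}$ does not contain $H_{\ell+1}$ as an induced subgraph: since both have $2(\ell+1)$ vertices, the only candidate is $V(S_{\ell+1})$ itself, but each $i_k$ has degree $2$ rather than $1$. Third, $\lf(S_{\ell+1}) \ge \ell+1$ via the same set $\{i_1, \ldots, i_{\ell+1}\}$: for any index $a$, after deleting $N[i_a] = \{i_a, c_a, c_{a+1}\}$ each remaining $i_b$ ($b \ne a$) retains at least one clique-neighbor, since its two-element clique-neighborhood $\{c_b, c_{b+1}\}$ cannot coincide with $\{c_a, c_{a+1}\}$ unless $b = a$, and the surviving clique-vertices still form a clique through which $i_b$ and $i_c$ are connected. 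Hence $S_{\ell+1} \in \calC^{is}_{\ell+1} \setminus \calC_\ell$, establishing strictness.

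The main technical point is the asteroidal-set verification for $S_{\ell+1}$, in particular the boundary case $\ell+1 = 3$ where only a single clique-vertex survives the removal of $N[i_a]$. There the argument above gives only that each of $i_b, i_c$ retains some clique-neighbor; one needs the additional pigeonhole observation that each has two neighbors in a three-element clique while only two vertices are deleted, so in fact both must be adjacent to the one surviving clique-vertex and are therefore connected through it. The remaining ingredients (leafage monotonicity, chordality of $S_{\ell+1}$, and the degree-counting argument that rules out induced $H_{\ell+1}$) are routine.
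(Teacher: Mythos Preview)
Your proof is correct. For the inclusion $\calC_\ell \subseteq \calC^{is}_{\ell+1}$, your route via leafage monotonicity under induced subgraphs (immediate from restricting a tree representation) is arguably more direct than the paper's, which instead argues monotonicity of the asteroidal number under induced subgraphs and applies the Lin--McKee--West bound $\at \le \lf$ to $G$ rather than to $H_{\ell+1}$; both routes ultimately rest on the same cited inequality.

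For strictness the two proofs diverge. You exhibit the $(\ell+1)$-sun $S_{\ell+1}$ and verify chordality, $H_{\ell+1}$-freeness (by a vertex-count and degree argument), and an asteroidal set of size $\ell+1$. The paper instead takes a star with every edge subdivided once: being a tree, it is trivially chordal and trivially $H_3$-free (no triangle), hence $H_{\ell+1}$-free for all $\ell \ge 2$, while its leafage equals its number of legs and is therefore unbounded. The paper's witness is considerably lighter---no chordality check, no induced-subgraph analysis, and a single family covers every $\ell$ simultaneously---whereas your sun construction, while perfectly valid, requires a separate verification of three nontrivial properties for each $\ell$.
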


Let us briefly explain why \Cref{obs:subclasses-chordal} holds true.
Walter generalized the concept of asteroidal triple in order to characterize other subclasses of chordal graphs \cite{walter1972representations} as follows.
A subset of nonadjacent vertices of $G$ is an \emph{asteroidal set} if the removal of the closed neighborhood of any one of its elements does not disconnect the remaining ones.
Formally, a set of vertices $A$ of a graph $G$ is \emph{asteroidal} if for each $a \in A$, the vertices in $A \setminus \{a\}$ belong to a common connected component of $G - N[a]$.
The asteroidal number of $G$, denoted by $\at(G)$, is then the size of a largest asteroidal set of $G$.
Note that in the graph $H_{\ell + 1}$, $I$ is an asteroidal set of size $\ell + 1$ and thus, $\at(H_{\ell + 1}) \ge \ell + 1$. 
By definition, if $H$ is a subgraph of $G$ and $H$ is connected, then $\at(H) \le \at(G)$.
Lin et al.~\cite[Therorem~$1$]{DBLP:journals/dmgt/LinMW98} proved that for a connected chordal graph $G$, $\at(G) \le \lf(G)$.
Hence, if $\lf(G) \le \ell$, then it cannot contain $H_{\ell + 1}$ as an induced subgraph.
This implies that $\calC_{\ell} \subseteq \calC^{is}_{\ell + 1}$.
To see that $\calC_{\ell}$ is proper subset of $\calC^{is}_{\ell + 1}$, consider a graph obtained from a star by subdividing every edge once.
Then it is easy to see that this graph does not contain $H_3$ as induced subgraph but can have unbounded leafage.\\

The remainder of this section is organized as follows.
In Subsection~\ref{subsec:dominating-set-H-ell-ind-sub-free},
we argue that the \FPT\ algorithms for domination problems
cannot be generalized to this larger graph class.
We complement this with an \XP-algorithm, which is optimal under the \ETH.
In Subsection~\ref{subsec:multicut-H-ell-ind-sub-free},
we present a simple algorithm to prove that \MCundel is \paraNP-\hard\ on this graph class.
This implies that the \XP-algorithm presented in Section~\ref{sec:multicut-bounded-leafage} cannot be generalized for this larger class.

\subsection{Dominating Set and related problems}
\label{subsec:dominating-set-H-ell-ind-sub-free}

In this subsection, we prove Theorem~\ref{thm:dom-set-ind-sub-w-hard}.
We first show the hardness results of the theorem
and provide afterwards the \XP-algorithms for the problems.

\begin{lemma}
  \label{lemma:domset-H-ell-ind-sub-free-w-hard}
  {\sc Dominating Set}, {\sc Connected Dominating Set} and {\sc Steiner Tree} on $H_{\ell}$-induced-subgraph-free chordal graphs are {\em \WoneH} when parameterized by $\ell$ and assuming the {\em \ETH}, do not admit an algorithm running in time $f(\ell)\cdot n^{o(\ell)}$ for any computable function $f$.
\end{lemma}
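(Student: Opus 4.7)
The plan is to give a polynomial-time parameter-preserving reduction from \textsc{Multicolored Clique}, parameterized by the number of colours $q$, that produces an $H_\ell$-induced-subgraph-free chordal graph $H$ with $\ell = \Theta(q)$ together with a dominating-set (respectively connected dominating-set, Steiner-tree) budget $k$, such that the target problem on $(H,k)$ is a yes-instance iff the input has a multicoloured $q$-clique. Since \textsc{Multicolored Clique} is W[1]-hard parameterized by $q$ and, under the \ETH, admits no $f(q)\cdot n^{o(q)}$-time algorithm, this simultaneously yields the W[1]-hardness and the tight $f(\ell)\cdot n^{o(\ell)}$ ETH lower bound.

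The plan for constructing $H$ (focusing first on \textsc{Dominating Set}) is to assemble, for each colour class $V_i$, a single \emph{selector gadget} $\Lambda_i$: a chordal subgraph whose tree representation is a long path with a few local branches, containing $|V_i|$ ``designated'' vertices and a pendant witness so that any dominating set is forced to pick at least one designated vertex, which we interpret as a choice of vertex in $V_i$. All the selector gadgets are attached to a common \emph{hub} clique through which adjacency constraints are enforced: for every non-edge of $G$ we add a small chordal ``non-edge attachment'' to the hub that penalises incompatible pairs of designated selections. Because we use only $q$ selector gadgets and a single hub, the final tree representation has $O(q)$ branching nodes, and $H$ has bounded asteroidal complexity away from the hub.

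The main obstacle will be the verification that $H$ is $H_\ell$-induced-subgraph-free for $\ell = O(q)$: an induced $H_\ell$ would consist of a clique $C$ and a matched independent set $I$ of equal size, and I will argue by inspecting the tree representation that the members of $I$ must lie in pairwise distinct selector gadgets, capping $|I|$ by $q$ plus a constant accounting for the hub; the matched structure of $H_\ell$ rules out having two of the private non-neighbours in the same gadget because they would then share a common neighbour in $C$ that is forbidden from the matching. Once this is in place, the correctness of the reduction is a budget-accounting argument: a multicoloured $q$-clique of $G$ translates into a dominating set that picks one designated vertex per $\Lambda_i$ (corresponding to the chosen clique vertex) together with the forced minimum in each non-edge attachment; and conversely any dominating set of the prescribed budget determines one designated choice per $\Lambda_i$, and non-edges between selected choices would force an over-budget expenditure in the attached non-edge gadgets, yielding a $q$-clique.

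For \textsc{Connected Dominating Set} and \textsc{Steiner Tree}, I extend the above by inserting a connectivity \emph{spine} along the hub---a short chordal path, together with, for \textsc{Steiner Tree}, a set of Steiner terminals placed at the gadget endpoints---whose use is forced in any connected or Steiner solution. This extension contributes only $O(1)$ to $\ell$ and preserves both chordality and the $H_\ell$-freeness analysis, so the reduction and its correctness proof extend verbatim. Combining with the ETH lower bound for \textsc{Multicolored Clique}~\cite[Corollary~14.23]{DBLP:books/sp/CyganFKLMPPS15} then yields the lemma for all three problems.
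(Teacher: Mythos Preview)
Your proposal is a high-level plan rather than a proof, and its central step---the $H_\ell$-freeness of the constructed graph---is not actually established. You assert that in any induced $H_\ell$ the independent-side vertices ``must lie in pairwise distinct selector gadgets,'' but your construction also attaches one gadget to the hub for \emph{every non-edge} of $G$, and you give no argument ruling out an induced $H_\ell$ whose independent set sits inside these (potentially $\Theta(n^2)$ many) non-edge attachments. Without a concrete description of what each non-edge attachment looks like and how it is wired to the hub clique, there is no way to verify that the matched-neighbour pattern of $H_\ell$ is impossible. Similarly, the budget-accounting argument (``non-edges between selected choices would force an over-budget expenditure'') is asserted but not carried out; with no explicit gadget, one cannot check that the intended solution meets the budget exactly and that every unintended solution exceeds it.

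By contrast, the paper's proof is dramatically simpler and avoids all of these difficulties. It reduces from \textsc{Multicolored Independent Set} using the \emph{standard} split-graph construction for \textsc{Dominating Set}: turn $V(G)$ into a clique, add two forcing pendants $x_i,y_i$ per colour class, and add one vertex $w_e$ per edge $e=uv$ adjacent to $(V_i\cup V_j)\setminus\{u,v\}$. The resulting graph is a split graph (hence chordal), and the $H_{2q+2}$-freeness follows from a one-paragraph pigeonhole argument: any clique of size $2q+2$ must place at least three vertices in some $V_i$, and every vertex on the independent side is adjacent to all, all-but-one, or none of $V_i$, so it cannot be adjacent to exactly one of those three. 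No tree representation, no selector gadgets, and no separate connectivity spine are needed; \textsc{Connected Dominating Set} and \textsc{Steiner Tree} follow from the same graph because the intended solution is already a clique. If you want to salvage your approach, you would at minimum need to (i) write down the non-edge attachment explicitly, and (ii) give a neighbourhood-type argument analogous to the paper's that bounds how many matched private neighbours can coexist---but at that point you are essentially reconstructing the paper's proof with extra scaffolding.
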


\begin{proof}
We present a parameter preserving reduction from \textsc{Multicolored Independent Set}.
An instance of this problem consists of a graph $G$, an integer $q$, and a partition $(V_1, \dots, V_{q})$ of $V(G)$.
  The objective is to determine whether $G$ has an independent set which contains exactly one vertex from every part $V_i$.
  We assume, without loss of generality, that each $V_i$ is an independent set.
  We present a slight modification of a known reduction (see \cite[Theorem~$13.9$]{DBLP:books/sp/CyganFKLMPPS15}).

\paragraph*{Reduction.}
The reduction takes as input an instance $(G,q,(V_1,\dots,V_q))$ of \textsc{Multicolored Independent Set} and
constructs a graph $G'$ as follows.
  \begin{itemize}
    \item
    For every vertex $v \in V (G)$, the reduction introduces a vertex $v$ into $G'$:
    we denote by $C$ the set of all these vertices in $G'$.
    Note that the sets $V_i$ carry over directly to $G'$.
    \item
    The reduction turns the set $C$ into a clique in $G'$ by adding edges between any two distinct vertices of $C$.
    \item
    For every $i \in [q]$,
    the reduction introduces two new vertices $x_i, y_i$ into $G'$
    and makes them adjacent to every vertex of $V_i$.
    \item
    For every edge $e=uv \in E(G)$ with endpoints $u \in V_i$ and $v \in V_j$, the reduction
    introduces a vertex $w_e$ into $G'$
    and makes it adjacent to every vertex of $(V_i \cup V_j ) \setminus \{u, v\}$.
  \end{itemize}
  For \DS and \connDS the reduction returns the instance $(G', q)$.
  For \SteinTree, it sets all the vertices in $V(G') \setminus C$ as terminals and returns the instance $(G', V(G') \setminus C, q)$.

  \paragraph*{Correctness.}
  In the following claim, we prove that the reduction produces equivalent instances.
  We only prove the claim for \DS;
  the correctness for the other two problems follows immediately
  from the design of the graph $G'$.

  \begin{claim}
  $(G,q,(V_1,\dots,V_q))$ is a \yes-instance for \textsc{Multicolored Independent Set}
  if and only if $(G',q)$ is a \yes-instance for \DS.
  \end{claim}
  
  \begin{claimproof}
  Assume that $(G,q,(V_1,\dots,V_q))$ is a \yes-instance for \textsc{Multicolored Independent Set}
  and let $I$ be an independent set $I$ of $G$
  containing one vertex from each $V_i$.
  We claim that $I$ is a dominating set in $G'$.
  Since for every $i \in [q]$, $I \cap V_i  \neq \emptyset$,
  the set $I$ dominates every vertex in $V_i \cup  \{x_i,y_i\}$.
  For an edge $e = uv \in E(G)$ where $u \in V_i$ and $v \in V_j$, consider the vertex $w_e$.
  As $u$ and $v$ are adjacent, at least one of them is not in $I$, say $u \notin I$ without loss of generality.
  Since $I \cap V_i \neq \emptyset$, there must then exist $w \in V_i \setminus \{u\}$ such that $w \in I$;
  but $w_e$ is adjacent to $w$ by construction and thus, $I$ dominates $w_e$. 

Conversely, assume that $(G',q)$ is a \yes-instance for \DS and 
  let $D$ be a dominating set of size $q$ in $G'$.
  We claim that $D$ is also an independent set in $G$.
  Since for every $i \in [q]$, $D$ dominates the vertices $x_i$ and $y_i$,
  $D$ has to contain at least one vertex from $V_i \cup \{x_i,y_i\}$;
  and since $x_i$ and $y_i$ are not adjacent, in fact $D$ must contain a vertex from $V_i$.
  As these sets are disjoint for different values of $i$ and $|D| \leq q$,
  it follows that $D$ contains exactly one vertex from each $V_i$:
  let $v_1 \in V_1, \dots, v_q \in V_q$ be the vertices of $D$.
  Now suppose for a contradiction that $v_i$ and $v_j$ are the endpoints of an edge $e$.
  By construction, vertex $w_e$ in $G'$
  is adjacent only to $(V_i \cup V_j ) \setminus \{v_i, v_j \}$
  and hence, $D$ does not dominate $w_e$, a contradiction.
  \end{claimproof}

  The following claim holds for all three problems
  as it only depends on the structure of the graph $G'$.

  \begin{claim}
    $G'$ does not contain $H_{2q + 2}$ as an induced subgraph.
  \end{claim}
  
  \begin{claimproof}
  We first partition the vertex set of $V(G')$.
  For this, let $I = V(G') \setminus C$.
  It is easy to see that $I$ is an independent set
  and since $C$ is a clique in $G'$, 
  $G'$ is in fact a split graph with split partition $(C, I)$.
  For each integer $i \in [q]$,
  the vertices in $I$ can be partitioned into the following three sets
  depending on their adjacency in $V_i$.
  \begin{enumerate}
    \item
    Vertices that are adjacent to \emph{all} vertices in $V_i$:
    these are the vertices $x_i, y_i$.
    \item
    Vertices that are adjacent to \emph{all but one} vertex in $V_i$: these are the vertices of type $w_e$ for edges $e$ with one endpoint in $V_i$.
    \item
    Vertices that are adjacent to \emph{no} vertex in $V_i$:
    these are the vertices of type $w_e$ for edges $e$
    with both endpoints are outside $V_i$,
    and the vertices $x_{i'},y_{i'}$ where $i \neq i'$.
  \end{enumerate}
  Recall that by assumption, $V_i$ is an independent set in $G$ and thus, there is no edge with both endpoints in~$V_i$.

Now suppose, for the sake of contradiction, that $G'$ contains $H_{2q + 2}$ as an induced subgraph.
  Consider the (unique) split partition $(H_C, H_I)$ of $H_{2q + 2}$.
  Let $H_C = \{v_1, v_2, \dots, v_{2q + 2}\}$
  and $H_I = \{u_1, u_2, \dots, u_{2q + 2}\}$.
  Moreover, for every $i \in [2q + 2]$, edge $v_iu_i$ is in $E(H_{2q + 2})$.
  Consider the clique $H_C$ in $G'$.
  As $I$ is an independent set, $|H_C \cap I| \le 1$.
  Hence, $H_C$ contains at least $2q + 1$ vertices of $C$.
  By the Pigeon-Hole principle, there must then exist an integer $i \in [q]$
  such that $|H_C \cap V_i| \ge 3$:
 let  $v_1, v_2, v_3$ be three vertices of $H_C \cap V_i$.

Since by construction, $u_1$ is not adjacent to $v_2$ and $v_3$,
and  $v_2, v_3$ are in $C$, it must be that $u_1 \in I$.
But then, $u_1$ is adjacent to one vertex in $V_i$, namely $v_1$,
and nonadjacent to two vertices in $V_i$, namely $v_2$ and $v_3$,
a contradiction to the fact that vertices in $I$ can be partitioned into the three sets described above.
Therefore, $G$ does not contain $H_{2q + 2}$ as an induced subgraph.
  \end{claimproof}

  It is known that, assuming the \ETH, there is no algorithm
  that can solve \textsc{Multicolored Independent Set}
  on instance $(G, q, (V_1, V_2, \dots V_{q}))$
  in time $f(q) \cdot |V(G)|^{o(q)}$ for any computable function $f$
  (see, e.g., \cite[Corollary~14.23]{DBLP:books/sp/CyganFKLMPPS15}).
  Note finally, that $|V(G')| \in \calO(|V(G)|^2)$
  and $G'$ is an $H_{2q + 2}$ induced-subgraph-free split graph.
  These facts, together with arguments that are standard for parameter preserving reductions,
  concludes the proof of the lemma.
\end{proof}

In the following, we give the \XP-algorithms for the three problems.
Instead of giving the algorithm for \DS,
we give an algorithm for the more general \RBDS.
Recall that, from \cref{lemma:ds-to-rbds-leafage},
there is a reduction from the former to the latter problem.
There remains to argue that this reduction preserves the property
of being $H_\ell$-induced-subgraph-free.

\begin{lemma}
\label{lemma:ds-to-rbds-H-ell-ind-sub-free}
There is a polynomial-time algorithm that given an instance $(G, k)$ of {\sc \DS} constructs an {equivalent} instance $(G', (R',B'), k)$ of {\sc \RBDS} such that if $G$ is a $H_{\ell}$-induced-subgraph-free graph, then so is $G'$.
\end{lemma}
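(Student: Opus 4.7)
The plan is to use the exact same construction as in \Cref{lemma:ds-to-rbds-leafage}, which is essentially the lexicographic product $G[K_2]$: each vertex $v$ is replaced by the edge $v_R v_B$, and each edge $uv \in E(G)$ becomes a $K_4$ on $\{u_R, u_B, v_R, v_B\}$. Since the equivalence of the two instances is already established in \Cref{lemma:ds-to-rbds-leafage}, the only new task is to verify that $G'$ is $H_\ell$-induced-subgraph-free whenever $G$ is.

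I will prove this via the contrapositive: given an induced copy of $H_\ell$ in $G'$ with independent set $I = \{i_1, \ldots, i_\ell\}$, clique $C = \{c_1, \ldots, c_\ell\}$, and matching $c_j i_j$, I will extract an induced copy of $H_\ell$ in $G$. Projecting $G' \to G$ by collapsing each pair $\{v_R, v_B\}$ to $v$, write $i_j = v_{a_j}^{X_j}$ and $c_j = w_{b_j}^{Y_j}$. Since $v_R v_B \in E(G')$, the $v_{a_j}$'s are pairwise distinct, and the independence of $I$ in $G'$ forces $A \deff \{v_{a_1}, \ldots, v_{a_\ell}\}$ to be an independent set of size $\ell$ in $G$.

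The bulk of the argument is a case analysis on the projections $w_{b_j}$ of the clique vertices. The key structural observations will be: first, $w_{b_j} = v_{a_{j'}}$ for $j' \neq j$ is impossible, since it forces either $c_j = i_{j'}$ (ruled out by $C \cap I = \emptyset$) or an edge $c_j i_{j'}$ in $G'$ via the pair edge (ruled out by the matching structure of $H_\ell$); second, mixing the remaining two sub-cases $w_{b_j} = v_{a_j}$ and $w_{b_j} \notin A$ across different indices $j$ is also impossible, because the required adjacency $c_j c_{j'}$ in $G'$ would then demand an edge from $v_{a_j}$ to $w_{b_{j'}}$ in $G$ that is prohibited by the matching condition on $c_{j'}$ (which forces $w_{b_{j'}}$ to have no neighbor in $A \setminus \{v_{a_{j'}}\}$).

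Consequently, either every $c_j$ projects to the corresponding $v_{a_j}$, which forces pairwise edges inside $A$ and contradicts the independence of $A$ (for $\ell \geq 2$), or no $c_j$ projects into $A$. In the latter case, routine verifications show that the $w_{b_j}$'s are pairwise distinct, form a clique in $G$ lifted from $C$, and have the matching edges $w_{b_j} v_{a_j}$ as their only edges into $A$, yielding the desired induced copy of $H_\ell$ in $G$. The main obstacle is the bookkeeping in the two intermediate observations above; once those are in place, the remainder is mechanical.
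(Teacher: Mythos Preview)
Your proposal is correct and follows the same contrapositive strategy as the paper, using the identical construction from \Cref{lemma:ds-to-rbds-leafage}. The difference is purely in the execution: where you split into cases according to whether each clique vertex $c_j$ projects into $A$, onto $v_{a_j}$, onto some other $v_{a_{j'}}$, or outside $A$, the paper compresses all of this into the single claim that \emph{for no $v\in V(G)$ do both $v_R$ and $v_B$ lie in $C\cup I$}. That claim is immediate from the fact that $v_R$ and $v_B$ are true twins in $G'$ (same closed neighbourhood): two true twins cannot both sit in the independent side, cannot straddle $I$ and $C$ (the one in $I$ would then see all of $C$), and cannot both sit in $C$ (the partner in $I$ of one of them would then also be adjacent to the other). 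Once this claim is established, the projection $C\cup I\to V(G)$ is injective and, again by the twin property, preserves adjacency and non-adjacency, so the image is an induced $H_\ell$ in $G$ with no further case analysis. Your argument reaches the same conclusion but re-derives pieces of the twin observation separately in each sub-case; extracting it once up front is what makes the paper's version shorter.
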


\begin{proof}
  As in \cref{lemma:ds-to-rbds-leafage},  we construct $G'$ from $G$ as follows.
  For every vertex $v \in V(G)$,
  add two copies $v_R$ and $v_B$ to $V(G')$
  and add an edge $v_R v_B$ to $E(G')$.
  For every edge $uv \in E(G)$,
  add edges $v_R u_R$, $v_R u_B$, $v_B u_R$, and $v_B u_B$ to $E(G')$.
  This completes the construction of $G'$.
  By the proof of \cref{lemma:ds-to-rbds-leafage}, it is known that these two instances are equivalent.
  In the following, we let $R' = \{v_R \mid v\in V(G)\}$ and $B' = \{v_B \mid v\in V(G)\}$.

  Now assume that $G$ is $H_{\ell}$-induced-subgraph-free 
  and suppose, for the sake of contradiction, that $G'$
  contains $H_\ell$ as an induced subgraph.
  Let $I$ be the vertices forming the independent set
  and $C$ the vertices forming the clique of $H_\ell$.
  We claim that for no vertex $v\in V(G)$, we have that $v_B, v_R \in C \cup I$.
  Note that if the claim holds, then using the original version of each vertex would give
  an induced $H_\ell$ in $G$ and thus contradict our assumption.

  There remains to prove the claim.
  To this end, consider $v \in V(G)$.
  Since $I$ is an independent set,
  $v_B$ and $v_R$ cannot both be contained in $I$.
  Moreover, it can also not be the case that $v_B\in I$ and $v_R \in C$
  (or vice-versa) as then $v_B$ would also be adjacent to all vertices in $C$.
  Hence, assume that $v_B,v_R \in C$.
  Assume, without loss of generality, that $u_B \in I$ is the unique vertex adjacent to $v_B$ in $C$
  (the case where $u_R \in I$ is the unique adjacent vertex is symmetric).
  Since there is an edge from $v_B$ to $u_B$,
  we know that $u$ and $v$ are adjacent in $G$.
  Hence, by construction, there must also be an edge from $v_R$ to $u_B$
  which contradicts the fact that we have an $H_\ell$ graph.
\end{proof}

\newcommand{\Tone}{\ensuremath{\mathsf{T_1}}\xspace}
\newcommand{\Ttwo}{\ensuremath{\mathsf{T_2}}\xspace}

We are now ready to show that \RBDS on chordal graphs admits an \XP-algorithm
if the input graph does not contain $H_\ell$ as induced subgraph.

\begin{lemma}
  {\sc Red-Blue Dominating Set} restricted to $H_{\ell}$-induced-subgraph-free chordal graphs
  admits an algorithm running in time $n^{\Oh(\ell)}$.
\end{lemma}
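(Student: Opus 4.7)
The plan is to adapt the FPT algorithm of \Cref{sec:dominating-set} to the broader class of $H_\ell$-induced-subgraph-free chordal graphs, obtaining an XP algorithm in $\ell$. I would first compute a tree representation $(T,\mld)$ of $G$ in polynomial time, root $T$ at an arbitrary node, and apply the reduction rules \Cref{rr:domSet:blueNeedNeighbors}, \Cref{rr:domSet:noTwins}, and \Cref{rr:domSet:greedilySelectReds}, all of which remain safe here. I would then mimic the bottom-up procedure of \Cref{sec:domSet:leafage:noramlized}, peeling off a farthest branching node $\alpha$ from the root at each iteration.

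The structural fact powering the XP bound is that the \SetCov instance solved in \Cref{greedy-select:red} (symmetrically, the \HitSet in \Cref{greedy-select:blue}) has an optimum of cardinality at most $\ell$. Its universe consists of children $\alpha'$ of $\alpha$ each carrying a blue witness $v_{\alpha'}$ with model in $V(T_{\alpha'})\setminus\{\alpha\}$, plus at most one extra element $u_i$, and its sets are indexed by red $\alpha$-vertices. If $\{r_1,\dots,r_m\}$ is an inclusion-minimal cover then by minimality each $r_t$ admits a private element; discarding the at most one index whose private is $u_i$ yields $r_{i_1},\dots,r_{i_{m-1}}$ each with a child-type private witness $v_{\alpha_{j_t}}$. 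The $r_{i_t}$'s all share $\alpha$ in their model and so form a clique in $G$; the witnesses $v_{\alpha_{j_t}}$ lie in pairwise disjoint subtrees $T_{\alpha_{j_t}}$ avoiding $\alpha$ and so form an independent set; and by privacy of witnesses the bipartite adjacency between these two sets is exactly the perfect matching $r_{i_t}\sim v_{\alpha_{j_t}}$. If $m\ge\ell+1$, any $\ell$ of these pairs induces a copy of $H_\ell$ in $G$, contradicting the hypothesis on $G$, so $m\le\ell$. The \HitSet case is handled symmetrically.

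Given this bound, each subroutine is solved in $n^{O(\ell)}$ time by brute-force enumeration over subsets of the candidate family of size at most $\ell$. Each Greedy Select step therefore costs $n^{O(\ell)}$, and since each iteration strictly decreases $|V(G)|$, the outer loop completes in $O(n)$ iterations, giving the claimed $n^{O(\ell)}$ running time.

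The main obstacle is that the normalization step of \Cref{lemma:reduce-normalized-graph} cannot be invoked globally: its branching measure $\mu(G)=\lf(G)+2(f_T+f_r+f_b)$ scales with the leafage, which may be $\Omega(n)$ here, so one cannot pre-reduce to $2^{O(\ell)}$ \RestRBDS instances. I plan to perform the normalization locally at $\alpha$ before each Greedy Select: enumerate all $n^{O(\ell)}$ candidate subsets of $R\cap\ver(\alpha)$ of size at most $\ell$ playing the role of the solution's intersection with the red $\alpha$-vertices, add each guess to the partial solution, remove the blue $\alpha$-vertices it dominates together with the discarded red $\alpha$-vertices, and then run Greedy Select on the resulting single-color configuration at $\alpha$. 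Establishing the size-$\ell$ bound for this local guess is the most delicate point: one must argue, in the spirit of the private-witness extraction above, that a too-large optimal intersection forces a private blue witness located in a distinct child branch of $\alpha$ for $\ell$ of the chosen red vertices, thereby producing an induced $H_\ell$ in $G$ and contradicting the hypothesis.
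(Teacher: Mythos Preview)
Your structural observation is on target: $H_\ell$-freeness does bound by $\ell$ the size of an inclusion-minimal family of red $\alpha$-vertices covering blue witnesses lying in distinct children of $\alpha$, and this is essentially the combinatorial heart of the result (the paper states it as \Cref{clm:size}, which bounds $|S\cap\containR{\alpha}|$ for any optimum $S$ and, dually, the size of any minimal dominator of a set of blue $\alpha$-vertices from outside $\containR{\alpha}$). The gap is in the algorithmic wrapper.

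First, the running-time accounting does not close. You branch over $n^{O(\ell)}$ guesses of $S\cap\containR{\alpha}$ at the farthest branching node $\alpha$ and then continue. If these are genuine branches, the recursion tree has depth equal to the number of branching nodes of $T$, which can be $\Theta(n)$ here, yielding $n^{O(\ell\cdot n)}$ work. If instead you intend a single commitment per node, you have not said which guess to commit to or why that commitment is globally optimal. The only way to reconcile the branching with the claimed time bound is to memoize across subtrees---which is dynamic programming, and is what the paper does.

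Second, even on the branch carrying the correct guess, the Greedy Select machinery of \Cref{sec:domSet:leafage:noramlized} need not apply. After all red $\alpha$-vertices are removed, a surviving blue $\alpha$-vertex $b$ may have a model that extends past the next branching node $\beta$ (nothing forbids this without the \RestRBDS normalization), and its remaining red neighbours may all lie beyond $\beta$. The instances $\calI_i$ in \Cref{greedy-select:blue} enumerate only red vertices whose models touch the $\alpha$--$\beta$ path; none of them accounts for $b$ being dominated from beyond $\beta$, so the monotonicity argument behind \Cref{lemma:blue-alpha-greedy} fails and the step may report infeasibility on a feasible instance.

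The paper sidesteps both issues by abandoning the greedy template and running a direct bottom-up DP on the tree representation. Two tables $\Tone[\alpha,X]$ and $\Ttwo[\alpha,Y]$, indexed by subsets $X\subseteq\containR{\alpha}$ and $Y\subseteq\underR{\alpha}$ of size at most $\ell$, record the optimum restricted to $T_\alpha$ under each interface hypothesis; \Cref{clm:size} (your $H_\ell$ argument) justifies the $\ell$-bound on the index, and the DP transitions correctly propagate across branching nodes the very information your greedy step cannot see.
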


\begin{proof}
Let $(G,(R,B),k)$ be an instance of \RBDS where $G$ is an $H_\ell$-induced-subgraph-free chordal graph,
and let $(T,\mld)$ be a tree representation of $G$.
First, we add a node $\rr$ to $T$ by connecting it to an arbitrary node of $T$ and root $T$ at $\rr$
(note that, by construction, no model in $\mld$ contains $\rr$).
We use dynamic programming to compute the entries of two tables $\Tone$ and $\Ttwo$ in a bottom-up traversal of $T$.
The contents of $\Tone$ and $\Ttwo$ are defined as follows.
For every node $\alpha \in V(T)$
and every \emph{nonempty} set $X \subseteq \containR{\alpha}$ of size at most $\ell$,
\begin{gather*}
  \Tone[\alpha, X] \deff
  \min \{ \abs{S} \mid
    S\subseteq \interR{\alpha},
    S \cap \containR{\alpha} = X,
    N[S] \supseteq \interB{\alpha}
    \}
\end{gather*}
Intuitively, this stores the (size of the) smallest set of red vertices containing $X$
such that all blue vertices in $T_\alpha$ are dominated.

For every node $\alpha \in V(T)$
and every set $Y \subseteq \underR{\alpha}$ of size at most $\ell$,
\begin{gather*}
  \Ttwo[\alpha,Y] \deff
  \min \{ \abs{S} \mid
    S\subseteq \underR{\alpha},
    N[S] \supseteq \underB{\alpha}  \cup (N(Y) \cap \containB{\alpha})
    \}
\end{gather*}
Intuitively, this stores the (size of the) smallest set of red vertices
intersecting with $T_\alpha$ but not $\alpha$
which dominate all blue vertices below $\alpha$
and the $\alpha$-blues that are neighbors of the red vertices in $Y$.

Initially, every entry of $\Tone$ and $\Ttwo$ is set to $+\infty$.
The output is \yes\ if and only if $\Ttwo[\rr,\emptyset] \leq k$.
We next show how to update the entries of $\Tone$ and $\Ttwo$.

\paragraph*{Updating the Leaves.}
Let $\alpha\in V(T)$ be a leaf of $T$.
Then set
\[
  \Ttwo[\alpha,\emptyset] = 0
\]
and for every nonempty set $X \subseteq \containR{\alpha}$ of size at most $\ell$, set
\[
  \Tone[\alpha,X] = |X|.
\]

\paragraph*{Updating Internal Nodes.}
Let $\alpha \in V(T)$ be an internal node of $T$
and let $\beta_1,\ldots,\beta_p$ be the children of $\alpha$.
To update the entries of $\Tone[\alpha,\cdot]$, we proceed as follows.
Let $X \subseteq \containR{\alpha}$ be a nonempty set of size at most $\ell$.
Denote by $I \subseteq [p]$ the set of indices $i \in [p]$ such that $X \cap \containR{\beta_i} \neq \emptyset$ and set $\overline{I} = [p] \setminus I$.
For every $i \in I$, further let $X_i = X \cap \containR{\beta_i}$.
We update $\Tone[\alpha,X]$ according to the following procedure.\\

\begin{enumerate}
  \item[1.]
  \label{step:dsXP:tOne:first}
  For every $i \in I$, set
  \[
    m_i = \min_{\substack{Z \subseteq \containR{\beta_i} \setminus \containR{\alpha} \\ \text{s.t. }|Z| + |X_i| \leq \ell}} \Tone[\beta_i,Z \cup X_i].
  \]
  \item[2.]
  \label{step:dsXP:tOne:second}
  For every $i \in \overline{I}$, let
  \[
    \mathcal{Y}_i = \{Z \subseteq \underR{\beta_i}  \mid |Z| \leq \ell \text{ and } \containB{\beta_i} \setminus \containB{\alpha} \subseteq N(Z)\}
  \]
  and set
  \[
    m_i^1 = \min_{\substack{Z \subseteq \containR{\beta_i} \setminus \containR{\alpha} \\ \text{s.t. } 1 \leq |Z| \leq \ell}}
      \Tone[\beta_i,Z]
      \qquad
      \text{ and }
      \qquad
    m_i^2 = \min_{Z \in \mathcal{Y}_i} \Ttwo[\beta_i,Z].
  \]
  \item[3.]
  Set
  \[
    \Tone[\alpha,X] = |X| + \sum_{i \in I} m_i - |X_i| + \sum_{i \in \overline{I}} \min\{m_i^1,m_i^2\}.
  \]
\end{enumerate}

To update the entries of $\Ttwo[\alpha,\cdot]$, we proceed as follows.
Let $Y \subseteq \underR{\alpha}$ be a set of size at most $\ell$.
Denote by $I \subseteq [p]$ the set of indices $i \in [p]$
such that $Y \cap \interR{\beta_i} \neq \emptyset$
and set $\overline{I} = [p] \setminus I$.
We update $\Ttwo[\alpha,Y]$ according to the following procedure.\\

\begin{enumerate}
  \item[1.]
  \label{step:dsXP:tTwo:one}
  Initialise $\mathsf{OPT}_{\overline{I}} = 0$ and $\mathsf{OPT}_I = +\infty$.
  \item[2.]
  \label{step:dsXP:tTwo:two}
  For every $i \in \overline{I}$ do:
  \begin{enumerate}
      \item[2.a.]
    \label{step:dsXP:tTwo:twoOne}
    Let
    \[
      \mathcal{Y}_i = \{Z \subseteq \underR{\beta_i}
        \mid |Z| \leq \ell \text{ and } \containB{\beta_i} \setminus \containB{\alpha} \subseteq N(Z)\}
    \]
    and set
    \[
      m_i^1 = \min_{\substack{Z \subseteq \containR{\beta_i} \setminus \containR{\alpha} \\ \text{s.t. } 1 \leq |Z| \leq \ell}}
        \Tone[\beta_i,Z]
        \qquad
        \text{ and }
        \qquad
      m_i^2 = \min _{Z \in \mathcal{Y}_i} \Ttwo[\beta_i,Z].
    \]
    \item[2.b.]
    Set $\mathsf{OPT}_{\overline{I}} = \mathsf{OPT}_{\overline{I}} + \min \{m_i^1, m_i^2\}$.
  \end{enumerate}
  \item[3.]
  \label{step:dsXP:tTwo:three}
  For every partition $N= \{N_i \mid i \in I\}$ of $N(Y) \cap \containB{\alpha}$
  where for every $i \in I$, $N_i \subseteq N(Y_i) \cap \containB{\alpha}$ do:
  \begin{enumerate}
    \item[3.a.]
    \label{step:dsXP:tTwo:threeOne}
    Initialise $\mathsf{Int}_N = 0$.
    \item[3.b.]
    \label{step:dsXP:tTwo:threeTwo}
    For every $i \in I$ do:
    \begin{enumerate}
      \item[3.b.i.]
      \label{step:dsXP:tTwo:threeTwoOne}
      Let
      \[
        \mathcal{Y}_i^N = \{ Z \subseteq \underR{\beta_i} \mid |Z| \leq \ell \text{ and } N_i \cup (\containB{\beta_i}\setminus \containB{\alpha}) \subseteq N(Z)\}
      \]
      and set
      \[
        m_i^1 = \min_{\substack{Z \subseteq \containR{\beta_i} \setminus \containR{\alpha} \\ \text{s.t. } 1 \leq |Z| \leq \ell}}
          \Tone[\beta_i,Z] \text{ and }
        m_i^2 = \min_{Z \in \mathcal{Y}_i^N} \Ttwo[\beta_i,Z].
      \]
      \item[3.b.ii.]
      Set $\mathsf{Int}_N = \mathsf{Int}_N + \min \{m_i^1,m_i^2\}$.
    \end{enumerate}
    \item[3.c.]
    \label{step:dsXP:tTwo:threeThree}
    Set $\mathsf{OPT}_I = \min \{\mathsf{OPT}_I, \mathsf{Int}_N\}$.
  \end{enumerate}
  \item[4.]
  Set $\Ttwo[\alpha,Y] = \mathsf{OPT}_{\overline{I}} + \mathsf{OPT}_I$.
\end{enumerate}

\bigskip

We next show that the entries of $\Tone[\alpha,\cdot]$ and $\Ttwo[\alpha,\cdot]$ are updated correctly.
To this end, we first introduce some useful notation.
Given a set $X \subseteq B$,
a set $S \subseteq R$ \emph{minimally dominates} $X$
if $X \subseteq N(S)$ and for every $x \in S$,
$X \not\subseteq N(S \setminus \{x\})$.
Additionally, we prove the following.

\begin{claim}
\label{clm:size}
For every node $\alpha \in V(T)$, the following hold.
\begin{enumerate}
\item[(i)]
For every minimum red-blue dominating set $S$ of $G$, $|S \cap \containR{\alpha}| \leq \ell$.
\item[(ii)]
For every set $X \subseteq \containB{\alpha}$ and every set $Y \subseteq R \setminus \containR{\alpha}$ minimally dominating $X$,
$|Y| \leq \ell$.
\end{enumerate}
\end{claim}

\begin{claimproof}
To prove item (i), let $S$ be a minimum red-blue dominating set of $G$.
Since $S$ is minimum, for every $x \in S \cap \containR{\alpha}$, there exists $p_x \in N(x) \cap B$
such that $p_x \notin \bigcup_{y \in S \setminus \{x\}} N(y)$,
i.e.,\ the blue vertex $p_x$ is only dominated by $x$.
Then $\{p_x~|~x \in S \cap \containR{\alpha}\}$ is an independent set:
indeed, if there exist $x,y \in S \cap \containR{\alpha}$ such that $p_xp_y \in E(G)$ then $x,p_x,p_y,y$ induces a $C_4$,
a contradiction as $G$ is chordal.
It follows that $(S \cap \containR{\alpha}) \cup \{p_x~|~x \in S \cap \containR{\alpha}\}$ induces an $H_{|S \cap \containR{\alpha}|}$ and so,
$|S \cap \containR{\alpha}| < \ell$.

To prove item (ii), let $X \subseteq \containB{\alpha}$ and let $Y\subseteq R\setminus \containR{\alpha}$ be a set minimally dominating $X$.
Since $Y$ is minimal, for every $x \in Y$, there exists $p_x \in N(x) \cap X$
such that $p_x \notin \bigcup_{y \in Y \setminus \{x\}} N(y)$,
i.e.,\ the blue vertex $p_x$ is only dominated by $x$.
This implies that $Y$ is an independent set:
indeed, if there exist $x,y \in Y$ such that $xy \in E(G)$ then $x,p_x,p_y,y$ induces a $C_4$,
a contradiction as $G$ is chordal.
It follows that $Y \cup \{p_x~|~x \in Y\}$ induces an $H_{|Y|}$ and so, $|Y| < \ell$.
\end{claimproof}

We now move towards proving the correctness of the update procedure.
We start with the first table.

\begin{claim}
\label{clm:T1}
For every internal node $\alpha \in V(T)$, the entries of $\Tone[\alpha,\cdot]$ are updated correctly.
Furthermore, $\Tone[\alpha,\cdot]$ can be updated in $n^{\Oh(\ell)}$-time.
\end{claim}

\begin{claimproof}
Let $\alpha \in V(T)$ be an internal node of $T$
with children $\beta_1,\ldots,\beta_p$
and assume that for every $i \in [p]$, $\Tone[\beta_i,\cdot]$ and $\Ttwo[\beta_i,\cdot]$ have been correctly filled.
Let us first show that for every nonempty set $X \subseteq \containR{\alpha}$ of size at most $\ell$,
there exists a set $S \subseteq \interR{\alpha}$ of size $\Tone[\alpha,X]$ such that $S \cap \containR{\alpha} = X$ and $S$ dominates every vertex in $\containB{\alpha}$.

Consider a nonempty set $X \subseteq \containR{\alpha}$ of size at most $\ell$.
Let $I \subseteq [p]$ be the set of indices $i \in [p]$ such that $X \cap \containR{\beta_i} \neq \emptyset$ and set $\overline{I} = [p] \setminus I$.
For every $i \in I$, further let $X_i = X \cap \containR{\beta_i}$.
For every $i \in I$,
let $m_i$ be as defined in Step~\ref{step:dsXP:tOne:first} and
let $Z_i \subseteq \containR{\beta_i} \setminus \containR{\alpha}$ be a set such that $|Z_i| + |X_i| \leq \ell$ and $m_i = \Tone[\beta_i,Z_i \cup X_i]$.

Then, since for every $i \in I$, $\Tone[\beta_i,\cdot]$ has been correctly filled,
there exists a set $S_i \subseteq \interR{\beta_i}$ of size $\Tone[\beta_i,Z_i \cup X_i]$
such that $S_i \cap \containR{\beta_i} = Z_i \cup X_i$ and $S_i$ dominates every vertex in $\interB{\beta_i}$.
Similarly, for every $i \in \overline{I}$,
let $m_i^1$ and $m_i^2$ be as defined in Step~\ref{step:dsXP:tOne:second}.
Further let $\overline{I}_1 \subseteq \overline{I}$ be the set of indices $i \in \overline{I}$ such that $\min\{m_i^1,m_i^2\} = m_i^1$
and set $\overline{I}_2 = \overline{I} \setminus \overline{I}_1$.
For every $i \in \overline{I}_1$, let $Z_i \subseteq \containR{\beta_i} \setminus \containR{\alpha}$ be a set
such that $1 \leq |Z_i| \leq \ell$ and $m_i^1 = \Tone[\beta_i,Z_i]$;
and for every $i \in \overline{I}_2$, let $Z_i \subseteq \underR{\beta_i} $ be a set of size at most $\ell$
such that $\containB{\beta_i} \setminus \containB{\alpha} \subseteq N(Z_i)$ and $m_i^2 = \Ttwo[\beta_i,Z_i]$.

Then, since for every $i \in \overline{I}_1$, $\Tone[\beta_i,\cdot]$ has been correctly filled,
there exists a set $S_i \subseteq \interR{\beta_i}$ of size $\Tone[\beta_i,Z_i]$
such that $S_i \cap \containR{\beta_i} = Z_i \cup X_i$ and $S_i$ dominates every vertex in $\interB{\beta_i}$;
similarly, since for every $i \in \overline{I}_2$, $\Ttwo[\beta_i,\cdot]$ has been correctly filled,
there exists a set $S_i \subseteq \interR{\beta_i} \setminus \containR{\beta_i}$ of size $\Ttwo[\beta_i,Z_i]$
such that $S_i$ dominates every vertex in $\underB{\beta_i}  \cup (N(Z_i) \cap \containB{\beta_i})$.

We contend that the set $M = X \cup \bigcup_{i \in [p]} S_i$ is the desired $S$.
Indeed, observe first that, by the update step, $\Tone[\alpha,X] = |X| + \sum_{i \in I} |S_i| - |X_i| + \sum_{i \in \overline{I}} |S_i| = |M|$.
Let us next show that $M \cap \containR{\alpha} = X$.
Since for every $i \in I$, $\Tone[\beta_i,\cdot]$ is correctly filled,
$S_i \cap \containR{\beta_i} = X_i \cup Z_i$ where $Z_i \subseteq \containR{\beta_i} \setminus \containR{\alpha}$ by construction;
similarly, for every $i \in \overline{I}_1$, $S_i \cap \containR{\beta_i} = Z_i$ where $Z_i \cap \containR{\alpha} = \emptyset$
since $Z_i \subseteq \underR{\beta_i} $ by definition.
Now by construction, for every $i \in \overline{I}_2$,
$S_i \cap \containR{\alpha} = \emptyset$ since $S_i \subseteq \underR{\beta_i}$;
thus, $M \cap \containR{\alpha} = \bigcup_{i \in I} X_i = X$ as claimed.

Let us finally show that $M$ dominates every vertex in $\interB{\alpha}$.
First observe that since $X \neq \emptyset$, every vertex in $\containB{\alpha}$ is dominated by $M$.
Consider therefore a vertex $x \in \underB{\alpha}$.
Then there exists $i \in [p]$ such that $x \in \interB{\beta_i}$.
If $i \in I$ then $x$ is dominated by $S_i$ by definition;
similarly, if $i \in \overline{I}_1$ then $x$ is dominated by $S_i$ by definition.
Thus, suppose that $i \in \overline{I}_2$.
Then either $x \in \underB{\beta_i} $ in which case $x$ is dominated by $S_i$ by definition;
or $x \in \containB{\beta_i}$ and since $x \notin \containB{\alpha}$ by assumption, $x \in N(Z_i)$ by construction and thus,
$x$ is dominated by $S_i$ by definition.
Therefore, $M$ dominates every vertex in $\interB{\alpha}$ and so, $M$ is indeed the desired~$S$.

Consider now a minimum red-blue dominating set $S$ of $G$ such that $S \cap \containR{\alpha} \neq \emptyset$.
Then by \Cref{clm:size}(i), $|S \cap \containR{\alpha}| \leq \ell$.
Let us show that $|S \cap \interR{\alpha}| \geq \Tone[\alpha,S \cap \containR{\alpha}]$.
Denote by $X = S \cap \containR{\alpha}$ and for every $i \in [p]$, let $S_i = S \cap \interR{\beta_i}$.
Further let $I \subseteq [p]$ be the set of indices $i \in [p]$ such that $S_i \cap \containR{\beta_i} \neq \emptyset$
and set $\overline{I} = [p] \setminus I$.
By \Cref{clm:size}(i), for every $i \in I$, $|S_i \cap \containR{\beta_i}| \leq \ell$
and since $\Tone[\beta_i,\cdot]$ has been correctly filled, $|S_i| \geq \Tone[\beta_i,S_i \cap \containR{\beta_i}]$.
Now consider $i \in \overline{I}$.
Since $S$ is dominating and $S_i \cap \containR{\beta_i} = \emptyset$,
every vertex in $\containB{\beta_i} \setminus \containB{\alpha}$
must be dominated by some vertex in $S \cap \underR{\beta_i}$:
let $S^*_i \subseteq S \cap \underR{\beta_i} $ be a set minimally dominating $\containB{\beta_i} \setminus \containB{\alpha}$.
Then by \Cref{clm:size}(ii), $|S^*_i| \leq \ell$ and
since $\Ttwo[\beta_i,\cdot]$ has been correctly filled, $|S_i| \geq \Ttwo[\beta_i,S^*_i]$.
Thus, we conclude by the update step and the above that
\begin{equation*}
\begin{split}
\Tone[\alpha,X] &\leq |X| + \sum_{i \in I} \Tone[\beta_i,S_i \cap \containR{\beta_i}] - |S_i \cap X| + \sum_{i \in \overline{I}} \Ttwo[\beta_i,S^*_i]\\
&\leq |X| + \sum_{i \in I} |S_i| - |S_i \cap X| + \sum_{i \in \overline{I}} |S_i| = |S \cap \interR{\alpha}|
\end{split}
\end{equation*}
as claimed.
Now by observing that $S \cap \interR{\alpha}$ is a minimum-sized set
dominating every vertex in $\interB{\alpha}$ and whose intersection with $\containR{\alpha}$ is $X$
($S$ would otherwise not be minimum),
we conclude by the above that $\Tone[\alpha,\cdot]$ is updated correctly.

Finally, it is not difficult to see that it takes $n^{\Oh(\ell)}$-time to update one entry of $\Tone[\alpha,\cdot]$
and since there are $n^{\Oh(\ell)}$ entries, the claim follows.
\end{claimproof}

We next show the correctness of the update procedure for the second table.

\begin{claim}
\label{clm:T2}
For every internal node $\alpha \in V(T)$, the entries of $\Ttwo[\alpha,\cdot]$ are updated correctly.
Furthermore, $\Ttwo[\alpha,\cdot]$ can be updated in $n^{\Oh(\ell)}$-time.
\end{claim}

\begin{claimproof}
Let $\alpha \in V(T)$ be an internal node of $T$
with children $\beta_1,\ldots,\beta_p$
and assume that for every $i \in [p]$, $\Tone[\beta_i,\cdot]$ and $\Ttwo[\beta_i,\cdot]$ have been correctly filled.
Let us first show that for every set $Y \subseteq \underR{\alpha}$ of size at most $\ell$,
there exists a set $S$ of size $\Ttwo[\alpha,Y]$ such that $S$ dominated every vertex in $\underB{\alpha} \cup (N(Y) \cap \containB{\alpha})$.

Consider a set $Y \subseteq \underR{\alpha}$ of size at most $\ell$.
Let $I \subseteq [p]$ be the set of indices $i \in [p]$ such that $Y \cap \interR{\beta_i} \neq \emptyset$ and set $\overline{I} = [p] \setminus I$.
For every $i \in \overline{I}$,
let $\mathcal{Y}_i$, $m_i^1$ and $m_i^2$ be as defined
in Step~\ref{step:dsXP:tTwo:twoOne}.
Further let $\overline{I}_1 \subseteq \overline{I}$ be the set of indices $i \in \overline{I}$ such that $\min\{m_i^1,m_i^2\} = m_i^1$
and set $\overline{I}_2 = \overline{I} \setminus \overline{I}_1$.
Let $N = \{N_i~|~i \in I\}$ be a partition of $N(Y) \cap \containB{\alpha}$
as considered in Step~\ref{step:dsXP:tTwo:three}
such that the final value of $\mathsf{Int}_N$ is minimum among all such final values
taken over every partition of $N(Y) \cap \containB{\alpha}$
as considered in Step~\ref{step:dsXP:tTwo:three}.
For every $i \in I$,
let $\mathcal{Y}_i^N$, $m_i^1$ and $m_i^2$ be as defined
in Step~\ref{step:dsXP:tTwo:threeTwoOne}.
Let $I_1 \subseteq I$ be the set of indices $i \in I$ such that $\min\{m_i^1,m_i^2\} = m_i^1$
and set $I_2 = I \setminus I_1$.

For every $i \in I_1 \cup \overline{I}_1$, let $Z_i \subseteq \containR{\beta_i} \setminus \containR{\alpha}$ be a nonempty set of size at most $\ell$
such that $m_i^1 = \Tone[\beta_i,Z_i]$.
Then, since for every $i\in I_1 \cup \overline{I}_1$, $\Tone[\beta_i,\cdot]$ has been correctly updated,
there exists a set $S_i \subseteq \interR{\beta_i}$ of size $\Tone[\beta_i,Z_i]$ such that
$S_i \cap \containR{\beta_i} = Z_i$ and $S_i$ dominates every vertex in $\interB{\beta_i}$.

Now for every $i \in I_2$, let $Z_i \in \mathcal{Y}_i$ be a set of size at most $\ell$ such that $m_i^2 = \Ttwo[\beta_i,Z_i]$;
similarly, for every $i \in \mathcal{Y}_i^N$, let $Z_i \in \mathcal{Y}_i^N$ be a set of size at most $\ell$ such that $m_i^2 = \Ttwo[\beta_i,Z_i]$.
Then, since for every $i \in I_2 \cup \overline{I}_2$, $\Ttwo[\beta_i,\cdot]$ has been correctly filled,
there exists a set $S_i \subseteq \underR{\beta_i}$ of size $\Ttwo[\beta_i,Z_i]$ such that
$S_i$ dominates every vertex in $\underB{\beta_i} \cup (N(Z_i) \cap \containB{\beta_i})$.

We contend that the set $M = \bigcup_{i \in I} S_i$ is the desired $S$.
Indeed, observe first that, by the update step, $\Ttwo[\alpha,Y] = \sum_{i \in \overline{I}} |S_i| + \sum_{i \in I} |S_i| = |M|$.
Now consider a vertex $x \in \underB{\alpha}  \cup (N(Y) \cap \containB{\alpha})$
and let us show that $x$ is dominated by $M$.

Suppose first that $x \notin \containB{\alpha}$.
Then there exists $i \in I$ such that $x \in \interB{\beta_i}$.
If $i \in I_1 \cup \overline{I}_1$ then $x$ is dominated by $S_i$ by definition.
Suppose therefore that $i \in I_2 \cup \overline{I}_2$.
If $x \in \underB{\beta_i}$ then $x$ is dominated by $S_i$ by definition;
otherwise, $x \in \containB{\beta_i}$ and since $x \notin \containB{\alpha}$ by assumption,
$x \in N(Z_i)$ by construction and so, $x$ is dominated by $S_i$ by definition.

Suppose second that $x \in N(Y) \cap \containB{\alpha}$.
Then there exists $i \in I$ such that $x \in N_i$.
If $i \in I_1$ then $x$ is dominated by $S_i$ by definition.
Suppose therefore that $i \in I_2$.
If $x \in \underB{\beta_i} $ then $x$ is dominated by $S_i$ by definition;
otherwise, $x \in \containB{\beta_i}$ and since $x \in N_i \subseteq N(Z_i)$ by construction,
$x$ is dominated by $S_i$ by definition.
Therefore, $M$ dominates every vertex $\underB{\alpha} \cup (N(Y) \cap \containB{\alpha})$
and so, $M$ is indeed the desired $S$.

Consider now a minimum red-blue dominating set $S$ of $G$ such that $S \cap \containR{\alpha} = \emptyset$
and let $Y \subseteq S \cap \underR{\alpha}$ be a set minimally dominating $N(S \cap \underR{\alpha}) \cap \containB{\alpha}$.
Then by \Cref{clm:size}(ii), $|Y| \leq \ell$.
Let us show that $|S \cap \underR{\alpha}| \geq \Ttwo[\alpha,Y]$.
For every $i \in [p]$, let $S_i = S \cap \interR{\beta_i}$.
Further let $I \subseteq [p]$ be the set of indices $i \in [p]$ such that $Y \cap \interR{\beta_i} \neq \emptyset$
and set $\overline{I} = [p] \setminus I$.
By construction, for every $x \in N(Y) \cap \containB{\alpha}$, there exists $i \in I$ such that $x \in N(Y \cap \interR{\beta_i})$:
let $N= \{N_i~|~i \in I\}$ be a partition of $N(Y) \cap \containB{\alpha}$ where for every $i \in I$,
$N_i \subseteq N(Y \cap \interR{\beta_i})$.

Let $I_1 \subseteq I$ be the set of indices $i \in I$ such that $S_i \cap \containR{\beta_i} \neq \emptyset$
and set $I_2 = I \setminus I_1$.
Then for every $i \in I_1$, $|S_i \cap \containR{\beta_i}| \leq \ell$ by \Cref{clm:size}(i)
and since $\Tone[\beta_i,\cdot]$ has been correctly filled, $|S_i| \geq \Tone[\beta_i,S_i \cap \containR{\beta_i}]$.
Now for every $i \in I_2$, let $Z_i \subseteq \underR{\beta_i}$ be a set minimally dominating $N_i \cup (\containB{\beta_i} \setminus \containB{\alpha})$.
Then for every $i \in I_2$, $|Z_i| \leq \ell$ by \Cref{clm:size}(ii)
(note indeed that $N_i \subseteq \containB{\beta_i}$)
and since $\Ttwo[\beta_i,\cdot]$ has been correctly filled, $|S_i| \geq \Ttwo[\beta_i,Z_i]$.

Similarly, let $\overline{I}_1 \subseteq \overline{I}$ be the set of indices $i \in \overline{I}$ such that $S_i \cap \containR{\beta_i} \neq \emptyset$
and set $\overline{I}_2 = \overline{I} \setminus \overline{I}_2$.
Then for every $i \in \overline{I}_1$, $|S_i \cap \containR{\beta_i}| \leq \ell$ by \Cref{clm:size}(i)
and since $\Tone[\beta_i,\cdot]$ has been correctly filled, $|S_i| \geq \Tone[\beta_i,S_i \cap \containR{\beta_i}]$.
Now for every $i \in \overline{I}_2$, let $Z_i \subseteq \underR{\beta_i}$ be a set minimally dominating $\containB{\beta_i} \setminus \containB{\alpha}$.
Then for every $i \in \overline{I}_2$, $|Z_i| \leq \ell$ by \Cref{clm:size}(ii)
and since $\Ttwo[\beta_i,\cdot]$ has been correctly filled, $|S_i| \geq \Ttwo[\beta_i,Z_i]$.
Thus, we conclude by the update step and the above that
\begin{equation*}
\begin{split}
\Ttwo[\alpha,Y] &\leq \sum_{i \in I_1 \cup \overline{I}_1} \Tone[\beta_i,S_i \cap \containR{\beta_i}]  + \sum_{i \in I_2 \cup \overline{I}_2} \Ttwo[\beta_i,Z_i]\\
&\leq \sum_{i \in I} |S_i| = |S \cap \underR{\alpha}|
\end{split}
\end{equation*}
as claimed.
Now by observing that $S \cap \underR{\alpha}$ is a minimum-sized set dominating
every vertex in $\underB{\alpha} \cup (N(S \cap \underR{\alpha}) \cap \containB{\alpha})$
($S$ would otherwise not be minimum),
we conclude by the above that $\Ttwo[\alpha,\cdot]$ is updated correctly.

Finally, it is not difficult to see that Step~\ref{step:dsXP:tTwo:two}
can be done in $n^{\Oh(\ell)}$-time
and that, similarly, for a fixed partition,
Steps~\ref{step:dsXP:tTwo:threeOne}--\ref{step:dsXP:tTwo:threeThree}
can be done in $n^{\Oh(\ell)}$-time.
Now observe that $|I| \leq \ell$ since $|Y| \leq \ell$
and thus, there are at most $n^{\Oh(\ell)}$ partitions to consider
in Step~\ref{step:dsXP:tTwo:three}.
\end{claimproof}
The lemma now follows from Claims~\ref{clm:T1} and \ref{clm:T2}.
\end{proof}

\subsection{MultiCut with Undeletable Terminals}
\label{subsec:multicut-H-ell-ind-sub-free}

We present a simple reduction from \textsc{Vertex Cover} to \textsc{Multicut with UnDel Term} to prove Theorem~\ref{thm:mc-para-NP-hard}.
Consider an instance $(G, q)$ of \textsc{Vertex Cover} where $G$ has $n$ vertices.
Let $G'$ be a graph obtained from a star with center $r$ and $n + 1$ leaves by subdividing each of its edge once.
Fix an injective mapping $f:V(G) \mapsto V(G')$ such that $f(v)$ is a leaf for every $v \in V(G)$.
Let $w$ be the unique leaf which is not in the range of $f$.
Then, the set of terminal pairs $\calP$ is defined as follows: $\calP= \{(f(u), f(v) \mid uv \in E(G)\} \cup \{(r, w)\}$.
It is easy to see that $(G, q)$ is a yes-instance of \textsc{Vertex Cover} if and only if $(G',\calP, q)$ has a multicut of size at most $q$.
As $G'$ is acyclic, it is $H_3$-induced free.
\section{Other domination-related problems}
\label{sec:otherdompb}

The aim of this section is to complete the proofs of \Cref{thm:ds-fpt} and \Cref{thm:dom-set-ind-sub-w-hard}.
More precisely, we show that \conRBDS\ and \SteinTree\ are \FPT\ parameterized by leafage 
and admit a $n^{\calO(\ell)}$-algorithm on $H_\ell$-induced-subgraph-free chordal graphs.
The two problems are considered in two separate subsections.

\subsection{Connected Red-Blue Dominating Set}

In this subsection, we aim to prove that {\sc Connected Dominating Set} is \FPT\ parameterized by the leafage and admits a $n^{\calO(\ell)}$-algorithm on $H_\ell$-induced-subgraph-free chordal graphs. Formally, we prove the following.

\begin{lemma}
{\sc Connected Red-Blue Dominating Set} is \FPT\ parameterized by the leafage and admits $n^{\calO(\ell)}$-algorithm on $H_\ell$-induced-subgraph-free chordal graphs.
\end{lemma}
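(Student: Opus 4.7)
The plan is to build on both the FPT algorithm for \RBDS\ from \Cref{sec:dominating-set} and the \XP\ dynamic program from \Cref{subsec:dominating-set-H-ell-ind-sub-free}, augmenting both with enough state to track connectivity. I would begin by establishing a reduction from {\sc Connected Dominating Set} to {\sc Connected \RBDS} analogous to \Cref{lemma:ds-to-rbds-leafage} (and \Cref{lemma:ds-to-rbds-H-ell-ind-sub-free} in the induced-subgraph-free setting): replacing each vertex $v$ by a red copy $v_R$ adjacent to a blue copy $v_B$, and adding all cross-edges across original edges. The red copies of a connected solution $S \subseteq V(G)$ remain connected in $G'$ because $v_R u_R$ is an edge whenever $vu$ is. The reduction preserves both leafage and the $H_\ell$-induced-subgraph-free property, so it suffices to solve {\sc Connected \RBDS}. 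The key structural fact that underlies both algorithms is the Helly property of subtrees: if $D \subseteq V(G)$ is connected in $G$, then $\bigcup_{v \in D} \mld(v)$ is a connected subtree of the representation tree $T$.

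For the FPT algorithm in the leafage $\ell$, I would apply the branching procedure of \Cref{lemma:reduce-normalized-graph} without change, producing $2^{\calO(\ell)}$ \RestRBDS-style instances; its operations (contracting monochromatic degree-$2$ paths, cutting off vertices at branching nodes) never merge or separate red vertices, so a connected solution in the original instance corresponds to a connected solution in at least one branch. In each restricted instance I then guess the set $B^\star$ of branching nodes of $T$ ``touched'' by the solution (at most $2^\ell$ choices) and which corridor between consecutive branching nodes is spanned by at least one solution vertex; the induced subtree $T^\star \subseteq T$ is exactly the union of models of a connected solution. The restricted structure guarantees that each branching node in $B^\star$ hosts only red vertices (blue branching nodes can never be touched by $D \subseteq R$). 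I would then adapt \Cref{greedy-select:red} and \Cref{greedy-select:blue} so that, when selecting a red vertex from a \SetCov/\HitSet\ subinstance at a farthest branching node, we additionally require that the chosen vertex realizes the guessed corridor-spanning; a straightforward modification of those subinstances (forcing certain elements to be covered by specific sets) keeps the step polynomial, and the outer bookkeeping of $T^\star$ is bounded by $2^{\calO(\ell)}$.

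For the $n^{\calO(\ell)}$ algorithm on $H_\ell$-induced-subgraph-free chordal graphs, I would extend the two-table bottom-up DP of the proof in \Cref{subsec:dominating-set-H-ell-ind-sub-free}. Since the paper's Claim~(size) (proved via $H_\ell$-freeness) bounds both $|S \cap \containR{\alpha}|$ and any minimal set of red vertices dominating a fixed blue neighborhood by $\ell - 1$, the relevant ``interface'' at each tree node has at most $\ell$ red vertices. I would therefore augment each DP entry by a partition $\pi$ of the interface red vertices recording their connected components in $G[S \cap \interR{\alpha}]$ so far; the number of such partitions is $\ell^{\calO(\ell)}$, preserving the $n^{\calO(\ell)}$ time. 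The combining step at an internal node merges partitions from children along with contributions that pass through a vertex whose model contains $\alpha$, which is the standard Steiner-tree-style tree DP merge. Finally, an instance is a \YES-instance iff there is an entry at the root whose partition has a single block.

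The main obstacle is the connectivity merge at internal nodes: when several children of $\alpha$ each contribute partial solutions, one must correctly account for red vertices whose models contain $\alpha$ and which glue together blocks from different children, while simultaneously handling the $\min$ over both tables ($\Tone$ and $\Ttwo$) and the partitioning of $N(Y) \cap \containB{\alpha}$ as in Step~3 of the original DP. Verifying that the enriched recursion remains sound and runs in $n^{\calO(\ell)}$ time is the most delicate part; I expect this to follow from a careful but routine case analysis analogous to Claims~\ref{clm:T1} and \ref{clm:T2}, together with the observation that the number of partitions of a set of size at most $\ell$ is $2^{\calO(\ell \log \ell)}$.
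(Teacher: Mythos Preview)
Your approach is genuinely different from the paper's, and the FPT half has a real gap.

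The paper does not track connectivity at all. Instead it gives a single reduction from \conRBDS to plain \RBDS that simultaneously preserves leafage and $H_\ell$-freeness, and then simply invokes the algorithms already proved. The trick is this: starting from a minimal tree representation $(T,\calM)$, restrict to the subtree $\overline{T}_B$ spanned by the nodes that meet some blue model (suitably pruned at the leaves), subdivide every edge of $\overline{T}_B$, and place a new blue vertex of model $\{\alpha\}$ at every node $\alpha$ of the subdivided tree. A red set $D$ dominates all these singleton blue vertices iff $\bigcup_{v\in D}\calM(v)$ covers every node and every edge of $\overline{T}_B$, which by the Helly property is exactly the condition that $D$ is connected and dominates the original blue vertices. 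So connectivity is encoded as extra domination constraints, and the existing \RBDS machinery applies verbatim.

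Your \XP\ proposal (augmenting the tables by partitions of the $\le\ell$ interface red vertices) is a standard and workable alternative; it is heavier than the paper's reduction but would succeed. The FPT proposal, however, does not go through as stated. The issue is not the branching of \Cref{lemma:reduce-normalized-graph} (that part is plausible), but the greedy selection. Lemmas \ref{lemma:red-alpha-greedy} and \ref{lemma:blue-alpha-greedy} are exchange arguments: they replace $S=\OPT\cap\interR{\alpha}$ (resp.\ $\OPT\cap\underR{\alpha}$) by a locally computed set $\OPT_q$ or $\OPT_q\cup\{r_j\}$. Nothing in those arguments guarantees that the replacement set is connected to $\OPT\setminus S$; the \SetCov/\HitSet\ subroutine returns an arbitrary minimum cover, not one whose models touch a prescribed corridor towards the rest of the solution. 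Your proposed fix (``forcing certain elements to be covered by specific sets'') constrains which blue vertices are dominated, not which red vertices are chosen or where their models reach, so it does not repair the exchange. To salvage your route you would need a connectivity-aware replacement lemma at each farthest branching node, and it is not clear one exists with the same cost; the paper sidesteps the whole issue with the reduction above.
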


To obtain these results, we reduce in both cases to \RBDS and use the algorithms from \Cref{sec:dominating-set} and \Cref{thm:dom-set-ind-sub-w-hard}, respectively. We describe below the reduction and show thereafter that both parameters are preserved. We first start with some useful terminology.\\

A tree representation $(T,\calM)$ of a graph $G$ is \emph{minimal} if for every edge $\alpha\beta \in E(T)$,
the sets $\{x \in V(G)~| \alpha \in \calM(x)\}$ and $\{x \in V(G)~|~\beta \in \calM(x)\}$ are inclusion-wise incomparable.
A tree representation can easily be made minimal by contracting each edge $\alpha\beta \in E(T)$ for which 
the sets $\{x \in V(G)~| \alpha \in \calM(x)\}$ and $\{x \in V(G)~|~\beta \in \calM(x)\}$ are inclusion-wise comparable.
Note that this operation does not increase the number of leaves of the tree representation.

\paragraph{Reduction.} Let $(G,(R_G,B_G),k)$ be an instance of \conRBDS
  and let $(T,\mld)$ be a minimal tree representation of $G$ with $\lf(G)$ leaves.
  We construct an instance $(H,(R_H,B_H),k)$ of \RBDS as follows.
  More precisely, we construct a tree representation $(T_H,\mld_H)$ for $H$ by modifying $(T,\mld)$.

  A node $\alpha \in V(T)$ is called a \emph{red node}
  if $\alpha \notin \bigcup_{x \in B} \mld(x)$,
  that is, $\alpha$ is contained only in models of red vertices.

  Let $T_B$ be the forest obtained by removing every red node in $T$
  and let $\overline{T}_B$ be the smallest connected subtree of $T$ containing $T_B$.
  We further reduce $\overline{T}_B$ according to the following procedure.
  \begin{itemize}
    \item For each leaf $\alpha$ of $\overline{T}_B$ do:
    \begin{itemize}
      \item Let $\beta \in V(\overline{T}_B)$ be the neighbor of $\alpha$.
      \item If $\{x \in B_G \mid \alpha \in \mld(x) \}
      \subseteq \{ x \in B_G \mid \beta \in \mld(x) \}$,
      then set $\overline{T}_B = \overline{T}_B/\alpha\beta$.
    \end{itemize}
  \end{itemize}
  Once the above procedure has been applied to $\overline{T}_B$,
  we subdivide each edge of $\overline{T}_B$ and let $T_H$ be the resulting tree.
  We now define the vertices and edges of $H$ as follows.
  \begin{itemize}
    \item
    For each node $\alpha \in V(T_H)$,
    we add a blue vertex $x$ to $B_H$ with model $\mld_H(x) = \{\alpha\}$.
    \item
    For each red vertex $x \in R_G$ such that $\mld(x) \cap V(\overline{T}_B) \neq \emptyset$,
    we add a red vertex $r_x$ to $R_H$ whose model $\mld_H(r_x)$ in $T_H$
    corresponds to the subdivision of $\mld(x) \cap V(\overline{T}_B)$.
  \end{itemize}
  We next show that these two instances are equivalent.

  \begin{lemma}
    If $(G, (R_G, R_B), k)$ is a \yes-instance for {\sc \conRBDS}
    then $(H, (R_H, B_H),\allowbreak k)$ is a \yes-instance for {\sc \RBDS}.
  \end{lemma}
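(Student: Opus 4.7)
Given a connected red-blue dominating set $D \subseteq R_G$ of $G$ with $|D| \leq k$, I will show that $D' = \{r_x \mid x \in D \text{ and } r_x \text{ is defined}\}$ is a red-blue dominating set of $H$ of size at most $k$. The size bound is immediate, so the main task is to dominate every vertex of $B_H$. The key structural fact I will exploit is that, since $G[D]$ is connected and two vertices of $G$ are adjacent iff their models share a node of $T$, the set $U = \bigcup_{x \in D} \mld(x)$ is a connected subtree of $T$; moreover $U \cap \mld(b) \neq \emptyset$ for every $b \in B_G$ by the domination property. Recalling that after the contraction step each node $\alpha$ of $\overline{T}_B$ represents a merged class of original tree-nodes, the task reduces to two sub-claims: (i) for every node $\alpha$ of (the contracted) $\overline{T}_B$, $U$ meets the merged class of $\alpha$, and (ii) for every edge $\alpha\beta$ of $\overline{T}_B$, corresponding to a unique original edge $\alpha_0\beta_0$ of $T$ with $\alpha_0,\beta_0$ lying in the merged classes of $\alpha,\beta$ respectively, some \emph{single} $x \in D$ satisfies $\{\alpha_0,\beta_0\} \subseteq \mld(x)$.

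For (i), I plan to handle the leaves of $\overline{T}_B$ first. The termination of the contraction procedure guarantees that any leaf $\alpha$ with neighbor $\beta$ admits some $b \in B_G$ whose model meets the merged class of $\alpha$ but misses that of $\beta$; since $\mld(b)$ is a subtree of the original $\overline{T}_B$, its image in the contracted tree is a connected subtree, and meeting $\alpha$ while avoiding the only neighbor $\beta$ of $\alpha$ in $\overline{T}_B$ forces this image to equal $\{\alpha\}$. Hence $\mld(b)$ lies entirely within the merged class of $\alpha$, and $U \cap \mld(b) \neq \emptyset$ witnesses that $U$ meets this class. For an internal node $\alpha$, its removal splits $\overline{T}_B$ into at least two components, each containing a leaf already handled; the unique $T$-path between representatives in $U$ (existing because $U$ is a subtree containing both) must then cross the merged class of $\alpha$.

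The hardest step is (ii), which genuinely demands a single model spanning both endpoints of the chosen edge. I plan a chain argument exploiting the connectedness of $G[D]$: pick leaves $\gamma_1,\gamma_2$ of $\overline{T}_B$ on opposite sides of the edge $\alpha\beta$, witnesses $x_1,x_2 \in D$ from (i), and a path $x_1 = y_0, \ldots, y_m = x_2$ in $G[D]$, so consecutive models $\mld(y_i),\mld(y_{i+1})$ share at least one node of $T$. Removing the edge $\alpha_0\beta_0$ splits $T$ into subtrees $T_1 \ni \alpha_0$ and $T_2 \ni \beta_0$, and the merged classes of $\gamma_1,\gamma_2$ lie entirely within $T_1, T_2$ respectively, so $\mld(y_0)$ meets $T_1$ while $\mld(y_m)$ meets $T_2$. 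Since $V(T_1) \cap V(T_2) = \emptyset$, two consecutive models cannot both lie strictly on opposite sides of the edge (their intersection would be empty), so at some index $i$ the model $\mld(y_i)$ must straddle $\alpha_0\beta_0$; being a subtree of $T$, straddling forces $\{\alpha_0,\beta_0\} \subseteq \mld(y_i)$, providing the required $x = y_i$ and completing the argument.
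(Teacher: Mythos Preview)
Your approach matches the paper's: set $D_H=\{r_x : x\in D,\ r_x\text{ defined}\}$ and verify that the blue vertex at each node of $T_H$ is dominated, splitting into original tree nodes of $\overline{T}_B$ versus subdivision nodes. Your chain argument for (ii) is a careful expansion of what the paper compresses into ``since $D_G$ is connected.''

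One point to watch: your ``merged class'' reading of the contraction step is not what the paper does. When the paper contracts a leaf edge $\alpha\beta$ of $\overline{T}_B$, it simply deletes the leaf; the surviving nodes of the final $\overline{T}_B$ remain individual nodes of $T$ (this is why the paper can write ``$\mld(x)\cap V(\overline{T}_B)$'' and ``$\overline{T}_B - V(T_B)$'' as ordinary set operations on $T$-nodes). Under the correct reading, your argument for (i) at a leaf $\alpha$ does not yet give $\alpha\in U$: you rightly observe that $\mld(b)\cap V(\overline{T}_B^{\text{final}})=\{\alpha\}$, but the witness node in $U\cap\mld(b)$ might lie in $\mld(b)\setminus V(\overline{T}_B^{\text{final}})$, i.e.\ on a node already contracted away. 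The fix is immediate with your own toolkit: take a second leaf $\alpha'$ with its witness $b'$; the cones $C_\alpha$ and $C_{\alpha'}$ (the components of $T$ minus the edges $\alpha\beta$ and $\alpha'\beta'$, on the leaf sides) are disjoint, $U$ meets both, and connectivity of $U$ then forces $\alpha,\alpha'\in U$. The same two-leaf trick cleanly handles the internal-node case. With that adjustment your proof is complete and in fact more explicit than the paper's, which in the leaf case asserts ``there exists $y\in D_G$ such that $\alpha\in\mld(y)$ since $D_G$ is dominating'' without spelling out why the intersection must hit $\alpha$ itself.
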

  
  \begin{proof}
  Let $D_G \subseteq R_G$ be a connected red-blue dominating set of $G$ of size at most $k$
  and let $D_H = \{r_x \in R_H \mid x\in D_G \text{ and } \mld(x) \cap V(\overline{T}_B) \neq \emptyset\}$.
  We contend that $D_H$ is a red-blue dominating set of $H$.

  Indeed, consider a blue vertex $x \in B_H$.
  By construction, there exists a node $\alpha \in V(T_H)$ such that $\mld_H(x) = \{\alpha\}$.
  If $\alpha$ corresponds to a node or an edge of $\overline{T}_B - V(T_B)$ then, since $D_G$ is connected,
  there exists $y \in D_G$ such that $\mld(y)$ contains the node or edge corresponding to $\alpha$;
  but then, $r_y \in D_H$ by construction and so, $x$ is dominated.
  We conclude similarly if $\alpha$ corresponds to an edge between $V(\overline{T}_B) \setminus V(T_B)$ and $V(T_B)$.

  Assume therefore that $\alpha$ corresponds to a node or an edge of $T_B$.
  Suppose first that $\alpha$ is a leaf of $\overline{T}_B$
  and let $\beta$ be the neighbor of $\alpha$ in $\overline{T}_B$.
  Then by construction,
  $\{ x \in B_G \mid \alpha \in \mld(x) \} \setminus
   \{ x \in B_G \mid \beta \in \mld(x) \} \neq \emptyset$
  and so, there exists $y \in D_G$ such that $\alpha \in \mld(y)$ since $D_G$ is dominating;
  but then, $r_y \in D_H$ by construction and so, $x$ is dominated.
  Suppose finally that $\alpha$ corresponds to an edge or an internal node of $T_B$.
  Since $D_G$ is dominating and connected,
  there then exists $y \in D_G$ such that $\alpha \in \mld(y)$;
  but then, $r_y \in D_H$ by construction and so, $x$ is dominated.
  Therefore, $D_H$ is a red-blue dominating set of $H$ and since $|D_H| \leq k$,
  we conclude that $(H,(R_H,B_H),k)$ is a \yes-instance for \RBDS.
  \end{proof}

  \begin{lemma}
    If $(H, (R_H, B_H), k)$ is a \yes-instance for {\sc \RBDS},
    then $(G, (R_G, \allowbreak R_B), k)$ is a \yes-instance {\sc \conRBDS}.
  \end{lemma}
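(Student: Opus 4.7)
The natural choice is $D_G = \{y \in R_G : r_y \in D_H\}$. Since the map $y \mapsto r_y$ is injective we have $|D_G| \le |D_H| \le k$, and it suffices to prove that $D_G$ is a connected red-blue dominating set of $G$. Throughout, I use $V(\overline{T}_B)$ to denote the vertex set of $\overline{T}_B$ after the leaf-pruning procedure; since each contraction simply removes a leaf from the current $\overline{T}_B$, this set is literally a subset of $V(T)$, and for any subtree $\mld(x)$ of $T$ the intersection $\mld(x) \cap V(\overline{T}_B)$ is again a subtree of $T$ (being the intersection of two subtrees of $T$).

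For domination, I would first establish the auxiliary claim that $\mld(b) \cap V(\overline{T}_B) \neq \emptyset$ for every $b \in B_G$. The argument is an induction on the pruning procedure: if a node $\alpha \in \mld(b)$ is pruned when its then-neighbor in the current tree is $\beta$, the defining inclusion $\{x \in B_G : \alpha \in \mld(x)\} \subseteq \{x \in B_G : \beta \in \mld(x)\}$ yields $\beta \in \mld(b)$, and $\beta$ survives at least one more step; since the process terminates, $\mld(b)$ must contain a surviving node. Fixing such an $\alpha \in \mld(b) \cap V(\overline{T}_B)$, the blue vertex of $B_H$ sitting at $\alpha \in V(T_H)$ must be dominated by some $r_y \in D_H$, and since $\mld_H(r_y)$ is by definition the subdivision of $\mld(y) \cap V(\overline{T}_B)$, this forces $\alpha \in \mld(y)$. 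Therefore $y \in D_G$ dominates $b$ through the shared node $\alpha$.

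For connectedness I exploit the subdivision gadget. For every edge $\alpha\beta$ of the post-contraction $\overline{T}_B$, the associated subdivision node in $T_H$ carries a blue of $B_H$, and whichever $r_y \in D_H$ dominates it must have the edge $\alpha\beta$ appearing in the subtree $\mld(y) \cap V(\overline{T}_B)$; in particular $\{\alpha,\beta\} \subseteq \mld(y)$, so the $T$-edge $\alpha\beta$ lies in the subtree $\mld(y)$. Combined with the node-level domination of each $\alpha \in V(\overline{T}_B)$, this shows that the union $U = \bigcup_{y \in D_G} \mld(y)$ contains the subtree $\overline{T}_B$ entirely as a subgraph of $T$. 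Moreover, every $y \in D_G$ satisfies $\mld(y) \cap V(\overline{T}_B) \neq \emptyset$ (this is exactly the condition that placed $r_y$ into $R_H$), so each subtree $\mld(y)$ is glued to the connected skeleton $\overline{T}_B$ inside $U$. Thus $U$ is a connected subgraph of $T$, and the standard fact that a family of subtrees of a tree has connected intersection graph iff their union is connected in the host tree yields that $G[D_G]$ is connected.

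The main obstacle is pinning down the right interpretation of the post-contraction $V(\overline{T}_B)$ as an actual subset of $V(T)$ (so that the models $\mld_H(r_x)$ and their interaction with the original models $\mld(y)$ are manipulated coherently), and establishing the inductive survival lemma for blue models via the telescoping chain of pruning conditions. Once that is in place, the subdivision nodes carry all the real work: they simultaneously force the covering of the post-contraction $\overline{T}_B$ by the models in $D_G$ and, via this covering, the connectedness of $D_G$ in $G$.
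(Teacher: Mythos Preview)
Your proposal is correct and follows essentially the same approach as the paper: define $D_G = \{y \in R_G : r_y \in D_H\}$, obtain domination of each $b \in B_G$ through a node of $\mld(b) \cap V(\overline{T}_B)$, and obtain connectedness from the fact that the subdivision blues force every edge of $\overline{T}_B$ to be covered by some model $\mld(y)$ with $y \in D_G$. The paper simply asserts ``by construction'' that every blue model meets the final $\overline{T}_B$ and argues connectedness by a one-line contrapositive (if $D_G$ were disconnected some edge of $\overline{T}_B$ would be uncovered), whereas you make both points explicit via the pruning-chain survival lemma and the covering-of-$\overline{T}_B$ argument together with the standard fact relating connectedness of the intersection graph to connectedness of the union of subtrees; this added rigor is welcome but does not constitute a different route.
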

  
  \begin{proof}
  Let $D_H \subseteq R_H$ be a red-blue dominating set of $H$ of size at most $k$
  and let $D_G = \{x \in R_G~|~r_x \in D_H\}$.
  We contend that $D_G$ is a connected red-blue dominating set of $G$.
  Indeed, consider a blue vertex $x \in B_G$.
  Then by construction, there exists a node $\alpha \in V(\overline{T}_B)$ such that $\alpha \in \mld(x)$.
  Since $D_H$ is dominating, there then exists a vertex $r_y \in D_H$
  such that the node in $T_H$ corresponding to $\alpha$ is contained in $\mld_H(r_y)$;
  but then, $x$ is dominated since $y \in D_G$ and $\alpha \in \mld(y)$ by construction.
  Now to see that $D_G$ is connected, observe that if it weren't the case, 
  there would exist an edge $\alpha\beta \in E(\overline{T}_B)$
  such that no model in $\{\mld(y) \mid y \in D_G\}$ contains the edge $\alpha\beta$;
  but then, the vertex in $T_H$ corresponding to the edge $\alpha\beta$ wouldn't be dominated by $D_H$, a contradiction.
  Therefore, $D_G$ is a connected red-blue dominating set of $G$ as claimed
  and $|D_G| \leq k$.
  \end{proof}

Finally, let us show that both parameters are preserved in the above reduction. 
First, it is not difficult to see that the leafage of $H$ is at most that of $G$ 
since the number of leaves of $T_H$ is at most the number of leaves of $T$.
Assume second that $G$ is $H_\ell$-induced-subgraph-free for some $\ell \geq 3$,
and suppose for a contradiction that $H$ contains an induced $H_\ell$.
Let $v_1,u_1,\ldots,v_\ell,u_\ell \in V(H)$ be $2\ell$ such that $H[\{v_iu_i~|~i \in [\ell]\}]$
is isomorphic to $H_\ell$ where $\{v_i~|~i \in [\ell]\}$ is a clique and for every $i \in [\ell]$, $u_iv_i \in E(H)$.
Note that since $B_H$ is an independent set of $H$ and every vertex in $B_H$ is simplicial in $H$, 
$\{v_i~|~i \in [\ell]\} \cap B_H = \emptyset$;
in particular, $\{v_i~|~i \in [\ell]\} \subseteq R_H \subseteq R_G$.
Now for every $i \in [\ell]$, let $\alpha_i$ be a node of $T_H$ defined as follows:
\begin{itemize}
\item if there is a node in $\calM_H(u_i) \cap \calM_H(v_i)$ which correspond to a node in $T$,
then let $\alpha_i$ be any such node.
\item otherwise, $\calM_H(u_i) \cap \calM_H(v_i)$ contains only one node (namely, a node corresponding to an edge of $T$), in which case we let $\alpha_i \in \calM_H(v_i)$ be the neighbor in $\calM_H(v_i)$
of the node in $\calM_H(u_i) \cap \calM_H(v_i)$.
\end{itemize}
Note that, by construction, for every $i \in [\ell]$, $\alpha_i$ corresponds to a node of $T$ 
which is, furthermore, contained in $\calM(v_i)$.
We contend that for every $i \in [\ell]$, there exists $x_i \in V(G) \setminus \{v_j~|~j \in [\ell]\}$ such that
$x_iv_i \in E(G)$ and $x_i$ is nonadjacent to $\{v_j~|~ j \in [\ell] \setminus \{i\}\}$ in $G$,
that is, $\{v_i,x_i~|~i \in [\ell]\}$ induces an $H_{\ell}$ in $G$.
If true, this would contradict the fact that $G$ is $H_\ell$-induced-subgraph-free
and thus conclude the proof.
Let $i \in [\ell]$ and consider a node $\alpha \in \bigcap_{j \in [\ell]} \calM(v_j)$
(note that since subtrees in a tree satisfy the Helly property, this intersection is nonempty).
Further let $\beta \in V(T)$ be the neighbor of $\alpha_i$ on the path in $T$ from $\alpha_i$ to $\alpha$.
Then since $T$ is minimal, 
$I = \{x\in V(G)~|~\alpha_i \in \calM(x)\} \setminus \{x \in V(G)~|~\beta \in \calM(x)\} \neq \emptyset$;
and since $\alpha_i,\beta \in \calM(v_i)$, $v_i \notin I$.
Thus, we may set $x_i = x$ where $x \in I$. 

\subsection{Steiner Tree}

The aim of this section is to prove that \SteinTree is \FPT\ parameterized by the leafage and admits an $n^{\calO(\ell)}$-algorithm on $H_\ell$-induced-subgraph-free chordal graphs. To obtain these results, we give two parameter preserving reductions to \RBDS. We first present a general reduction rule for \SteinTree instances.

\begin{reduction rule}  
\label{rr:independent}
Let $(G,\calT,k)$ be an instance of \SteinTree.
If $G[\calT]$ has a connected component $C$ of size greater than 1,
then return the instance $(G/V(C),(\calT\setminus V(C))  \cup \{v_C\},k-|V(C)|+1)$
where $v_C$ is the vertex resulting from the contraction of $C$ in $G$.
\end{reduction rule}

\begin{lemma}
\Cref{rr:independent} is safe. 
Furthermore, the leafage of $G/V(C)$ is at most that of $G$.
\end{lemma}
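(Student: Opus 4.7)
My plan is to establish the two required properties: equivalence of instances and the leafage bound.

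For equivalence, observe first that any Steiner tree $S$ for $\calT$ in $G$ must satisfy $V(C)\subseteq S$, since $V(C)\subseteq \calT \subseteq S$. Given such an $S$ with $|S|\le k$, I set $S' = (S\setminus V(C))\cup\{v_C\}$. Then $|S'| = |S| - |V(C)| + 1 \le k-|V(C)|+1$, it contains $(\calT\setminus V(C))\cup\{v_C\}$, and connectivity of $(G/V(C))[S']$ follows from connectivity of $G[S]$ since contracting a connected subset of a connected induced subgraph preserves connectivity. Conversely, starting from $S'\subseteq V(G/V(C))$ with $|S'|\le k-|V(C)|+1$ and containing the new terminal set, I lift to $S = (S'\setminus\{v_C\})\cup V(C)$; its size is at most $k$, it contains $\calT$ (since $V(C)\subseteq \calT$ and $\calT\setminus V(C)\subseteq S'$), and $G[S]$ is connected because every edge of $(G/V(C))[S']$ incident to $v_C$ lifts to an actual edge in $G$ from some vertex of $V(C)$ to the other endpoint, while $G[V(C)]$ itself is connected by the choice of $C$, so any two vertices of $S$ can be joined by a walk inside $G[S]$.

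For the leafage bound, I take a tree representation $(T,\calM)$ of $G$ with $\lf(G)$ leaves and build a tree representation of $G/V(C)$ using the \emph{same} underlying tree $T$. Define $\calM'(v_C) \coloneqq \bigcup_{v\in V(C)}\calM(v)$ and $\calM'(u)\coloneqq \calM(u)$ for every other vertex $u$. The key subclaim is that $\calM'(v_C)$ is a subtree of $T$: I will prove this by induction along a spanning tree of the (connected) graph $G[V(C)]$, using that any two adjacent vertices in $V(C)$ have intersecting models in $T$, so adding the models one by one along the spanning tree keeps the union a subtree of $T$. Adjacency is then preserved correctly: for any $u\in V(G)\setminus V(C)$, we have $u v_C\in E(G/V(C))$ iff $u$ is adjacent in $G$ to some $v\in V(C)$ iff $\calM(u)\cap\calM(v)\neq\emptyset$ for some such $v$ iff $\calM'(u)\cap\calM'(v_C)\neq\emptyset$. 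Adjacencies among the unchanged vertices are trivially preserved. Since the underlying tree is unchanged, $\lf(G/V(C))\le \lf(G)$.

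The main (but still routine) obstacle is the careful bookkeeping in the backward direction of the equivalence, specifically verifying that lifting $S'$ to $S$ yields a \emph{connected} induced subgraph in $G$: this requires the small walk-construction argument above, using connectivity of $G[V(C)]$ to traverse between the two endpoints (in $V(C)$) of any pair of edges that got identified by the contraction.
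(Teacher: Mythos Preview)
Your proof is correct. For the leafage bound you give essentially the same argument as the paper (merge the models of $V(C)$ into a single subtree of the same tree $T$), just with more detail on why the union is again a subtree.

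For safeness, however, you take a genuinely simpler route than the paper. The paper treats a solution as a Steiner \emph{tree} $S$ (an explicit edge set) and therefore needs an exchange argument: it first shows that one may assume $S[V(C)]$ is connected by picking a solution minimizing the number of components of $S[V(C)]$ and swapping an edge if there are two or more. Only then can it contract to get a tree $S/V(C)$ in the reduced instance. The backward direction in the paper similarly rebuilds an explicit tree by replacing $v_C$ with a path on $V(C)$ and reattaching neighbors.

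You sidestep all of this by working at the level of vertex sets and induced-subgraph connectivity: since $C$ is a connected component of $G[\calT]$, the set $V(C)$ is automatically connected in $G$ (hence in $G[S]$), so contracting it preserves connectivity without any exchange step; and in the backward direction the same fact lets you lift a path through $v_C$ to a walk through $V(C)$. This is cleaner and shorter; the only thing it costs you is the (immediate) observation that a connected induced subgraph on at most $k$ vertices containing $\calT$ exists iff a Steiner tree on at most $k$ vertices exists, which is trivial.
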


\begin{proof}
Suppose that such a connected component $C$ exists.
Assume first that $(G,\calT,k)$ is a \yes-instance for \SteinTree 
and let $S$ be a solution for $(G,\calT,k)$
such that the number of connected component in $S[V(C)]$ is minimum 
amongst all solutions for $(G,\calT,k)$.
We claim that $S[V(C)]$ has only one connected component.
Indeed, suppose to the contrary that $S[V(C)]$ has at least two connected components. 
Since $C$ is connected, 
there exist two connected components $C_1$ and $C_2$ of $S[V(C)]$ such that $C_1$ and $C_2$ are adjacent,
that is, there is an edge $xy \in E(G)$ where $x \in V(C_1)$ and $y \in V(C_2)$.
Let $L = z_1\ldots z_p$ be a shortest path in $S$ from $C_1$ to $C_2$.
Then the tree $S' = S - \{z_1z_2\} + \{xy\}$ is a solution for $(G,\calT,k)$
such that $S'[V(C)]$ contains fewer connected component than $S[V(C)]$, a contradiction to the choice of $S$.
Thus, $S[V(C)]$ has only one connected component
and it is easy to see that $S/V(C)$ is a solution for $(G/V(C),(\calT\setminus V(C))  \cup \{v_C\},k-|V(C)|+1)$.

Conversely, assume that $(G/V(C),(\calT\setminus V(C))  \cup \{v_C\},k-|V(C)|+1)$ is a \yes-instance for \SteinTree
and let $S$ be a solution.
By construction, for every neighbor $y$ of $v_C$ in $S$, there exists $x \in V(C)$ such that $y \in N(x)$:
for every $y \in N(v_C) \cap S$, let $x_y \in V(C)$ be an arbitrary vertex such that $y \in N(x_y)$. 
Set $V = \{x_y~|~y \in N(v_C) \cap S\}$ and for every $x \in V$, denote by $N_x = \{y\in N(v_C) \cap S~|~x_y = x\}$.
Now let $x_1,\ldots,x_p$ be an arbitrary ordering of $V$
and let $y_1,\ldots,y_q$ be an arbitrary ordering of $V(C) \setminus V$.
Then the tree obtained from $S$ by removing the vertex $v_C$ 
to replace it with the path $x_1\ldots x_py_1\ldots y_q$ 
and adding the edges $\{x_iz~|~i \in [p] \text{ and } z \in N_{x_i}\}$
is readily seen to be a solution for $(G,\calT,k)$.

Finally, let us remark that a tree representation for $G/V(C)$ can be obtained from a tree representation $(T,\calM)$ of $G$
by merging the models in $\{\calM(x)~|~x \in V(C)\}$ into a single model representing $v_C$;
in particular, the leafage of $G/V(C)$ is at most that of $G$.
\end{proof}

\begin{lemma}
\label{lem:induceHl}
Let $(G,\calT,k)$ be an instance of {\sc \SteinTree}
and let $(G_R,\calT_R,k)$ be the instance resulting from an exhaustive application of \Cref{rr:independent} to $(G,\calT,k)$.
If $G$ is $H_\ell$-induced-subgraph-free then $G_R$ is $H_{\ell+1}$-induced-subgraph-free.
\end{lemma}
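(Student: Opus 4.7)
The plan is to argue by contradiction: assume that $G_R$ contains $H_{\ell+1}$ as an induced subgraph, with clique $\{v_1,\dots,v_{\ell+1}\}$, independent set $\{u_1,\dots,u_{\ell+1}\}$ and matching edges $v_iu_i$, and then exhibit an induced $H_\ell$ in $G$. The decisive structural observation is that once \Cref{rr:independent} has been applied exhaustively, every connected component of $G_R[\calT_R]$ has size $1$, so $\calT_R$ is an independent set in $G_R$; moreover, every super-vertex produced by a contraction lies in $\calT_R$ by construction. Since $\{v_1,\dots,v_{\ell+1}\}$ is a clique of size at least $2$ in $G_R$, at most one of the $v_i$ can be a contracted super-vertex, so I rename such that only $v_1$ is potentially contracted while $v_2,\dots,v_{\ell+1}$ are vertices of $G$ themselves.

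I would then lift the clique to $G$ via the Helly property of subtrees of a tree. Letting $\phi\colon V(G)\to V(G_R)$ denote the quotient map, $G_R$ has a natural tree representation $(T,\mld_R)$ given by $\mld_R(v)=\bigcup_{x\in\phi^{-1}(v)}\mld(x)$ for every $v\in V(G_R)$; each $\mld_R(v)$ is a subtree of $T$ since $\phi^{-1}(v)$ is connected in $G$. Since $\{v_1,\dots,v_{\ell+1}\}$ form a clique in $G_R$, the subtrees $\mld_R(v_i)$ pairwise intersect, and Helly yields a common node $\alpha\in V(T)$. For each $i$, pick $\tilde v_i\in\phi^{-1}(v_i)$ with $\alpha\in\mld(\tilde v_i)$; for $i\geq 2$ this forces $\tilde v_i=v_i$. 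All the $\tilde v_i$ share $\alpha$ in their models, so $\{\tilde v_1,\dots,\tilde v_{\ell+1}\}$ is a clique in $G$.

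Next, I would lift the matching edges. For each $i\geq 2$, since $u_iv_i\in E(G_R)$ and $\phi^{-1}(v_i)=\{v_i\}$, I can pick $\tilde u_i\in\phi^{-1}(u_i)$ adjacent to $v_i$ in $G$; the required non-adjacencies $\tilde u_i\tilde v_j\notin E(G)$ and $\tilde u_i\tilde u_j\notin E(G)$ (for $j\neq i$) are automatic from the vertex-disjointness of preimages and the facts $u_iv_j,u_iu_j\notin E(G_R)$. The conclusion then comes by a short case split. If $v_1$ is not contracted, setting $\tilde v_1=v_1$ and picking $\tilde u_1$ analogously gives an induced $H_{\ell+1}$ in $G$, which contains an induced $H_\ell$. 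If $v_1$ is contracted, then $\{\tilde v_2,\dots,\tilde v_{\ell+1},\tilde u_2,\dots,\tilde u_{\ell+1}\}$ already induces $H_\ell$ in $G$. Either way we contradict the $H_\ell$-induced-subgraph-freeness of $G$.

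The main obstacle, and the reason the threshold jumps from $\ell$ to $\ell+1$, is precisely the contracted case: the vertex $\tilde v_1$ selected by Helly is guaranteed adjacent to every other $\tilde v_j$, but nothing in the Helly argument forces it to be adjacent to any specific vertex of $\phi^{-1}(u_1)$. Sacrificing the single pair $(u_1,v_1)$ is therefore unavoidable with this approach, and it is exactly this one-pair loss that dictates the bound $H_{\ell+1}$ rather than $H_\ell$.
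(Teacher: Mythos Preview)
Your proof is correct and takes essentially the same approach as the paper's: observe that the contracted super-vertices form an independent set in $G_R$, so at most one clique vertex of the hypothetical $H_{\ell+1}$ is contracted, drop that pair, and lift the remaining $\ell$ clique--matching pairs back to $G$. The only difference is that the paper dispenses with Helly and tree representations entirely---once $v_2,\dots,v_{\ell+1}$ are known to be non-contracted, their pairwise adjacency in $G$ is immediate from their adjacency in $G_R$, so your Helly step, while correct, is not needed.
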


\begin{proof}
Assume that $G[\calT]$ contains at least one connected component of size greater than 1 
(the lemma is trivial otherwise)
and let $C_1,\ldots,C_p$ be all such connected components of $G[\calT]$.
For every $i \in [p]$, denote by $v_{C_i} \in V(G_R)$ the vertex resulting from the contraction of $C_i$.
Now assume that $G$ is $H_\ell$-induced-subgraph-free
and suppose for a contradiction that $G_R$ contains an induced $H_{\ell+1}$.
Let $v_1,u_1,\ldots,v_{\ell+1},u_{\ell+1} \in V(G_R)$ be $2(\ell+1)$ vertices inducing an $H_{\ell+1}$ in $G_R$
where $\{v_i~|~i \in [\ell+1]\}$ is the clique and for every $i \in [\ell+1]$, $v_iu_i \in E(G_R)$.
Since $\{v_{C_i}~|~i \in [p]\}$ is an independent set in $G_R$,
$|\{v_i~|~i \in [\ell+1]\} \cap \{v_{C_i}~|~i \in [p]\}| \leq 1$:
let us assume without loss of generality that $\{v_i~|~i \in [\ell]\} \cap \{v_{C_i}~|~i \in [p]\} = \emptyset$.
On the other hand, if $v_{C_i} = u_{j_i}$ for some $i \in [p]$ and $j_i \in [\ell]$,
then, by construction, there exists $x_i \in V(C_i)$ such that $x_iv_{j_i} \in E(G)$:
let $I \subseteq [p]$ be the set of such indices.
Then $X = \{x_i~|~i \in I\} \cup \{u_i~|~i \in [\ell] \setminus I\}$ is an independent set in $G$ 
where each vertex in $X$ has exactly one neighbor in $K = \{v_i~|~i \in [\ell]\}$,
that is, $K \cup X$ induces an $H_\ell$ in $G$, a contradiction. 
\end{proof}

\begin{lemma}
\label{thm:steinertreefpt}
{\sc \SteinTree} parameterized by the leafage is \FPT.
\end{lemma}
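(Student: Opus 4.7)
The plan is to reduce \SteinTree\ parameterized by leafage to \RBDS\ parameterized by leafage in two stages, and then invoke the $2^{\OO(\ell)} \cdot n^{\OO(1)}$ algorithm from \Cref{sec:dominating-set}. First, I exhaustively apply \Cref{rr:independent} to $(G,\calT,k)$ to obtain an equivalent instance $(G_R,\calT_R,k_R)$ in which $\calT_R$ is an independent set in $G_R$ and the leafage is still at most $\ell$. Since any Steiner tree must contain all of $\calT_R$, the budget for non-terminal Steiner vertices is $k' \deff k_R - |\calT_R|$.

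For the second stage, I use the standard property that in a tree representation $(T,\calM)$ of a chordal graph $G_R$, a set $X \subseteq V(G_R)$ induces a connected subgraph of $G_R$ if and only if $\bigcup_{v \in X}\calM(v)$ is a (connected) subtree of $T$. Writing $T_{\calT_R} \deff \bigcup_{t \in \calT_R}\calM(t)$ and letting $\overline{T_{\calT_R}}$ be the smallest subtree of $T$ containing $T_{\calT_R}$, this characterisation yields that $(G_R,\calT_R,k_R)$ is a yes-instance if and only if there exists $D \subseteq V(G_R)\setminus\calT_R$ with $|D|\leq k'$ such that $\bigcup_{v \in D}\calM(v) \supseteq \overline{T_{\calT_R}} \setminus T_{\calT_R}$; the independence of $\calT_R$ plays no special role in this equivalence, but it guarantees that the terminals do not cost any ``extra'' in the connectivity condition.

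Based on this, I build an \RBDS\ instance $(H,(R_H,B_H),k')$ by taking, for each non-terminal $v \in V(G_R)\setminus\calT_R$ with $\calM(v)\cap\overline{T_{\calT_R}}\neq\emptyset$, a red vertex $r_v\in R_H$ with model $\calM(v)\cap\overline{T_{\calT_R}}$, and for each node $\alpha \in \overline{T_{\calT_R}} \setminus T_{\calT_R}$, a blue vertex $b_\alpha\in B_H$ with singleton model $\{\alpha\}$. Because the intersection of two subtrees of $T$ is again a subtree of $T$, $H$ is chordal with a tree representation living on $T$, so $\lf(H) \leq \ell$. By construction, a red subset $D \subseteq R_H$ dominates $B_H$ in $H$ precisely when $\bigcup_{v \in D}\calM(v)$ covers every node $\alpha \in \overline{T_{\calT_R}} \setminus T_{\calT_R}$, which combined with the previous paragraph gives the equivalence between the two instances. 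Applying the \RBDS\ algorithm from \Cref{sec:dominating-set} to $(H,(R_H,B_H),k')$ then decides the problem in $2^{\OO(\ell)}\cdot n^{\OO(1)}$.

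The main point requiring care is the backward direction of the correctness argument: given a red-blue dominating set $D$ of $H$ of size at most $k'$, one must reconstruct a Steiner tree in $G_R$. The union $\bigcup_{v \in D \cup \calT_R}\calM(v)$ contains $T_{\calT_R} \cup (\overline{T_{\calT_R}}\setminus T_{\calT_R}) = \overline{T_{\calT_R}}$, which is connected in $T$; by the chordal characterisation, $G_R[D \cup \calT_R]$ is then connected in $G_R$, and any spanning tree of it is a Steiner tree of size $|D|+|\calT_R| \leq k_R$, completing the reduction.
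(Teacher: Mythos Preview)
Your argument has a genuine gap in the backward direction. The ``standard property'' you invoke---that $G_R[X]$ is connected if and only if $\bigcup_{v\in X}\calM(v)$ is a connected subtree of $T$---is false in the ``if'' direction. Consider $T$ a path on nodes $1,2,3,4,5$, terminals $t_1,t_2$ with $\calM(t_1)=\{1\}$, $\calM(t_2)=\{5\}$, and a single non-terminal $v$ with $\calM(v)=\{2,3,4\}$. Then $\overline{T_{\calT_R}}\setminus T_{\calT_R}=\{2,3,4\}$, and $D=\{r_v\}$ dominates all your blue vertices; yet $G_R[\{t_1,t_2,v\}]$ is edgeless since no two models intersect. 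Thus your \RBDS\ instance answers \yes\ while no Steiner tree exists. Covering the \emph{nodes} of $\overline{T_{\calT_R}}\setminus T_{\calT_R}$ is not enough; connectivity in $G_R$ requires that every \emph{edge} of the relevant subtree lies inside some model from $D\cup\calT_R$.

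A correct reduction needs blue vertices witnessing edge coverage as well---e.g.\ subdivide $\overline{T_{\calT_R}}$ and place a blue vertex at every node of the subdivision (including the boundary with $T_{\calT_R}$), and let each red model be the corresponding subdivided subtree. This is essentially what the paper does, though it routes through \conRBDS\ first: it makes terminals simplicial, reduces \SteinTree\ to \conRBDS\ (red vertices are all vertices of the modified graph, blue vertices are true-twin copies of terminals), and then applies its separate \conRBDS\,$\to$\,\RBDS\ reduction, which performs exactly the edge-subdivision trick you are missing.
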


\begin{proof}
As mentioned above, we reduce to \conRBDS: 
given an instance $(G,\calT,k)$ of \SteinTree,
we construct an instance $(H,(R,B),k_H)$ of \conRBDS as follows.
First, we assume that \Cref{rr:independent} has been exhaustively applied to $(G,\calT,k)$.
This implies, in particular, that $\calT$ is an independent set of $G$. 
Let us further assume that $|\calT| > 1$ (the problem is trivial otherwise).
Now let $G^*$ be the supergraph of $G$ obtained by making each terminal simplicial, that is,
for every $t \in \calT$, the neighborhood $N_G(t)$ $(= N_{G^*}(t))$ of $t$ induces a clique in $G^*$.
Observe that the leafage of $G^*$ is at most that of $G$:
indeed, a tree representation for $G^*$ can be obtained from a tree representation $(T,\calM)$ of $G$ as follows.
For every terminal $t \in \calT$, let $\alpha_t \in V(T)$ be a node of $T$ contained the model $\calM(t)$ of $t$.
If there exists a neighbor $x \in N_{G^*}(t)$ such that $\calM(x)$ does not contain $\alpha_t$,
then we extend $\calM(x)$ by adding to it the path in $T$ from $\alpha_t$ to $\alpha_x$
where $\alpha_x \in \calM(x)$ is the closest node to $\alpha_t$ in $T$.
By iterating this process and leaving all the other models intact,
we obtain a tree representation $(T^*,\calM^*)$ for $G^*$ where $T^*$ has the same number of leaves as $T$.  

\paragraph{Reduction.}
We may now construct the graph $(H,(R,B))$:
the set $R = \{r_x~|~x \in V(G)\}$ of red vertices contains a copy of each vertex in $V(G)$
and the set $B = \{b_t~|~t \in \calT\}$ of blue vertices contains a copy of each terminal.
The graph $H[R]$ is then isomorphic to $G^*$ and for every $t \in \calT$, $b_t$ is a true twin to $r_t$.
Finally, we set $k_H = k - |\calT|$.
We next show that the instances $(G,\calT,k)$ and $(H,(R,B),k_H)$ are equivalent.

\begin{claim}
If $(G,\calT,k)$ is a \yes-instance for \SteinTree, then $(H,(R,B),k_H)$ is a \yes-instance for \conRBDS.
\end{claim}

\begin{claimproof}
Assume that $(G,\calT,k)$ is a \yes-instance for \SteinTree and let $S$ be a solution.
Note that since $|\calT| > 1$ by assumption, necessarily $V(S) \setminus \calT \neq \emptyset$.
We contend that the set $D = \{r_x~|~x \in V(S) \setminus \calT\}$ is a solution for $(H,(R,B),k_H)$.
Indeed, it is clear that for every $t \in \calT$, $b_t$ has a neighbor in $D$.
To see that $D$ is connected, observe that if a terminal $t \in \calT$ is not a leaf of $S$,
then $t$ has at least two neighbors in $S$;
but the neighborhood of $r_t$ (and $b_t$) in $R$ is clique in $H$
and so, $D$ is connected.
Since $|D| \leq k - |\calT| = k_H$, we conclude that $D$ is indeed a solution for $(H,(R,B),k_H)$.
\end{claimproof}

\begin{claim}
If $(H,(R,B),k_H)$ is a \yes-instance for \conRBDS, then $(G,\calT,k)$ is a \yes-instance for \SteinTree.
\end{claim}

\begin{claimproof}
Assume that $(H,(R,B),k_H)$ is a \yes-instance for \conRBDS and let $D$ be a minimal solution.
We contend that the set $S = \{x~|~r_x \in D\} \cup \calT$ contains a solution for $(G,\calT,k)$, that is,
for every $t,t' \in \calT$, there is a path from $t$ to $t'$ in $G[S]$.
Observe first that for every $ t\in \calT$, $r_t \notin D$:
indeed, if there exists $t \in \calT$ such that $r_t \in D$, 
then surely $r_t$ has at least one neighbor in $D$ since $|B| = |\calT| > 1$ and $B$ is an independent set in $H$;
but $N[b_t] = N[r_t]$ and $N[r_t]$ is a clique and so, $D \setminus \{r_t\}$ is still a solution for $(H,(R,B),k_H)$,
a contradiction to the minimality of $D$.
This implies, in particular, that $|S| = |D| + |\calT| \leq k_H + |\calT| = k$.
Now since $D$ is dominating and connected,
for every terminal $t,t' \in \calT$, there exists a path $P_{t,t'}$ in $H[D \cup \{b_t,b_{t'}\}]$ from $b_t$ to $b_{t'}$;
but then, it is easy to see that the set 
$\{x \in V(G)~|~r_x \in P_{t,t'}\} \cup \{t'' \in \calT~|~V(P_{t,t'}) \cap N(r_{t''}) \neq \emptyset\} \subseteq S$ 
contains a path from $t$ to $t'$.
Therefore, $S$ is a solution for $(G,\calT,k)$.
\end{claimproof}

Observe finally that a tree representation for $(H,(R,B))$ 
can be obtained from the tree representation $(T^*,\calM^*)$ of $G^*$
by adding a copy of $\calM^*(t)$ for each terminal $t \in \calT$;
in particular, the leafage of $H$ is at most that of $G^*$ which concludes the proof. 
\end{proof}

\begin{lemma}
For every $\ell \geq 3$, {\sc \SteinTree} admits a $n^{\calO(\ell)}$-algorithm on $H_\ell$-induced-subgraph-free chordal graphs.
\end{lemma}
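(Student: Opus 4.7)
The plan is to reduce \SteinTree\ to \conRBDS\ following the same template as the proof of \Cref{thm:steinertreefpt}, and then to invoke the $n^{\calO(\ell)}$-algorithm for \conRBDS\ on $H_\ell$-induced-subgraph-free chordal graphs established in the previous lemma. Given an input $(G,\calT,k)$, I would first exhaustively apply \Cref{rr:independent} to obtain an equivalent instance $(G_R,\calT_R,k_R)$; by \Cref{lem:induceHl}, $G_R$ is a chordal $H_{\ell+1}$-induced-subgraph-free graph and $\calT_R$ is independent in $G_R$. I would then construct the \conRBDS\ instance $(H,(R,B),k_H)$ exactly as in the proof of \Cref{thm:steinertreefpt}: build $G^*$ from $G_R$ by making $N_{G_R}(t)$ a clique for each $t\in\calT_R$, and then build $H$ with $H[R]\cong G^*$ together with a true blue twin $b_t$ of $r_t$ for every $t\in\calT_R$. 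The correctness of this reduction and the chordality of $H$ were already established in the proof of \Cref{thm:steinertreefpt}.

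The crux is to show that $H$ is $H_m$-induced-subgraph-free for some $m=\calO(\ell)$. I would split this into two steps. First, for any $\ell'\geq 2$, no pair of true twins $\{r_t,b_t\}$ can both lie in the vertex set of an induced $H_{\ell'}$ in $H$: having both in the independent side contradicts the fact that $r_t$ and $b_t$ are adjacent; having both in the clique side would force a matching partner to distinguish two true twins (impossible since they share all external adjacencies); the mixed case fails similarly. Hence, replacing every $b_t$ in the vertex set by its twin $r_t$ yields an induced $H_{\ell'}$ in $G^*=H[R]$. Second, I would show that if $G^*$ contains an induced $H_{\ell'}$ with clique $K$ and independent set $U$ matched via $v_iu_i$, then $G_R$ contains an induced $H_{\ell'-c}$ for a small constant $c$. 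Two key observations drive this: (i) any external terminal $t$ witnessing a missing matching edge $v_iu_i$ must satisfy $N_{G_R}(t)\cap(K\cup U)=\{v_i,u_i\}$ (else $t$ would create a non-matching edge in $G^*$), so one can swap $u_i$ with $t$ in the independent side while preserving both the independence of this side and the matching inside $G_R$; and (ii) by chordality of $G_R$ combined with the independence of $\calT_R$, any missing edge within $K$ has at most one external witness terminal (two such witnesses together with the two clique endpoints form an induced $4$-cycle in $G_R$), which constrains the missing-edge graph on $K$ to be covered by few cliques, so that removing a bounded number of $K$-vertices (together with their matching partners) recovers a clique in $G_R$. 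Since $G_R$ is $H_{\ell+1}$-induced-subgraph-free, these two steps together give $\ell'\leq\calO(\ell)$.

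Once $H$ is shown to be $H_m$-induced-subgraph-free chordal with $m=\calO(\ell)$, invoking the algorithm for \conRBDS\ on this graph class from the previous lemma on $(H,(R,B),k_H)$ runs in $n^{\calO(\ell)}$-time and yields the claimed $n^{\calO(\ell)}$-algorithm for \SteinTree. The hardest step is part~(ii) of the structural analysis above: quantitatively controlling how many $K$-vertices must be discarded to recover a clique in $G_R$ while preserving enough matching partners so that $m=f(\ell)$ remains linear in $\ell$, which requires a careful use of chordality together with the fact that the witness terminal of a missing matching edge cannot itself lie in $K\cup U$.
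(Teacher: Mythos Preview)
Your approach has a genuine gap in step~(ii), and in fact the route through $G^*$ cannot work: the graph $G^*$ (hence $H$) need not be $H_m$-induced-subgraph-free for any $m=\calO(\ell)$. Take $G_R$ to be the tree consisting of a terminal $t$ adjacent to non-terminals $v_1,\dots,v_m$, each $v_i$ carrying a further pendant $u_i$, together with a second terminal $t'$ adjacent only to $u_1$ (so that $|\calT_R|>1$). Then $G_R$ is a tree, hence chordal and $H_3$-induced-subgraph-free, and $\calT_R=\{t,t'\}$ is independent. In $G^*$ the set $N_{G_R}(t)=\{v_1,\dots,v_m\}$ becomes a clique while each $u_i$ stays adjacent only to $v_i$ inside $\{v_1,\dots,v_m,u_1,\dots,u_m\}$; thus $\{v_i,u_i:i\in[m]\}$ induces $H_m$ in $G^*$ for arbitrarily large $m$. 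Your step~(ii) breaks precisely here: the missing-edge graph on $K=\{v_1,\dots,v_m\}$ is the complete graph on $m$ vertices, covered by a \emph{single} witness clique (that of $t$), yet recovering a clique of $G_R$ inside $K$ forces you to keep at most one vertex --- you cannot remove only a ``bounded number''.

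The paper sidesteps this by using a \emph{different}, simpler reduction in the $H_\ell$-free case than in the leafage case. Instead of passing through $G^*$, it sets $H[R]\cong G_R$ directly and attaches each $b_t$ as a \emph{pendant} adjacent only to $r_t$ (not a true twin), with $k_H=k$. Then every connected red-blue dominating set must contain all $r_t$ and hence corresponds to a Steiner tree, and conversely. Since $H[R]=G_R$, the $b_t$'s are pendants, and $\{r_t:t\in\calT_R\}$ is independent, an induced $H_{\ell+2}$ in $H$ can use at most one vertex $b_t$; dropping it and its match $r_t$ leaves an induced $H_{\ell+1}$ entirely inside $H[R]=G_R$, contradicting \Cref{lem:induceHl}. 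Hence $H$ is $H_{\ell+2}$-induced-subgraph-free and the \conRBDS\ algorithm applies.
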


\begin{proof}
As mentioned above, we reduce to \conRBDS: 
given an instance $(G,\calT,k)$ of \SteinTree 
where $G$ is an $H_\ell$-induced-subgraph-free chordal graph,
we construct an instance $(H,(R,B),\allowbreak k_H)$ of \conRBDS as follows.
First, we assume that \Cref{rr:independent} has been exhaustively applied to $(G,\calT,k)$.
This implies, in particular, that $\calT$ is an independent set of $G$. 
Furthermore, by \Cref{lem:induceHl}, $G$ is $H_{\ell+1}$-induced-subgraph-free.
Now the set $R = \{r_x~|~x \in V(G)\}$ of red vertices contains a copy of each vertex in $V(G)$
and the set $B = \{b_t~|~t \in \calT\}$ of blue vertices contains a copy of each terminal.
The graph $H[R]$ is then isomorphic to $G$ and for every $t \in \calT$, $b_t$ is adjacent to only $r_t$.
Furthermore, we set $k_H = k$.
Now it is not difficult to see that these two instances are indeed equivalent:
if $S$ is a Steiner tree for $\calT$ in $G$ then $\{r_x~|~x \in V(S)\}$ is a connected red-blue dominating set of $H$;
and conversely, if $D$ is a connected red-blue dominating set then for every $t \in \calT$, $r_t \in D$
and so, $\{x~|~r_x \in D\}$ contains a Steiner tree for $\calT$.
Finally, it is easily seen that $H$ is $H_{\ell+2}$-induced-subgraph-free
since $\{r_t~|~t \in \calT\}$ is also an independent set in $H$
and for every $t \in \calT$, $N_H(b_t) = \{r_t\}$
(recall that $G$ is $H_{\ell+1}$-induced-subgraph-free after the exhaustive application of \Cref{rr:independent}),
which concludes the proof.
\end{proof}
\section{Conclusion}
In this article, we presented improved and new results regarding domination and cut problems on chordal graphs with bounded leafage.
We presented an \FPT\ algorithm running in time $2^{\calO(\ell)}\cdot \polyn$-time for the \textsc{Dominating Set} problem on chordal graphs, and used it to obtain similar results for the \textsc{Connected Dominating Set} and \textsc{Steiner Tree} problems.
Regarding cut problems, we proved that \textsc{MultiCut with Undeletable Terminals} on chordal graphs is \W[1]-\hard\ when parameterized by the leafage. 
We also presented a polynomial-time algorithm for \textsc{Multiway Cut with Undeletable Terminals} on chordal graphs.
We find it surprising that the complexity of this problem was not known before.
Finally, we examined these problems on $H_{\ell}$-induced-subgraph-free chordal graphs to check the extent of our approach. 

In the case of chordal graphs, we believe the leafage to be a more natural parameter than other popular parameters such as vertex cover, feedback vertex set or treewidth.
It would be interesting to examine the structural parameterized complexity of problems such as
{\sc Longest Cycle},
{\sc Longest Path}, 
{\sc Component Order Connectivity}, 
{\sc $s$-Club Contraction}, 
{\sc Independent Set Reconfiguration}, 
{\sc Bandwidth}, or
{\sc Cluster Vertex Deletion}.
These problems are known to be \NP-complete on split graphs and admit polynomial-time algorithms on interval graphs.
Hence it is plausible that they admit an \FPT\ or \XP\ algorithm on chordal graphs parameterized by the leafage.
We believe it is a representative list, though not exhaustive, of problems that exhibit this behavior.
In fact, it would be fascinating to find a natural problem that does not exhibit this behavior, i.e., a problem that is \NP-complete on interval graphs but admits a polynomial-time algorithm on split graphs.

\bibliographystyle{plainurl}
\bibliography{references}

\end{document}